\newtheorem{theorem}{Theorem}
\newtheorem{lemma}[theorem]{Lemma}
\newtheorem{proposition}[theorem]{Proposition}
\title{CRP-Tree: A phylogenetic association test for binary traits}
\author[1]{Julie Zhang}
\author[2]{Gabriel A. Preising}
\author[2]{Molly Schumer}
\author[1,3]{Julia A. Palacios}
\date{February 17,  2023}
\affil[1]{Department of Statistics, Stanford University}
\affil[2]{Department of Biology, Stanford University}
\affil[3]{Department of Biomedical Data Science, Stanford University}
\begin{document}

\maketitle
\begin{abstract}
An important problem in evolutionary genomics is to investigate whether a certain trait measured on each sample is associated with the sample phylogenetic tree. The phylogenetic tree represents the shared evolutionary history of the samples and it is usually estimated from molecular sequence data at a locus or from other type of genetic data. We propose a model for trait evolution inspired by the Chinese Restaurant Process that includes a parameter that controls the degree of preferential attachment, that is, the tendency of nodes in the tree to subtend from nodes of the same type. This model with no preferential attachment is equivalent to a structured coalescent model with simultaneous migration and coalescence events and serves as a null model. We derive a test for phylogenetic binary trait association with linear computational complexity and empirically demonstrate that it is more powerful than some other methods. We apply our test to study the phylogenetic association of some traits in swordtail fish, breast cancer, yellow fever virus and influenza A H1N1 virus. R package implementation of our methods is available at \texttt{https://github.com/jyzhang27/CRPTree}. 
\end{abstract}

\textbf{Keywords:} Phylogenetic comparative methods, Phylogenetic mapping, Coalescent, Chinese Restaurant Process.

\newpage 

\tableofcontents
\newpage

\section{Introduction}\label{sec:1}

Understanding the genetic basis of phenotypic traits is a fundamental goal in evolutionary biology and molecular epidemiology of infectious diseases. In particular, an important question is whether a certain observed trait is associated with the phylogenetic tree structure at a certain locus. Traits include geographic location, disease susceptibility, physical characteristics, and behavioral traits. To give a concrete example in infectious diseases, we can consider the phylogenetic tree (Figure~\ref{fig:yfw_post_figure}(b)) reconstructed from yellow fever virus (YFV)  molecular sequences obtained from infected humans (blue tips) and nonhuman primates (red tips) in South America. It is of interest to investigate whether the virus has been spreading only within each population, that is, whether human samples are more closely related to other human samples than to nonhuman primate samples. The answer to this question would provide insight about the recent outbreak of YFV in South America \citep{Faria2018}. This relation is known as \textbf{phylogenetic trait association} or \textbf{phylogenetic signal}. That is, the tendency of related organisms to share some trait characteristics more than organisms drawn at random from the same tree. Phylogenetic trait association is also sometimes assessed prior to doing a comparative analysis between traits or phenotypes. For example, one can use phylogenetically independent contrasts \citep{Felsenstein1985, Garland1992} to compare the association between two phenotypes given the phylogeny.\\

To illustrate the problem in phylogenetic trait-association, consider the following two trees in Figure~\ref{fig:trait_ex}. They have the same tree topology, but the tips have different trait values. In Figure~\ref{fig:trait_ex}(a), we see that all the nodes of the same type are in one subtree, and so clearly the trait is associated with the tree structure. In Figure~\ref{fig:trait_ex}(b), it is not so obvious whether there is a relation because the types are scattered throughout the tree topology. Later, we will illustrate that there is no phlyogenetic trait association in tree in Figure~\ref{fig:trait_ex}(b) according to our test. \\

\begin{figure}[h]
    \centering
    \includegraphics[width=0.95\textwidth]{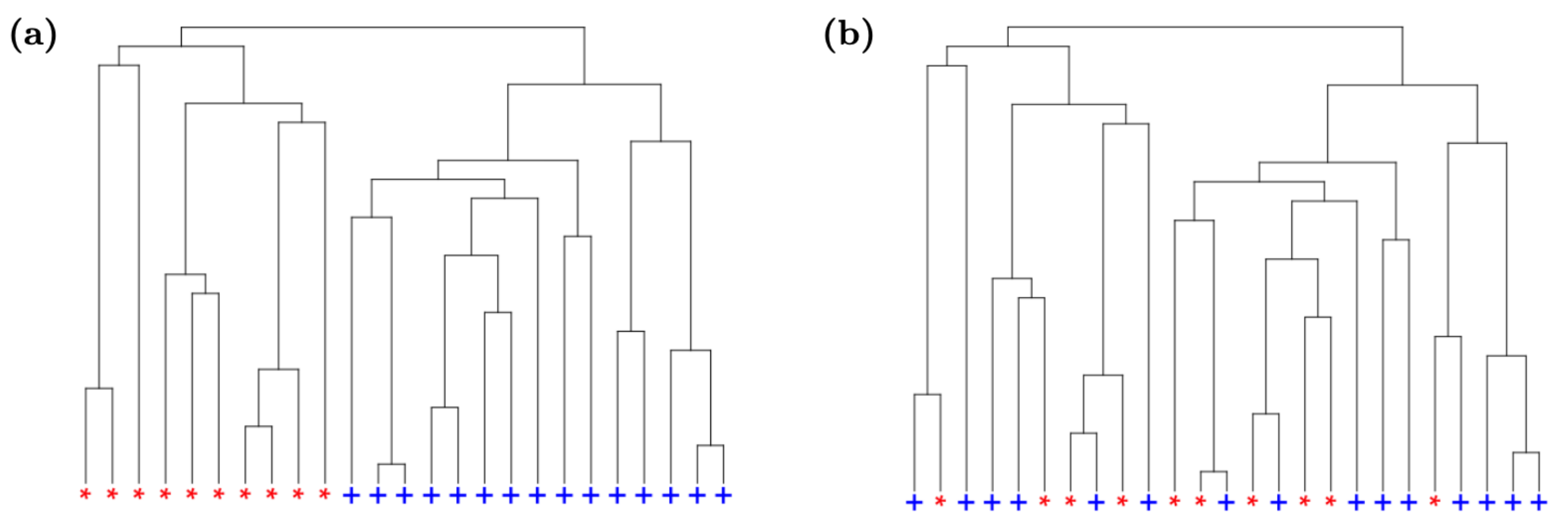}
    \caption{\textbf{Examples of phylogenetic trait associations.} Two trees with 25 tips, 10 of type 1 and 15 of type 2. \textbf{(a)} Clear association between the trait values (colors at the tips) and the tree topology. \textbf{(b)} We cannot tell visually if there is any association. Our proposed test does not detect association.}
    \label{fig:trait_ex}
\end{figure}

Many methods for discrete phylogenetic trait association have been developed in the fields of phylogenetics and molecular epidemiology. A large class of these methods consists in calculating a single summary statistic that conveys information about the trait-phylogeny association. Significance is then assessed through a permutation p-value obtained by permuting the leaf labels. The parsimony score (PS) is one of the most commonly used statistic. It counts the minimum number of trait state changes needed in the phylogeny in order to reconstruct character states at ancestral nodes \citep{Fitch1971,hartigan1973minimum,slatkin1989}. It is closely related to the Maximum Parsimony tree, which is the tree that minimizes the PS statistic. \cite{Wang2001} propose the association index (AI) statistic that measures the imbalance of the internal phylogeny nodes, and \cite{Borges2019} propose a measure based on Shannon entropy per node. Phylogenetic diversity \citep{Faith1992}, nearest taxa index and nearest relatedness index \citep{Webb2000, Webb2002}, and UniFrac \citep{Lozupone2005} are statistics that utilize branch length information and tree topology to capture trait-phylogeny association. However, a major drawback of the aforementioned class of methods is that they ignore phylogenetic uncertainty. Phylogenetic trees are usually estimated from molecular data with large uncertainty. In order to solve this problem, \cite{Parker2008} propose a method called BaTS (Bayesian Tip-association Significant testing), that incorporates phylogenetic uncertainty in a Bayesian Markov chain Monte Carlo framework. The authors use the posterior median of the association statistic to assess significance by comparing its value to the approximated null distribution of the median association statistic obtained by random permutation of the label set. \\

Another class of methods is based on change point detection along the phylogeny \citep{Ansari2016,Behr2020}. In \cite{Behr2020}, binary trait values at the tips of node $i$ are assumed to be independent Bernoulli random variables with success probability $p_{i}$. The goal is then to detect the internal nodes at which the success probability changes via likelihood ratio statistics. The authors suggest their method can be extended to categorical and continuous traits. However, these methods also assume that the phylogeny is known without uncertainty. Phylogenetic trait-association methods for continuous traits usually model trait states via tree-valued Gaussian processes. \cite{Munkemuller2012} provides an extensive review of these methods and are not considered here. \\

In this article, we develop a phylogenetic trait association test for binary traits. Our method is applicable to fixed phylogenies and to latent phylogenies within a Bayesian Markov chain Monte Carlo framework. We propose a model for trait evolution inspired by the Chinese Restaurant Process \citep{Aldous1985} that depends on a single parameter $\alpha$. The model provides a tree-generating process in which the likelihood of lineages to descend from lineages of the same type is controlled by $\alpha$. In this model, the number of same-type attachments is a sufficient statistic for $\alpha$, therefore, our test statistic uses this information. Having a general model of trait evolution allows us to empirically test the power of the test under a large family of alternatives.\\

We structure our article as follows. First, in Section~\ref{sec:2} we will introduce some terminology and definitions of the different tree topologies to be analyzed later, together with some novel and known enumerative results. We propose a coalescent model on partially labeled phylogenetic trees in Section~\ref{sec:3}. This model serves as the null model for our testing problem. In Section~\ref{sec:4}, we introduce the CRP-Tree model inspired by the Chinese Restaurant Process and state several results. We provide our test statistic and discuss how to assess significance in both a fixed tree case and a Bayesian framework in Section~\ref{sec:5}. In Sections~\ref{sec:6} and \ref{sec:7}, we analyze the performance of our test in simulated and real data applications. Finally, in Section~\ref{sec:8} we summarize our contributions and discuss future directions.

\section{Preliminaries}\label{sec:2}

We first describe the four types of phylogenetics trees considered in this manuscript and provide some enumerative properties that will be used later. All trees are rooted and binary, and we only consider their tree topology ignoring branch length information. 

\begin{enumerate}
    \item A \textbf{ranked tree shape} $T_N$ is a rooted, binary tree shape with $N$ unlabeled tips and a total ordering of the internal nodes. The number of such trees is given by the Euler zig-zag numbers $e(N)$, defined via the recurrence relation \citep{Murtagh1984}. 
    \begin{equation}\label{eq:zig-zag}
        e(N) = \frac{1}{2} \sum_{k=0}^{N-2} \binom{N-2}{k} e(k+1) e(n-k-1).
    \end{equation}
    The base cases are $e(1)=e(2)= 1$. To see this, let $T^L, T^R$ denote the left and right subtrees of $T_N$, and suppose $T^L$ has $k$ internal nodes and $k+1$ tips, while $T^R$ has $n-k-2$ internal nodes and $n-k-1$ tips where $0\leq k \leq n-2$. Then there are $\binom{n-2}{k}$ ways to arrange the internal nodes of both subtrees in order. In addition, there are a total of $e(k+1)$ possibilities for $T^L$ and $e(n-k-1)$ possibilities for $T^R$. Accounting for the symmetry of $T^L$ and $T^R$ gives the final formula Equation (\ref{eq:zig-zag}). The first elements of the sequence are $1, 1, 1, 2, 5, 16, 61$. 

    \item  A \textbf{ranked planar tree shape} $\tilde{T}_N$ is a ranked tree shape with $N$ unlabeled tips where the left and right child nodes of an internal node are distinguished. The number of such trees is the number of permutations of $\{1,2,...,N-1\}$, that is $(N-1)!$ \citep{Cleary2015}. To see this, note that the order of appearance of internal node labels from left to right defines a ranked planar tree shape. For example, if $N=4$, there are $(4-1)!=6$ ranked planar tree shapes given by the 6 orderings of $\{1,2,3\}$. Figure~\ref{fig:ranked_planar_4} shows the 6 trees and their internal node orderings. 

    \begin{figure}[h]
        \centering
        \includegraphics[width=0.9\textwidth]{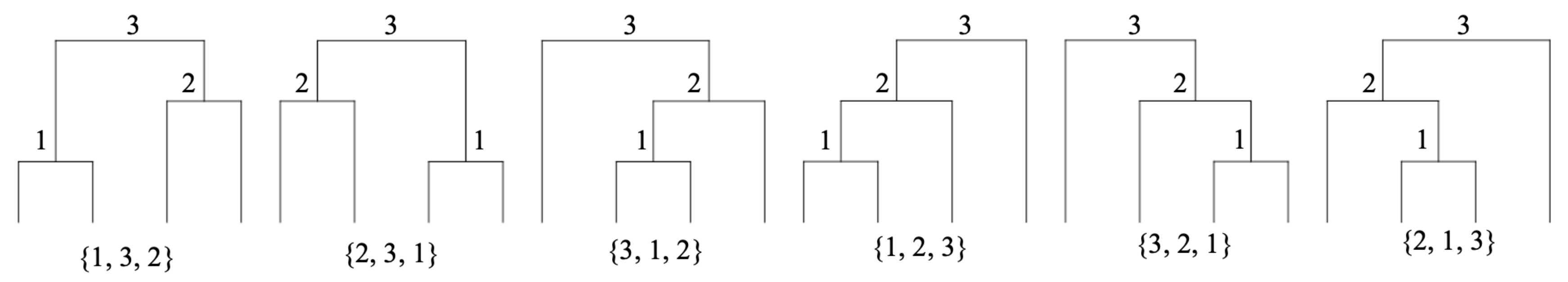}
        \caption{\textbf{Example of all 6 ranked planar tree shapes with 4 tips.} Each tree is identified by the order of its internal nodes from left to right.}
        \label{fig:ranked_planar_4}
    \end{figure}
    
    \item  A \textbf{ranked partially labeled tree} $T^{\ell}_{N,B}$ with $N$ tips is a ranked tree shape whose tip labels form a multiset. In particular, we will focus on multisets with two unique elements, which we denote by $\{$Blue, Red$\}$ for the rest of this manuscript. We can count the number of such trees $R(N,B)$ by a recursion similar to the number of ranked tree shapes. 
    \begin{equation}\label{eq:ranked-partially-labeled}
        R(N,B) = \frac{1}{2} \sum_{n=1}^{N-1} \sum_{b=0}^{\min(B,n)} R(n,b) R(N-n, B-b) \binom{N-2}{n-1}.
    \end{equation}
    The base cases are $R(1,1) = R(1,0)= 1, R(2,2) = R(2,0) = R(2,1) =1$. The derivation is parallel to that of the Euler zig-zig numbers. Let the left subtree of $T^{\ell}_{N,B}$ have $n$ tips. Then there are $\binom{N-2}{n-1}$ ways to arrange the internal nodes of both subtrees in order ($n-1$ in the left, and $N-n-1$ in the right). The term $R(n,b) R(N-n, B-b)$ counts the number of
    trees in which the left subtree of $T^{\ell}_{N,B}$ has $b$ blue tips and $n$ total tips, and the right subtree of $T^{\ell}_{N,B}$ has $B-b$ blue tips and $N-n$ total tips. Summing over all possible $n$ and accounting for the symmetry of the left and right subtrees gives the final Equation (\ref{eq:ranked-partially-labeled}). Note also that $R(N,0) = R(N,N) = e(N)$. Starting from $N=3$, the first few values of $R(N,\lfloor N/2 \rfloor)$ are $2, 7, 27, 152, 935$.  

    \item A \textbf{ranked planar partially labeled tree} $\tilde{T}^{\ell}_{N,B}$ is a ranked planar tree shape with $N$ leaves, and $B$ tips labeled blue, $N-B$ tips labeled red. There are $(N-1)! \binom{N}{B}$ such trees because there are $\binom{N}{B}$ possible labelings on every ranked planar tree. 
\end{enumerate}

We use the superscript $\ell$ in $T_{N,B}^\ell$ to indicate the partial labeling, and the tile in $\tilde{T}^{\ell}$ to indicate the tree is planar. In addition, for any tree $T$ (regardless of resolution), we will use $C(T)$ to denote the number of cherries of $T$, that is, the number of subtrees with exactly two tips. We will use $C_S(T^\ell)$ to denote the number of cherries of $T^\ell$ with the same label (regardless of resolution). Then for a given ranked tree shape $T_N$, there are $2^{N-1-C(T_N)}$ ranked planar trees $\tilde{T}_N$. This is because there are $N-1-C(T_N)$ nodes with distinct left and right subtrees that can be swapped to generate new planar trees. Similarly, for a given ranked partially labeled tree $T^{\ell}_{N,B}$, there are $2^{N-1-C_S(T_N)}$ ranked planar partially labeled trees $\tilde{T}^{\ell}_{N,B}$. Therefore, for a given ranked tree shape $T_N$, there are $2^{N-1-C(T_N)} \times \binom{N}{B}$ ranked, partially labeled, planar trees. Figure~\ref{fig:tree_resolutions3} displays the trees with $N=3, B=1$ in the four resolutions. \\

\begin{figure}[h]
    \centering
    \includegraphics[width=0.75\textwidth]{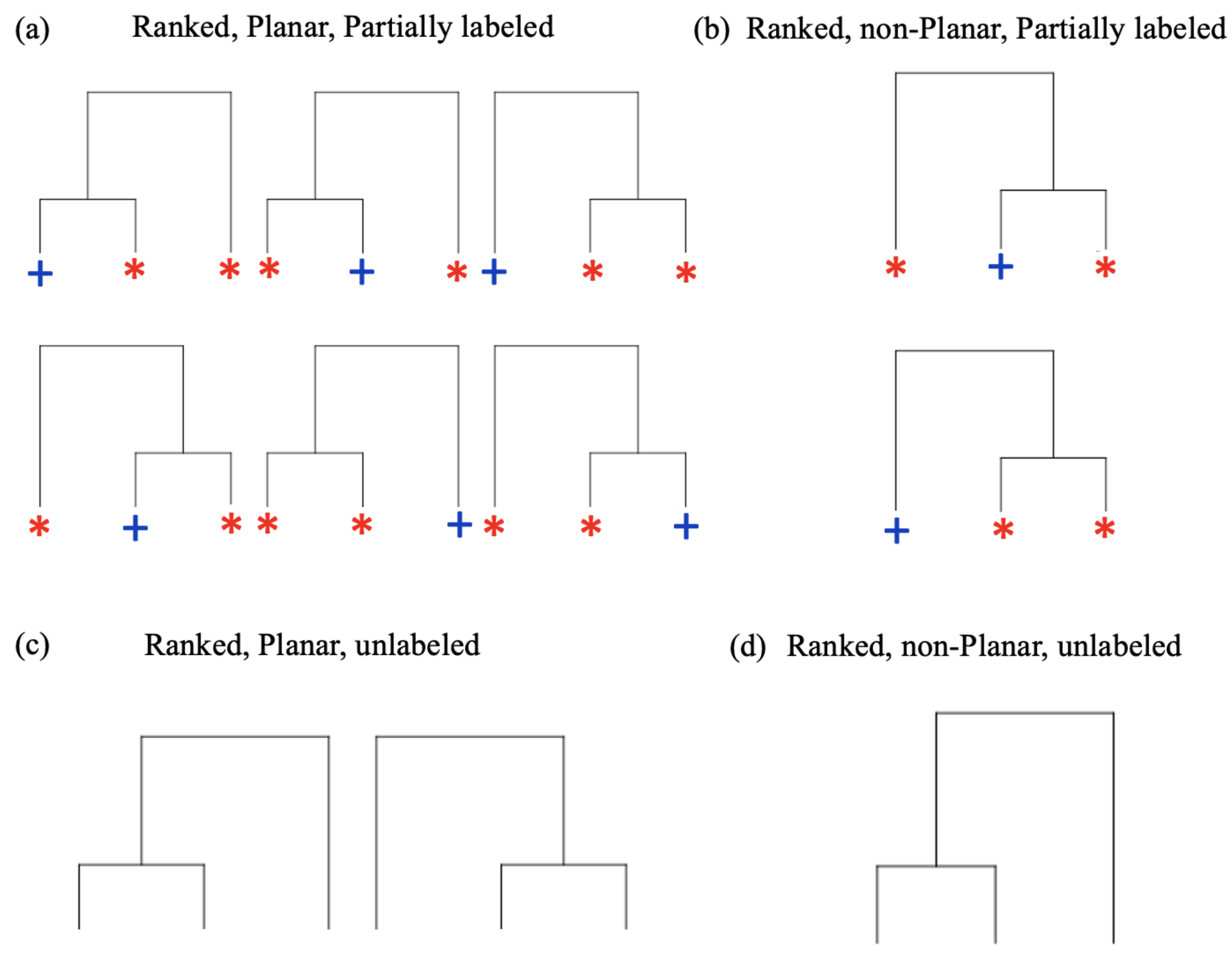}
    \caption{\textbf{Four different tree resolutions for trees with three tips}. The colors (and shapes for clarity) on the tips represent the tip labels, with $B=1$. We see that there are 6 ranked planar partially labeled trees $\tilde{T}^{\ell}_{N,B}$, 2 ranked planar tree shapes $\tilde{T}_N$, 2 ranked partially labeled tree shapes $T^{\ell}_{N_B}$, and 1 ranked tree shape $T_N$.}
    \label{fig:tree_resolutions3}
\end{figure}

A ranked partially labeled tree with $N$ tips and $B$ blue, that has a monophyletic clade with respect to at least one color (blue or red) will be called  a \textbf{perfect tree shape} and denoted by $T^{\ell, P}_{N,B}$. We say a tree is \textbf{exactly perfect} $T^{\ell,EP}_{N,B}$ if there is monophyly with respect to both colors. That is, all red tips and all blue tips are contained respectively in the two subtrees descending from the root. For example, Figure~\ref{fig:trait_ex}(a) is an exactly perfect tree shape, while a caterpillar tree with the tips of the cherry being blue and all other tips being red is a perfect tree shape. We extend these definitions to $\tilde{T}^{\ell, P}_{N,B}, \tilde{T}^{\ell, EP}_{N,B}$ if in addition, the tree is planar. Perfect trees are in a sense the most extreme partial labeling possible that separates the two types on a ranked tree shape. 

\begin{proposition}
The number of exactly perfect tree shapes with $N$ tips and $B$ blue is \[ EP(N,B) = e(B)\times e(N-B)\times \binom{N-2}{B-1}. \] 
The number of perfect tree shapes given $N$ and $B$ is defined recursively as 
\begin{align*}
     P(N,B) &= \left ( \sum_{i=0}^{N-B-1} e(N-B-i) P(B+i, B) \binom{N-2}{B+i-1} \right ) + \\
     &\qquad \left ( \sum_{i=0}^{B-1} e(B-i) P(N-B+i, N-B) \binom{N-2}{N-B+i-1} \right ) -  e(B)\times e(N-B)\times \binom{N-2}{B-1}.
\end{align*}
The base cases are $P(n,n)= e(n), P(n+1, n) = e(n)$. 
\end{proposition}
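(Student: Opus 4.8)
The plan is to treat the two formulas separately, deriving $EP(N,B)$ by a direct root-splitting count and $P(N,B)$ by inclusion--exclusion over the two colours, with a recursion supplying each one-colour term.

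For $EP(N,B)$ I would argue straight from the definition. An exactly perfect tree is exactly one whose root separates the leaves into an all-blue subtree on $B$ tips and an all-red subtree on $N-B$ tips. The blue subtree is an arbitrary ranked tree shape, giving $e(B)$ choices, and the red subtree gives $e(N-B)$; because the two subtrees carry different colours they are distinguishable, so unlike the recursion for $R(N,B)$ no factor of $\tfrac12$ is needed. It then remains to interleave the two rankings: the root is forced to be the oldest internal node, and among the remaining $N-2$ internal nodes we choose which $B-1$ positions belong to the blue side, giving $\binom{N-2}{B-1}$. The product is the claimed formula.

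For $P(N,B)$ I would write $P(N,B)=|A|+|C|-|A\cap C|$, where $A$ is the set of trees whose $B$ blue tips form a clade and $C$ the analogue for the $N-B$ red tips, so that $A\cup C$ is exactly the set of perfect trees. The correction term is the cleanest piece: if both colours are simultaneously monophyletic then their clades are disjoint and exhaust the leaves, so they must be the two root subtrees, whence $A\cap C$ is precisely the set of exactly perfect trees and $|A\cap C|=e(B)\,e(N-B)\binom{N-2}{B-1}$, matching the subtracted term. To obtain $|A|$ I would condition on the structure forced by blue monophyly: when $0<B<N$ the blue clade sits inside one root subtree, so the other root subtree is entirely red. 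Letting the all-red subtree have $N-j$ tips (so the blue-containing subtree has $j$ tips, $B\le j\le N-1$), it contributes $e(N-j)$, the rank interleaving contributes $\binom{N-2}{j-1}$, and the blue-containing subtree is itself a $j$-tip tree that is still blue-monophyletic; summing over $j$ gives the first displayed sum, and the colour-swapped argument gives $|C|$ as the second. The base cases $P(n,n)=e(n)$ and $P(n+1,n)=e(n)$ follow since an all-blue tree is monophyletic through its root clade, while with a single red tip (which can never itself be a clade) perfection is equivalent to the $n$ blue tips forming a clade, forcing the red tip to attach at the root opposite that clade.

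The step I expect to be the main obstacle is pinning down exactly which subtree count belongs inside the first sum. The blue-containing subtree must be enumerated under the \emph{same}-colour condition, namely that blue is still monophyletic within it; a subtree that happened to be monophyletic only in red would leave the whole tree monophyletic in neither colour, since its red tips would then be split off from those in the all-red sibling. Matching this same-colour count to the factor written in the statement is therefore the delicate point of the argument, and I would secure it by an induction on $N$ together with a direct check of the small cases (for instance $N\le 5$, where the enumeration can be carried out by hand) before asserting the general recursion; the boundary cases $B=1$ and $N-B=1$, in which a lone tip cannot form a clade and one of $A,C$ is empty, also need to be reconciled with the stated base cases.
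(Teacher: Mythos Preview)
Your approach is essentially the paper's own: a direct root-split product for $EP(N,B)$, and inclusion--exclusion $|A|+|C|-|A\cap C|$ together with a root-subtree decomposition for $P(N,B)$. The paper's proof proceeds in exactly this way and with the same level of detail on the $EP$ part, so at the level of strategy there is nothing to add.

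Where your write-up is sharper than the paper's is precisely at the point you call ``delicate''. To compute $|A|$ one needs the number of $(B+i)$-tip subtrees that are \emph{blue}-monophyletic, whereas $P(B+i,B)$ as defined counts subtrees that are monophyletic in \emph{at least one} colour. The paper simply writes ``there are a total of $P(B+i,B)$ such subtrees'' without addressing this gap; your observation that a subtree which is only red-monophyletic, once joined to an all-red sibling, yields a tree monophyletic in neither colour is exactly the obstruction, and it shows the identification is not automatic. Your plan to verify by induction and by explicit enumeration of small cases is therefore the right way to proceed. Carrying out that check at $N=5$, $B=2$ is already instructive: the $i=2$ term of the first sum uses $P(4,2)=4$, but only three of those four subtrees are blue-monophyletic (the caterpillar with a red cherry is perfect yet not blue-monophyletic), so the stated recursion and the intended count part ways there. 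This is the place where the argument, in both the paper and your proposal, needs the most care.
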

\begin{proof}
To prove the first statement note that there are $\binom{N-2}{B-1}$ ways to interleave the internal nodes of the two monophyletic subtrees together, each with $e(B)$ and $e(N-B)$ ranked tree shapes. \\

\noindent To find the number of perfect tree shapes, first suppose there is a blue monophyletic clade. Let the subtree containing the blue monophyletic clade have a total of $B+i$ tips, with $i=0,...,N-B-1$. There are a total of $P(B+i, B)$ such subtrees. The other subtree has $N-B-i$ red tips with $e(N-B-i)$ possible ranked tree shapes. Since there are $\binom{N-2}{B+i-1}$ ways to interleave both subtrees, we then get the first term in the summation. The second summand is obtained equivalently considering the red monphyletic clade. To get the final answer, we must subtract the number of exactly perfect trees to correct for double counting. 
\end{proof}

\noindent\textbf{Remark:} For any $T^{\ell}_{N,B}$, we would ideally like to know how ``extreme'' an observed partial labeling on a ranked tree shape $T_N$ is with respect to the uniform distribution on label assignments. Though we can define a ``most extreme'' partial labeling, there is not a clear way to define an ordering among possible partial labelings, and therein lies the difficulty of phylogenetic trait association. 

\section{A null coalescent model} \label{sec:3}
We describe a coalescent model on ranked partially labeled tree shapes with $N$ tips, such that $B$ leaves are blue and $R=N-B$ leaves are red. The model can be described as a bottom-up Markov chain in which every pair of lineages have equal probability of merging. In the tree, every internal lineage is labeled according to the order it is created. The state space can be described as $(r_t,b_t,S_t)$ that records the number of red lineages $r_t$, the number of blue lineages $b_t$, and $S_t$ denotes the set of internal lineages. The full realization $\{(r_t, b_t, S_t)\}_{t=0}^{N-1}$ uniquely encodes a ranked partially labeled tree shape. The initial state at the bottom of the tree is $(R, B,\emptyset)$ and the absorbing state at the root is $(0,0, \{N-1\})$. \\

The initial state (at the tips) has no internal lineages, only blue and red nodes. It then transitions as follows
\begin{equation} \label{eq:trans-prob-1}
(R,B, \emptyset) \rightarrow 
\begin{cases}
(R-2,B, \{ 1\}) & \text{w.p.}\;\; \frac{\binom{R}{2}}{\binom{N}{2}} \\
(R, B-2, \{1\}) & \text{w.p.} \;\;\frac{\binom{B}{2}}{\binom{N}{2}} \\
(R-1, B-1, \{1\}) & \text{w.p.}\;\; \frac{RB}{\binom{N}{2}} 
\end{cases}
\end{equation}

After $t$ steps, the state $(r_t, b_t, S_t)$ indicates the tree has $r_t+b_t+|S_t|= N-t$ extant lineages, of which $r_t$ lineages subtend red leaves, $b_t$ lineages subtend blues leaves, and $|S_t|$ subtend internal nodes. Let $k=|S_t|$ be the number of current lineages subtending internal nodes and $s_i, s_j \leq t$ denote internal nodes that are to be removed (because they will be merged). Then the $(t+1)$th transition for $t>1$ has the following transition probabilities. 

\begin{equation} \label{eq:trans-prob-2}
(r_t, b_t, S_t ) \rightarrow 
\begin{cases}
(r_t-2, b_t, S_t \cup \{t+1 \} ) & \text{w.p.} \;\; \frac{\binom{r_t}{2}}{\binom{r_t+b_t+k}{2}} \\
(r_t, b_t-2, S_t \cup \{t+1 \}) & \text{w.p.} \;\;\frac{\binom{b_t}{2}}{\binom{r_t+b_t+k}{2}} \\
(r_t-1, b_t-1, S_t \cup \{t+1\}) & \text{w.p.} \;\;\frac{r_t b_t}{\binom{r_t+b_t+k}{2}} \\
(r_t, b_t-1, S_t \cup \{t+1 \}\backslash \{s_i\} ) & \text{w.p.}\;\; \frac{b_t}{\binom{r_t+b_t+k}{2}}\\
(r_t-1, b_t, S_t \cup \{t+1 \}\backslash \{s_i\} ) & \text{w.p.}\;\; \frac{r_t}{\binom{r_t+b_t+k}{2}}\\
(r_t, b_t, S_t \cup \{t+1 \}\backslash \{s_i, s_j\} ) & \text{w.p.} \;\;\frac{1}{\binom{r_t+b_t+k}{2}} \\
\end{cases}
\end{equation}

Another way to intuitively understand this jump chain is via an urn process. We start off with $B$ blue balls and $N-B$ red balls in an urn. At the $t$-th iteration, we draw two balls without replacement and add a ball with label $t$ back into the urn. This numbered ball represents  the internal node that was created, while the two balls we removed denote the lineages that were merged. The urn process ends when there is one ball left in the urn, namely ball $N-1$. Figure~\ref{fig:null_model} pictorially demonstrates an example of a full realization of the jump chain starting with 3 blue balls and 2 red balls. 

\begin{figure}[h]
    \centering
    \includegraphics[width=0.75\textwidth]{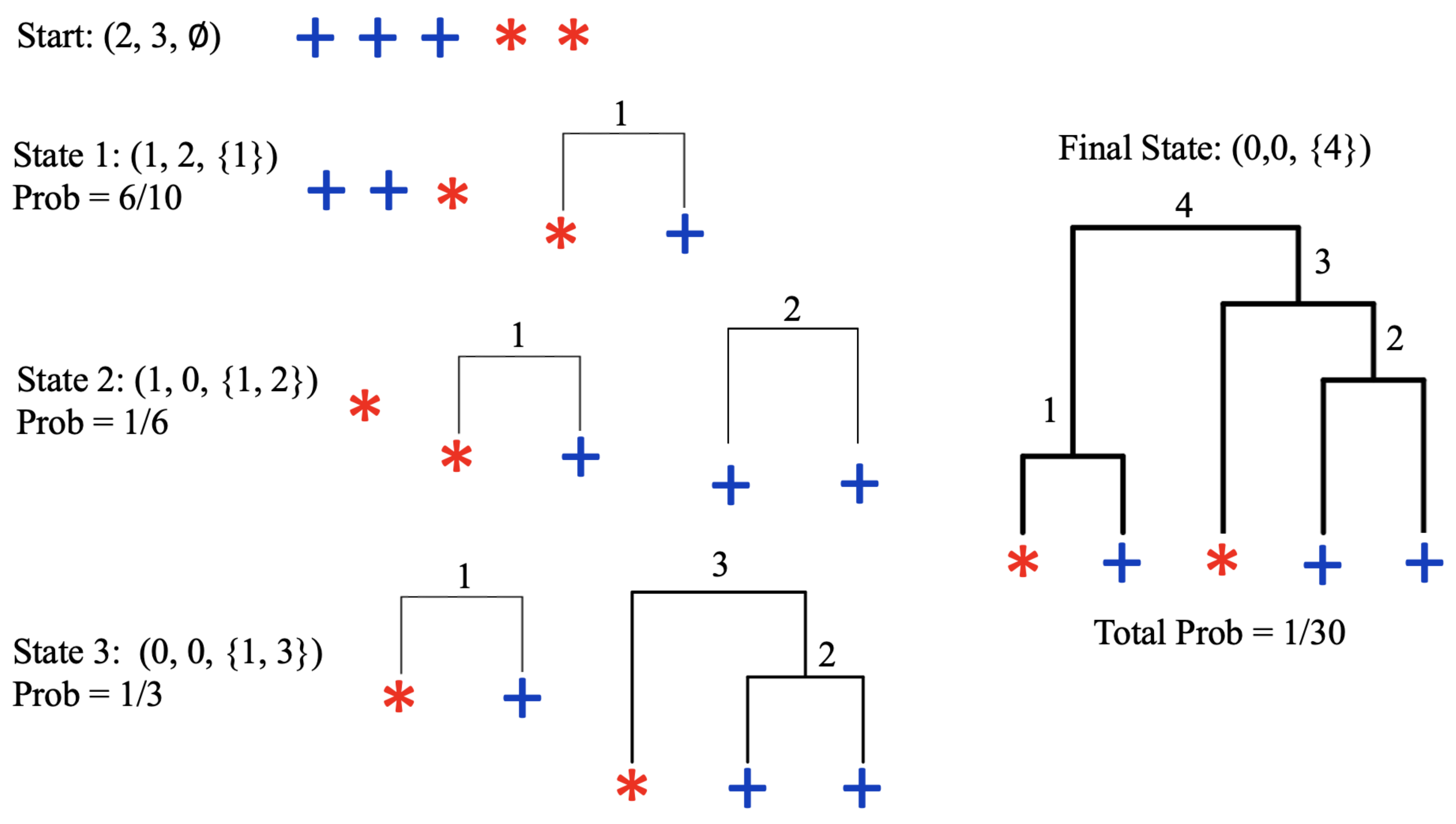}
    \caption{\textbf{One realization of the Markov jump chain starting with 2 red and 3 blue leaves}. At each step, we detail the states, the probability of the transition, and the corresponding step in the tree. The final probability of the tree matches the closed form expression from Theorem~\ref{thm:coal}: $\frac{1}{30}= \frac{2^{5-1-1}}{24\cdot 10}$.}
    \label{fig:null_model}
\end{figure} 

\begin{theorem}\label{thm:coal}
The probability of observing $T^{\ell}_{N,B}$, a ranked partially labeled tree under the null coalescent model is \[ \mathbb{P}(T^{\ell}_{N,B})= \frac{2^{N-C_S(T^{\ell}_{N,B}) -1}}{(N-1)! \binom{N}{B}}, \] where $C_S(T^{\ell}_{N,B})$ is the number of cherries of $T^{\ell}_{N,B}$ with the same label. 
\end{theorem}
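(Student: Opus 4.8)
The plan is to exploit the observation, implicit in the urn description, that a full realization of the Markov jump chain corresponds bijectively to a ranked partially labeled tree shape, so that $\mathbb{P}(T^{\ell}_{N,B})$ is simply the product of the $N-1$ transition probabilities along its unique generating path. First I would verify this bijection: reading the tree, internal node $t$ is created at step $t$ by merging its two children, and the types of those children (a red tip, a blue tip, or a specific lower-ranked internal node) determine the transition uniquely; conversely the state sequence $\{(r_t,b_t,S_t)\}$ reconstructs the children of every node. Since the red (resp. blue) tip-lineages are exchangeable and the internal lineages are individually numbered, no further labeling choice enters, so the path is unique and $\mathbb{P}(T^{\ell}_{N,B})$ equals the probability of that single path.

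Next I would factor the realization probability as $\bigl(\prod_t \mathrm{num}_t\bigr)\big/\bigl(\prod_t \binom{N-t}{2}\bigr)$, where $\mathrm{num}_t$ is the numerator of the transition taken at step $t+1$. The denominators are immediate: at the $(t+1)$-th step there are $N-t$ extant lineages, so $\prod_{t=0}^{N-2}\binom{N-t}{2}=\prod_{m=2}^{N}\binom{m}{2}=\frac{N!\,(N-1)!}{2^{N-1}}$, which also subsumes the initial transition \eqref{eq:trans-prob-1} as the $k=0$ case.

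The heart of the argument is the product of numerators. I would split each numerator into its ``red part'' and ``blue part'': a red-red merge contributes $\binom{r_t}{2}$, a red-blue or red-internal merge contributes a factor $r_t$, and merges not touching a red tip contribute nothing to the red part, with the symmetric statement for blue. Tracking the red count as it descends monotonically from $R=N-B$ to $0$, each decrement ``consumes'' the integers it passes: a drop by one from value $r$ contributes the factor $r$, while a drop by two contributes $\binom{r}{2}=\frac{r(r-1)}{2}$, i.e.\ the same two integers together with an extra factor $\tfrac12$. Since the count attains each value in $\{1,\dots,R\}$ exactly once over the whole run, the red part telescopes to $R!\big/2^{\#\{\text{red-red merges}\}}$, and symmetrically the blue part to $B!\big/2^{\#\{\text{blue-blue merges}\}}$; internal-internal merges contribute only factors of $1$.

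The final step is to identify the exponent with $C_S$. Because the tip-lineages counted by $r_t,b_t$ are exactly the pendant leaves, a red-red (resp.\ blue-blue) merge creates precisely a monochromatic cherry, and conversely every same-color cherry arises this way, since any merge involving an internal lineage subtends at least three tips. Hence $\#\{\text{red-red}\}+\#\{\text{blue-blue}\}=C_S(T^{\ell}_{N,B})$, giving $\prod_t \mathrm{num}_t=\frac{B!\,(N-B)!}{2^{C_S}}$, and combining with the denominator yields $\mathbb{P}(T^{\ell}_{N,B})=\frac{2^{N-1}}{N!(N-1)!}\cdot\frac{B!(N-B)!}{2^{C_S}}=\frac{2^{\,N-C_S-1}}{(N-1)!\binom{N}{B}}$. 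I expect the main obstacle to be the numerator bookkeeping: pinning down precisely that the red and blue counts sweep each integer exactly once (so the products telescope to $R!$ and $B!$) and that each monochromatic cherry, and nothing else, supplies exactly one factor of $\tfrac12$.
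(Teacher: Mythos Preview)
Your proposal is correct and follows essentially the same route as the paper: compute the product of denominators as $\prod_{m=2}^{N}\binom{m}{2}=N!(N-1)!/2^{N-1}$, observe that each monochromatic tip--tip merger contributes an extra factor $\tfrac12$ and corresponds exactly to a same-label cherry, and collect the remaining red/blue tip factors into $B!\,(N-B)!$. You are more explicit than the paper about the bijection between chain realizations and trees and about why the red and blue factors telescope to factorials; one small wording slip---``the count attains each value in $\{1,\dots,R\}$ exactly once'' is literally false when a drop by two skips $r-1$---but your preceding ``consumes the integers it passes'' description is exactly right and makes the argument go through.
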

\begin{proof}
First, the denominator resulting from the product of all transitions probabilities in Equations (\ref{eq:trans-prob-1}) and (\ref{eq:trans-prob-2}) is \[ \binom{N}{2} \binom{N-1}{2} \cdots \binom{3}{2} \binom{2}{2} = \frac{N! (N-1)!}{2^{N-1}}. \] 
The only transitions that invoke the $\frac{1}{2}$ factor are the coalescent events of two leaves with the same color, hence there is a term $\frac{1}{2^{C_S}}$ in the product of transition probabilities. If we merge two internal nodes, the numerator is multiplied by 1. When a red (blue) leaf is involved in a merger, the numerator of the transition probability is proportional to the number of red (blue) leaves. We then get the factor $B!(N-B)!$. Multiplying all these factors gives us the desired result.
\end{proof}

As a special case, consider all samples being of the same type. In this case, the Markov chain corresponds to the Tajima coalescent \citep{Sainudiin2015,palacios2019bayesian}, and the probability of a specific ranked tree shape is

\[ \mathbb{P}(T_N) = \frac{2^{N-C(T_N)-1}}{(N-1)!}, \] 
where $C(T_N)$ is the number of cherries of $T_{N}$. Indeed, if we sum the probabilities of all the possible partial labelings on a specific ranked tree shape, we get the same probability expression. \\

Note that we can extend this null coalescent model to more than two categories of tip labels. Suppose there are $m$ unique tip labels (i.e. $m$ colors), with $n_1,...,n_m$ number of labels of color $1,...,m$ respectively, which we denote by $[n_i]_m$ for shorthand. Then the analogous extension of Theorem~\ref{thm:coal} is
\begin{equation}\label{thm:coal_multiple}
    \mathbb{P}(T^{\ell}_{N,[n_i]_m})= \frac{2^{N-C_S(T^{\ell}_{N,[n_i]_m}) -1}}{(N-1)! \binom{N}{n_1,...,n_m}}, 
\end{equation}
where $C_S(T^{\ell}_{N,[n_i]_m})$ is the number of cherries of $T^{\ell}_{N,[n_i]_m}$ with the same label and $ \binom{N}{n_1,...,n_m}$ is the multinomial coefficient. In what follows, we will assume only binary labelings. 

\begin{proposition}\label{eq:null_model_cond_prob}
Fix a given ranked tree shape $T_N$, and $B\leq N$. The conditional probability of a specific partial labeling on $T_N$ is \[ \mathbb{P}(T^{\ell}_{N,B} \mid T_N)= \begin{cases}
0 & \text{ if } T^{\ell}_{N,B} \nprec T_{N} \\
\frac{2^{C(T^{\ell}_{N,B})- C_S(T^{\ell}_{N,B})}}{\binom{N}{B}} & \text{ if}  T^{\ell}_{N,B} \prec T_{N},\end{cases}\]
where $T^{\ell}_{N,B} \prec T_{N}$ indicates that $T_{N}$ is obtained from $T^{\ell}_{N,B}$ by removing the leaf labels in $T^{\ell}_{N,B}$.
\end{proposition}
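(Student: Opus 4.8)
The plan is to compute the conditional probability directly from its definition as a ratio of a joint probability to a marginal, exploiting the fact that a ranked partially labeled tree $T^{\ell}_{N,B}$ refines its underlying ranked tree shape. First I would write $\mathbb{P}(T^{\ell}_{N,B}\mid T_N)=\mathbb{P}(T^{\ell}_{N,B},T_N)/\mathbb{P}(T_N)$. The incompatible case is then immediate: if $T^{\ell}_{N,B}\nprec T_N$, the shape obtained by stripping the tip labels from $T^{\ell}_{N,B}$ is not $T_N$, so the two events are disjoint, the joint probability vanishes, and the conditional is $0$.

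For the compatible case $T^{\ell}_{N,B}\prec T_N$, the key observation is that the event $\{T^{\ell}_{N,B}\}$ is contained in the event $\{\text{shape}=T_N\}$, so the joint probability collapses to the marginal $\mathbb{P}(T^{\ell}_{N,B})$ supplied by Theorem~\ref{thm:coal}. I would substitute that expression in the numerator and the Tajima special-case expression $\mathbb{P}(T_N)=2^{N-C(T_N)-1}/(N-1)!$ in the denominator. This substitution is legitimate because, as noted after Theorem~\ref{thm:coal}, summing the labeled probabilities over all partial labelings of a fixed shape recovers exactly this marginal, so the two normalizations are mutually consistent.

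Taking the ratio, the factor $(N-1)!$ cancels and the powers of two combine into $2^{(N-C_S-1)-(N-C-1)}=2^{C-C_S}$, leaving $2^{C-C_S}/\binom{N}{B}$. The final step is to replace the shape's cherry count $C(T_N)$ by $C(T^{\ell}_{N,B})$: cherries are pairs of sibling leaves, a purely topological feature unaffected by removing tip labels, so $C(T_N)=C(T^{\ell}_{N,B})$ and the formula takes its stated form. I expect the only genuinely delicate point to be the bookkeeping around the normalizing constant $\mathbb{P}(T_N)$ — namely confirming that the shape marginal used to condition is the same quantity as the summed labeled distribution — rather than the algebra, which is a one-line cancellation.
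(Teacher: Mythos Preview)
Your proposal is correct and follows essentially the same approach as the paper: write the conditional as $\mathbb{P}(T^{\ell}_{N,B}\cap T_N)/\mathbb{P}(T_N)$, observe that the intersection collapses to $\mathbb{P}(T^{\ell}_{N,B})$ (or zero) depending on compatibility, substitute the formulas from Theorem~\ref{thm:coal} and the Tajima special case, and cancel using $C(T_N)=C(T^{\ell}_{N,B})$. The paper's proof is the same one-line ratio computation, stated slightly more tersely.
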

\begin{proof}
We can prove this using our previous result and noting $C(T^{\ell}_{N,B}) = C(T_N)$. 
\begin{align*}
    \mathbb{P}(T^{\ell}_{N,B}\mid T_N) = \frac{P(T^{\ell}_{N,B} \cap T_N)}{P(T_N)} &= \frac{P(T^{\ell}_{N,B})}{P(T_N)} \mathds{1} \{T^{\ell}_{N,B} \prec T_{N}\} \\
    & = \frac{2^{C(T^{\ell}_{N,B})- C_S(T^{\ell}_{N,B})}}{\binom{N}{B}} \mathds{1} \{T^{\ell}_{N,B} \prec T_{N}\} .
\end{align*}

\noindent We directly see that the probability of observing a particular labeling given a ranked tree shape according to the coalescent null model (\ref{eq:trans-prob-2}) is not uniform. Yet, we can generate labeled trees with this probability by random permutation of the leaf labels. 
\end{proof}

The model described is a lumping of the standard coalescent that models completely labeled ranked tree shapes \citep{Kingman1982}. However, this model is finer than the Tajima coalescent that models unlabeled ranked tree shapes \citep{Sainudiin2015}. The model is a modified structured coalescent model without explicit migration, but rather a simultaneous migration and coalescent event can happen in one transition \citep{Notohara1990, Mueller2017}. Currently, we are ignoring branch lengths, so this is a discrete jump process, but one can easily incorporate exponential waiting times as per the standard continuous-time Markov Chain theory. 

\section{The CRP-Tree model}\label{sec:4}
We wish to test whether an observed ranked partially labeled tree is a typical realization from the proposed null model and to evaluate the power of our test against competing hypotheses. For this reason, we now propose our alternative model based on the Chinese Restaurant Process (CRP). \\

The CRP is a discrete stochastic process used to generate a $\theta-$biased random partition of $\{1,2,...,n\}$ \citep{Aldous1985}. It is used in many Bayesian nonparametric methods, with applications in topic modeling and population genetics \citep{Griffiths2003hierarchical,Qin2006}. Imagine a restaurant with infinitely many tables and $n$ customers who are lined up outside the door in order, with customer 1 first in line. The customers enter the restaurant one at a time. The $k$th customer chooses with probability $\frac{\theta}{k-1+\theta}$ to sit at a new table, and with probability $\frac{1}{k-1+\theta}$ to sit to the left of a particular person already seated. After all the customers have been seated, each non-empty table defines a cycle and the collection of all non-empty tables defines a $\theta$-biased random partition of $\{1,...,n\}$. You would expect more small cycles when $\theta$ is large, and larger cycles when $\theta$ is small.\\

Suppose we are given $N$ samples, $B$ samples of one type (blue), and $N-B$ samples of the other type (red). The CRP-Tree model will generate a random ranked planar partially labeled tree. The parameter in our tree-generating model $\alpha\geq 1$ controls how likely are lineages to descend from a node of the same type. We will construct the tree forward in time, starting at the root.
\begin{enumerate}
    \item Randomly order the $B$ blues and $N-B$ reds into $C=(C_1,...,C_N)$, where $C_i\in\{B,R\}$ is the color label of the $i$th node to be added. 
    \item Form the vector $(w_k= \sum_{i=1}^{k-1} \mathds{1}(C_i=C_k):k=3,...,N)$. Each $w_k$ counts the number of nodes that precede node $k$ that have the same color label as node $k$.
    \item Form a binary tree with two tips, with the left, right tips labeled $C_2, C_1$ respectively.

    \item For $k=3,...,N$: Let $(U_1,...,U_{w_k})$ denote the leaves currently in the tree with same color as node $k$. Let $(V_1,...,V_{k-1-w_k})$ denote the leaves with the opposite color as node $k$. 
    
    \begin{enumerate}
        \item Generate a Bernoulli RV $Z$ with success probability \[ p = \frac{\alpha w_k}{(k-1-w_k)+\alpha w_k} \]
        \item If $Z= 1$, uniformly select leaf $U_i$ from $(U_1,...,U_{w_k})$ to become the parent node of two leaves. Assign label $C_k$ to the left leaf and the label of $U_i$ to the right leaf. 
        \item If $Z=0$, uniformly select leaf $V_i$ from $(V_1,...,V_{k-1-w_k})$ to become the parent node of two leaves. Assign label $C_k$ to the left leaf and the label of $V_i$ to the right leaf. 
    \end{enumerate}
    
    \item After all $N$ tips are added, set the branch lengths so that the length between every consecutive internal node is 1 and all tips are equal distance to the root.
\end{enumerate}

Ordering the tip labels $\{C_1,..., C_N\}$ is equivalent to selecting the sequence of the node colors being added at each step. If $\alpha=1$, then the probability of attaching to any color label is equal. If $\alpha>1$, then the probability of attaching to a node of the same color label is larger. Notice the CRP-Tree model is Markovian because at each stage the transition probabilities only depend on the previous stage.  \\

The analogy to the CRP is as follows. Suppose we have $N$ customers in line and they each have blue or red business cards with the corresponding place in line, such that $B$ customers have blue business cards. The first two customers 1 and 2 walk into the restaurant. If their business cards have the same color, they sit together at the same table, with customer 2 to the left of customer 1. Else, customers 1 and 2 sit at distinct tables. Next, customer $k$ counts $w_k$ customers who have the same color business card as them. With probability $\frac{\alpha}{k-1-w_k + \alpha w_k}$, customer $k$ chooses to sit to the left of a person with the same color business card. With probability $\frac{1}{k-1- w_k +\alpha w_k}$, customer $k$ select a person of the opposite color to ask for their business card. Then customer $k$ moves to a new table, and places the business card to their right. Each customer will sit at only one table, but can have as many business cards at other tables. At the end, each non-empty table with a customer forms an ordered list, and the collection of non-empty tables defines our tree. The order of the tables is irrelevant. Each customer is a tip in our tree, and generating a new table represents a ``mixing'' event of the two colors because two tips of opposite labels are attached together. We give a small example in Figure~\ref{fig:crp_5_tip} showing the tree and corresponding table representation. There will be a smaller number of tables for larger values of $\alpha$, which implies more attachments of the same color.

\begin{figure}[h]
    \centering
    \includegraphics[width=0.75\textwidth]{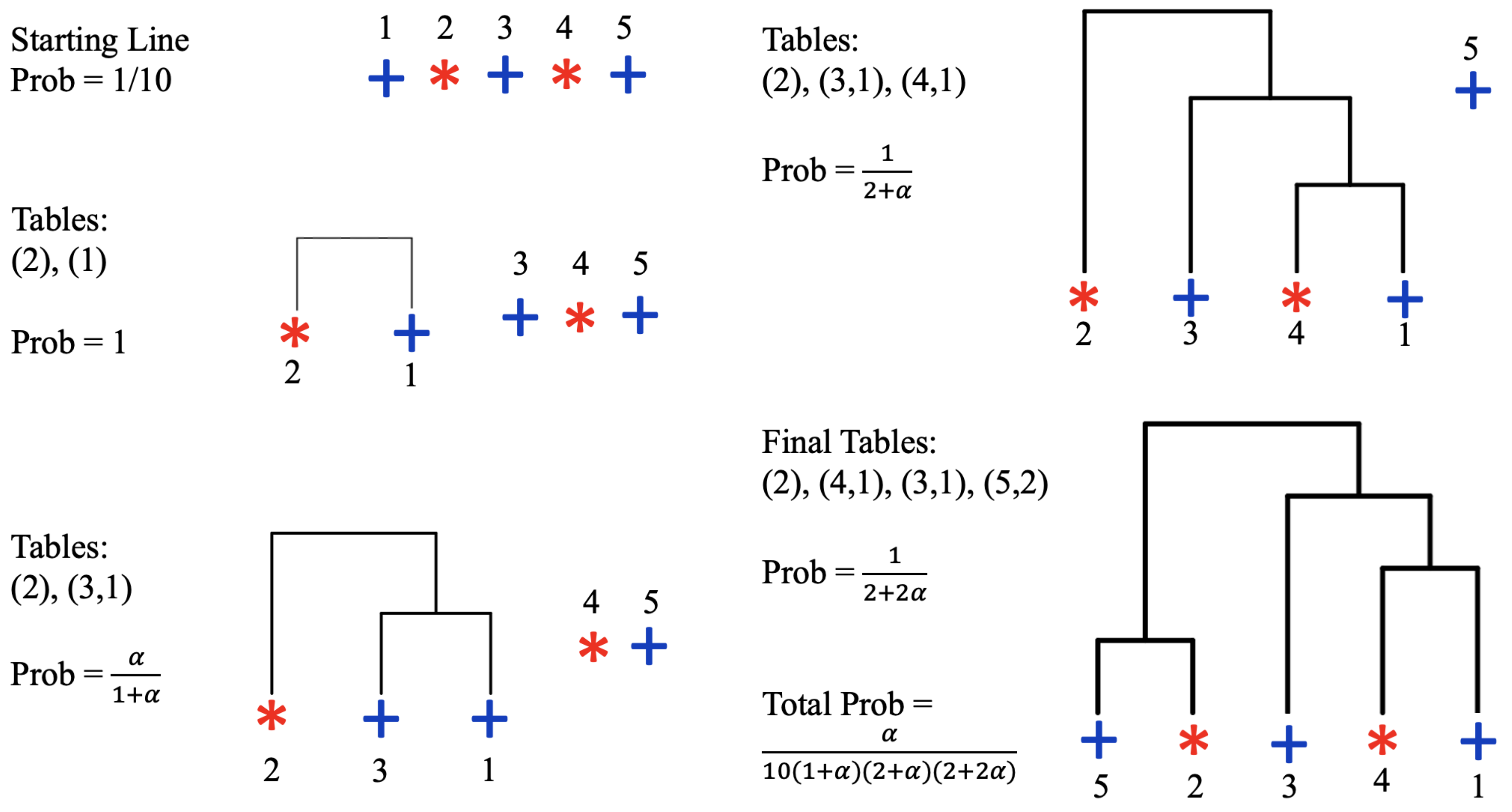}
    \caption{\textbf{An example of a tree generated using the CRP-Tree model with $N=5, B=3$.} Each time we add a node, we show the tree shape, corresponding table configuration, and probability of that attachment. We start with 2 tables because tips 1 and 2 are different colors, and end with 4. There were two same attachments in this example. The rankings at internal nodes are not shown for ease of visualization.}
    \label{fig:crp_5_tip}
\end{figure}

\subsection{CRP-Tree as a 2-urn model}
An intuitive way to understand this generative model is via a 2-urn model. Suppose we start with two urns: Urn 1 has $B$ blue balls and $N-B$ red balls and Urn 2 is empty.
\begin{enumerate}
    \item Select two balls without replacement in order from Urn 1 and place them into Urn 2. Also mark them as Ball 1 and Ball 2. This corresponds to creating a tree with two tips with tip labels $(C_2, C_1)$, colors of Ball $2,1$ respectively.
    \item For $k=3,...,N$:
    \begin{enumerate}
        \item Select 1 ball from Urn 1 and mark it Ball $k$ and note its color $C_k$.  
        \item In Urn 2, assign weight $\alpha$ to balls of the same color as Ball $K$ and weight 1 to the rest of the balls.
        \item Remove a ball from Urn 2 with probability proportional to its weight, call its number $A_k$ and return it to Urn 2.  
        \item In the tree, make $A_k$ the parent node of left leaf with label $C_k$ (the color of Ball $k$) and right leaf with label the color of Ball $A_k$.
    \end{enumerate}
\end{enumerate}

Notice that this implies $W_k$, the number of balls in the first $k-1$ that are the same color as ball $k$, does not depend on $\alpha$. Figure~\ref{fig:urn_ex} shows an example of the 2-urn process for $N=11, B=5, \alpha=2$, after 3 iterations (top panel) and after 4 iterations (bottom panel) when Ball $k=5$ is attached to Ball 2. 

\begin{figure}[h]
    \centering
    \includegraphics[width=0.7\textwidth]{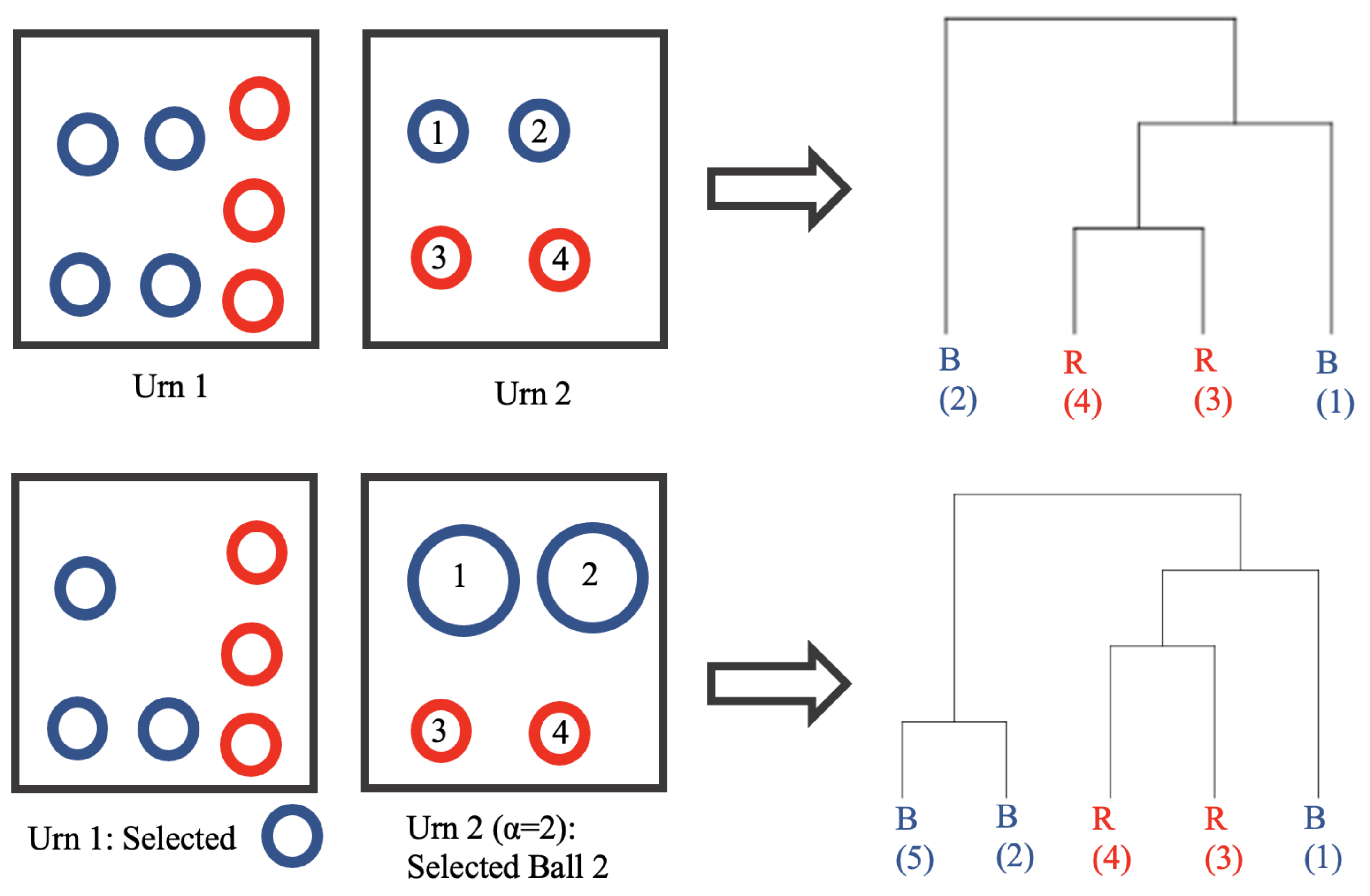}
    \caption{\textbf{An example of two possible states in the urn process corresponding to the CRP-Tree model.} In the top panel, a tree with 4 leaves has already been constructed, with 7 balls in Urn 1 and 4 labeled balls in Urn 2. In the bottom panel, we select one blue ball, increase the size of the blue balls in Urn 2, and ultimately select Ball 2 for Ball 5 to attach to. In the final tree, only the tip colors will be preserved, because the tip numbers can be recovered from the ranking of the internal nodes and the planarity. The rankings at internal nodes are not shown for ease of visualization.}
    \label{fig:urn_ex}
\end{figure}

\subsection{Properties}\label{subsec:properties}
Although the CRP-Tree is well-defined for $B=0, 1, N-1, N$, we will not investigate phylogenetic trait association in these cases. Going forward we will assume that $B \in \{2,3,...,N-2\}$. Let $X_k$ be the indicator of the event that node $k$ attaches to a node of the same color label. Let $S$ be the number of attachments of the same color, that is, $S=\sum_{k=3}^N X_k$. Then the likelihood of the ranked planar partially labeled tree $\tilde{T}^{\ell}_{N,B}$ under the CRP-Tree model is
\begin{align*}
    L(\tilde{T}^{\ell}_{N,B}) &= \frac{1}{\binom{N}{B}} \cdot \prod_{k=3}^N \frac{\alpha^{X_k}}{k-1-w_k+\alpha w_k}\\
    &= \frac{1}{\binom{N}{B}} \cdot \alpha^S \cdot \left (\prod_{k=3}^N (k-1-w_k+\alpha w_k) \right)^{-1}.
\end{align*}
The log-likelihood is \[ \ell(\tilde{T}^{\ell}_{N,B}) = -\log \left (\binom{N}{B} \right) + S \log(\alpha) - \sum_{k=3}^n \log (k-1-w_k+\alpha w_k). \]
We see that $S, \{W_k: k=3,...,N\}$ are the sufficient statistics for $\alpha$ by Fisher–Neyman factorization theorem. Using the CRP-table representation, the number of same tyoe attachments $S$ is given by \[ S = N-2 - \text{\# of new tables}, \]
where we start out with either one table $(2,1)$ in the case $C_{1}$ and $C_{2}$ have the same color, or two tables $(2), (1)$ otherwise. 

\begin{proposition}
If $\alpha=1$, then the probability of any ranked planar partially labeled tree under the CRP-Tree model is \[ \mathbb{P}(\tilde{T}^{\ell}_{N,B})= \frac{1}{(N-1)!} \cdot \frac{1}{\binom{N}{B}}. \]
\end{proposition}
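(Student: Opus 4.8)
The plan is to substitute $\alpha = 1$ directly into the likelihood formula for the CRP-Tree model derived in Section~\ref{subsec:properties}, namely
\[ L(\tilde{T}^{\ell}_{N,B}) = \frac{1}{\binom{N}{B}} \cdot \alpha^S \cdot \left(\prod_{k=3}^N (k-1-w_k+\alpha w_k)\right)^{-1}. \]
The key observation is that setting $\alpha = 1$ eliminates the dependence on the tree-specific quantities $S$ and $\{w_k\}$: first, $\alpha^S = 1^S = 1$ regardless of the number of same-type attachments $S$; second, each factor in the denominator collapses to $k - 1 - w_k + 1\cdot w_k = k-1$, which no longer involves $w_k$.

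First I would note that the product then reduces to a pure factorial, $\prod_{k=3}^N (k-1) = 2 \cdot 3 \cdots (N-1) = (N-1)!$, where the absent factor of $1$ (the would-be $k=2$ term) does not affect the value. Substituting these simplifications back yields
\[ L(\tilde{T}^{\ell}_{N,B}) = \frac{1}{\binom{N}{B}} \cdot 1 \cdot \frac{1}{(N-1)!} = \frac{1}{(N-1)!} \cdot \frac{1}{\binom{N}{B}}, \]
which is the claimed expression and, crucially, is free of any $w_k$ or $S$ term, hence identical for every ranked planar partially labeled tree.

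Since there is no genuine analytic obstacle here beyond the substitution, I would devote the remaining effort to a consistency check that doubles as interpretation: the count of ranked planar partially labeled trees established in Section~\ref{sec:2} is $(N-1)!\binom{N}{B}$, so a uniform distribution over this set assigns each tree probability exactly $\frac{1}{(N-1)!\binom{N}{B}}$. This confirms that the probabilities sum to one and shows that the proposition is really the assertion that the $\alpha=1$ CRP-Tree model coincides with the uniform distribution on ranked planar partially labeled trees. The only point requiring care is handling the index range of the product correctly: it starts at $k=3$ because the first two tips are placed deterministically up to their color order, so the factor corresponding to $k=2$ is simply absent, and being equal to $1$ it is harmless.
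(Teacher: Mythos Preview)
Your proof is correct and follows essentially the same approach as the paper: the paper first notes that the result is immediate from the likelihood when $\alpha=1$, and then offers the alternative stepwise interpretation (uniform color ordering with probability $1/\binom{N}{B}$, then uniform attachment with probability $1/(k-1)$ at each step). Your consistency check that the result amounts to the uniform distribution on the $(N-1)!\binom{N}{B}$ ranked planar partially labeled trees is also exactly the concluding remark the paper makes.
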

\begin{proof}
We can directly see this result from the likelihood when $\alpha=1$. Alternatively, the probability of any fixed initial ordering is $\frac{1}{\binom{N}{B}}$. Given the initial order, then at step $k$, we uniformly pick a branch to attach to with probability $\frac{1}{k-1}$. After all steps, we get $\frac{1}{(N-1)!}$. Notice this implies the CRP-Tree process uniformly generates ranked planar partially labeled tree shapes under $\alpha =1$ (see Section~\ref{sec:2}).
\end{proof}

\begin{proposition} \label{eq:prob_crp_alpha_1}
If $\alpha=1$, then the probability of a ranked (non-planar) partially labeled tree under the CRP-Tree model is \[ \mathbb{P}(T^{\ell}_{N,B}) = \frac{2^{N-C_S(T^{\ell}_{N,B})-1}}{\binom{N}{B} (N-1)!}. \] 
\end{proposition}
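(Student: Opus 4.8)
The plan is to obtain the non-planar probability by summing the planar probabilities over all planar refinements of a fixed ranked partially labeled tree $T^{\ell}_{N,B}$. The two ingredients I need are already in hand: the preceding proposition shows that under $\alpha=1$ every ranked planar partially labeled tree $\tilde{T}^{\ell}_{N,B}$ carries the uniform probability $\frac{1}{(N-1)!\binom{N}{B}}$, and Section~\ref{sec:2} records that a given non-planar $T^{\ell}_{N,B}$ is the image of exactly $2^{N-1-C_S(T^{\ell}_{N,B})}$ distinct ranked planar partially labeled trees under the map that forgets the left/right ordering of children. The whole argument is then a single multiplication.

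First I would observe that the CRP-Tree model is a distribution on ranked \emph{planar} partially labeled trees, and that two planar trees project to the same non-planar tree $T^{\ell}_{N,B}$ precisely when they differ by swapping children at some subset of internal nodes. These planar preimages are disjoint events, so $\mathbb{P}(T^{\ell}_{N,B})$ equals the sum of $\mathbb{P}(\tilde{T}^{\ell}_{N,B})$ over all planar $\tilde{T}^{\ell}_{N,B}$ refining it. Invoking the uniformity from the previous proposition, at $\alpha=1$ each summand equals the common value $\frac{1}{(N-1)!\binom{N}{B}}$ regardless of the tree, so the sum reduces to the number of planar refinements times this constant. Substituting the count $2^{N-1-C_S(T^{\ell}_{N,B})}$ from Section~\ref{sec:2} gives
\[
\mathbb{P}(T^{\ell}_{N,B}) = 2^{N-1-C_S(T^{\ell}_{N,B})}\cdot \frac{1}{(N-1)!\binom{N}{B}} = \frac{2^{N-C_S(T^{\ell}_{N,B})-1}}{\binom{N}{B}(N-1)!},
\]
as claimed.

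The only point needing care — and the nearest thing to an obstacle — is justifying the refinement count, namely that the internal nodes whose children cannot be swapped to yield a genuinely new planar tree are exactly the same-colored cherries. Because the ranking assigns distinct labels to internal nodes, any internal node subtending at least one internal node produces a distinct planar tree upon swapping, so only the same-colored cherries (two identical leaf children) fail to contribute a factor of $2$; this leaves $N-1-C_S(T^{\ell}_{N,B})$ swappable nodes and hence the exponent above. Since this counting is already established in Section~\ref{sec:2}, and the uniform planar probability is the content of the preceding proposition, the proof is immediate and requires no further computation.
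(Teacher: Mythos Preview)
Your proposal is correct and follows essentially the same route as the paper: the paper's proof simply notes that there are $2^{N-1-C_S(T^{\ell}_{N,B})}$ planar representations of $T^{\ell}_{N,B}$ and multiplies this count by the uniform planar probability from the preceding proposition. Your write-up is slightly more explicit about why only same-colored cherries fail to yield a new planar tree upon swapping, but the argument is the same.
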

\begin{proof}
There are a total of $2^{N- C_S(T^{\ell}_{N,B}) -1}$ ways to interchanging the left and right subtrees of an internal node without changing the ranked partially labeled tree shape. Combining with Proposition 4 gives the result.
\end{proof}

An important fact implied by Proposition 5 is that the probability of observing $T^{\ell}_{N,B}$ under the CRP-Tree model with $\alpha=1$, is equal to the probability of observing the same tree under the null coalescent model of Section 3. As $\alpha$ increases, the model will generate trees with more same-type attachments. In fact, we will show that as $\alpha$ goes to infinity, the probability of observing a perfect tree (trees with one color completely contained in a monophyletic clade) goes to one. This property will be made formal in the following results. 

\begin{lemma} \label{thm:lemma6}
$\tilde{T}^{\ell, EP}_{N,B}$ is an exact perfect planar tree if and only if $S=N-2$. $\tilde{T}^{\ell, P}_{N,B}$ is perfect but not exactly perfect if and only if $S=N-3$. In addition, under the CRP-Tree model,
\begin{align*}
    \lim_{\alpha\to\infty} \mathbb{P}(\tilde{T}^{\ell, EP}_{N,B}) &= \frac{1}{\binom{N}{B}} \times \frac{1}{(B-1)!(N-B-1)!}, \\
    \lim_{\alpha\to\infty} \mathbb{P}(\tilde{T}^{\ell, P}_{N,B}) &= \frac{1}{(r-1) \binom{N}{B}} \times \frac{1}{(B-1)!(N-B-1)!},
\end{align*}
where $w_r=0$ for a unique $r \in \{3,...,N\}$.
\end{lemma}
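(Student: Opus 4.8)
The plan is to trade $S$ for the number of \emph{mixing} attachments $m = N-2-S$, i.e. the number of steps $k\in\{3,\dots,N\}$ at which node $k$ attaches to a leaf of the opposite color (equivalently, the number of new tables). I would first note that a ranked planar labeled tree determines its whole construction: the internal node of highest rank is created last, so its two children are leaves, and peeling it off (together with tip $N$, the left child) and recursing recovers the entire sequence of attachments. Hence $S$ and every $w_k$ are well-defined functions of the tree, and the lemma splits into two tasks: (i) a structural dictionary between $m$ and monophyly, proving the two ``iff'' statements, and (ii) an $\alpha\to\infty$ asymptotic of the likelihood $L(\tilde T^{\ell}_{N,B}) = \binom{N}{B}^{-1}\alpha^{S}\prod_{k=3}^{N}(k-1-w_k+\alpha w_k)^{-1}$.

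For the ``iff'' statements I would use the dictionary. A same-color attachment keeps the two children of the newly created cherry one color, while a mixing attachment inserts a leaf of the opposite color into a previously monochromatic region. Thus $m=0$ occurs exactly when no color is ever introduced into the other's region after the root cherry: this forces $C_1\neq C_2$ (if $C_1=C_2$, the first node of the second color has $w=0$ and must mix) and leaves the two root subtrees monochromatic of opposite colors, i.e. exactly perfect; conversely an exactly perfect tree has a bichromatic root cherry and admits no mixing, so $S=N-2$. For $S=N-3$ I would show the single mixing is forced to sit at the first appearance of the second color, which is precisely the unique index $r$ with $w_r=0$ (so $C_1=C_2$); this event seeds one clade of the second color that thereafter grows only by same-color attachments, producing exactly one monophyletic color, while the insertion fragments the first color into at least two clades. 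Hence the tree is perfect but not exactly perfect, and conversely such a tree carries exactly one mixing, so $S=N-3$.

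For the two limits I would substitute the structure into $L$ and send $\alpha\to\infty$. In the exactly perfect case every $w_k\geq 1$, so each of the $N-2$ factors grows like $\alpha w_k$ and cancels the numerator $\alpha^{S}=\alpha^{N-2}$, leaving $\binom{N}{B}^{-1}(\prod_{k=3}^{N}w_k)^{-1}$. In the perfect-but-not-exactly-perfect case the lone factor at $k=r$ equals $r-1$ (since $w_r=0$), the remaining $N-3$ factors each grow like $\alpha w_k$ and cancel $\alpha^{N-3}$, producing the extra $1/(r-1)$. The computation closes with the identity $\prod_{k=3}^{N}w_k=(B-1)!\,(N-B-1)!$ in the first case and $\prod_{k=3,\,k\neq r}^{N}w_k=(B-1)!\,(N-B-1)!$ in the second. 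Both follow because the $i$-th node of a fixed color has $w=i-1$, so the surviving (nonzero) factors multiply colorwise to $(B-1)!$ and $(N-B-1)!$; the only omitted factors are the first-of-each-color terms, which are either dropped as $k\leq 2$ or handled separately as the $r-1$ term, and at most a harmless factor $1$.

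I expect the structural dictionary of the second paragraph to be the main obstacle, and within it the $S=N-3$ direction most delicate: one must verify that a single mixing event yields exactly one monophyletic color \emph{without fragmenting} that clade, which is what pins the event to the first occurrence of the second color and hence to the unique $r$ with $w_r=0$ (explaining the appearance of $r$ in the stated limit). By contrast, once the locations of the $w=0$ factor and the $w_k\geq 1$ factors are identified, the $\alpha\to\infty$ asymptotics and the factorial product identity are routine bookkeeping.
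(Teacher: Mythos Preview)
Your overall plan mirrors the paper's proof: you use the likelihood $L(\tilde T^{\ell}_{N,B})=\binom{N}{B}^{-1}\alpha^{S}\prod_{k\geq 3}(k-1-w_k+\alpha w_k)^{-1}$, identify which factors carry $w_k\geq 1$ versus $w_k=0$, take the $\alpha\to\infty$ limit, and close with the interleaving identity $\prod w_k=(B-1)!\,(N-B-1)!$. For the first equivalence ($S=N-2\Leftrightarrow$ exactly perfect) and for both limit formulas your reasoning is essentially identical to the paper's and is correct.

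The gap is in the second ``iff''. Your key step, ``the single mixing is forced to sit at the first appearance of the second color, which is precisely the unique index $r$ with $w_r=0$ (so $C_1=C_2$),'' does not follow from $S=N-3$ alone. If $C_1\neq C_2$ then $w_k\geq 1$ for every $k\geq 3$, no mix is forced anywhere, and the single mixing event can occur at an arbitrary step. For instance, with $N=4$, $B=2$, $C=(B,R,B,R)$, taking a same-color attachment at $k=3$ and a mix at $k=4$ gives $S=1=N-3$, yet neither color is monophyletic. So the implication $S=N-3\Rightarrow$ perfect-but-not-exactly-perfect fails, and no argument can rescue it. The paper's proof does not attempt this converse: it starts from a perfect-but-not-exactly-perfect tree, argues toward $S=N-3$, and then explicitly passes to the case ``the initial order is $B,B$ or $R,R$'' before computing the limit. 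The clause ``$w_r=0$ for a unique $r\in\{3,\dots,N\}$'' in the statement already encodes that restriction, and the $C_1\neq C_2$, $S=N-3$ trees are harmless downstream because their probability tends to $0$ (numerator $\alpha^{N-3}$ against $N-2$ factors each $\sim\alpha w_k$). Once you add the hypothesis $C_1=C_2$, your ``forced mix at $r$'' argument goes through and the rest of your outline is fine.
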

\begin{proof}
In order to form an exactly perfect tree, each attachment must be an attachment of the same color, which means $S=N-2$ and we must start with either $B, R$ or $R,B$ in the order of attachments. Hence $w_k\neq 0$ for all $k=3,...,N$, and 
\[ \mathbb{P}(\tilde{T}^{\ell, EP}_{N,B}) = \frac{1}{\binom{N}{B}} \cdot \prod_{k=3}^N \frac{\alpha}{k-1-w_k + \alpha w_k} = \frac{1}{\binom{N}{B}} \cdot \prod_{k=3}^N \frac{1}{w_k + (k-1-w_k)/ \alpha} . \]
Taking the limit gives $\lim_{\alpha\to\infty} \mathbb{P}(\tilde{T}^{\ell, EP}_{N,B}) = \frac{1}{\binom{N}{B}} \cdot \prod_{k=3}^N \frac{1}{w_k}$. Moreover, $\{w_k: k=3,...,N\}$ is an interleaving of $w^B=\{1,2,...,B-1\}$ and $w^R= \{1,2,...,N-B-1\}$, so we have $\prod_{k=3}^N w_k= (B-1)!(N-B-1)!$ and \[ \lim_{\alpha\to\infty} \mathbb{P}(\tilde{T}^{\ell, EP}_{N,B}) = \frac{1}{\binom{N}{B}} \cdot \frac{1}{(B-1)!(N-B-1)!}. \] 

\noindent For perfect but not exactly perfect trees, we must have $S=N-3$ because the root of the monophyletic clade that contains all tips of one color, is attached to an opposite color. This can only happen if the initial order is $B,B$ or $R, R$. Without loss of generality, let us suppose the ordering starts with $B,B$ and that the first $R$ appears at element $r$ ($C_{r}=R$), which would imply $w_r=0$, and this index is unique. In addition, for all $k<r$, we have $w_k=k-1$. Hence, for a perfect $\tilde{T}^{\ell}_{N,B}$ and initial ordering starting with $B,B$, we have 
\begin{align*}
    \lim_{\alpha \to\infty} \mathbb{P}(\tilde{T}^{\ell, P}_{N,B}) &= \lim_{\alpha \to\infty} \frac{1}{\binom{N}{B}} \cdot \frac{\alpha^S}{\prod_{k=3}^N (k-1-w_k + \alpha w_k)}\\
    &= \lim_{\alpha \to\infty} \frac{1}{\binom{N}{B}} \cdot \prod_{k=3}^{r-1} \frac{\alpha}{\alpha w_k} \cdot \left ( \frac{1}{r-1-w_r + \alpha w_r} \right ) \cdot \prod_{k=r+1}^N \frac{\alpha}{(k-1-w_k + w_k \alpha)} \\
    &= \lim_{\alpha \to\infty} \frac{1}{\binom{N}{B}} \cdot \prod_{k=3}^{r-1} \frac{1}{w_k} \cdot \left ( \frac{1}{r-1} \right ) \cdot \prod_{k=r+1}^N \frac{\alpha}{(k-1-w_k + w_k \alpha)} \\
    &= \frac{1}{(r-1)\binom{N}{B}} \prod_{k=3, k\neq r}^N w_k. 
\end{align*}
Now, $\prod_{k=3, k\neq r}^N w_k= (B-1)! (N-B-1)!$ because $\{w_k: k=3,...,r-1, r+1,...N\}$ is an interleaving of $w^B=\{2,...,B-1\}, w^R= \{1,2,...,N-B-1\}$, where $w_r=0$ is not counted in the product. Hence, $\lim_{\alpha\to\infty} \mathbb{P}(\tilde{T}^{\ell, P}_{N,B}) = \frac{1}{(r-1) \binom{N}{B}} \cdot \frac{1}{(B-1)!(N-B-1)!}$. 
\end{proof}

\begin{theorem}
Under the CRP-Tree model with $\alpha \geq 1$, we have
\begin{align*}
    &\lim_{\alpha \to\infty} \mathbb{P}(\{T^{\ell}_{N,B}: T^{\ell}_{N,B} \text{ is perfect}\}) = 1,  \\
    &\lim_{\alpha \to\infty} \mathbb{P}(\{ T^{\ell}_{N,B} : T^{\ell}_{N,B} \text{ is exactly perfect} \} ) = \frac{2B(N-B)}{N(N-1)}, \\
    & \lim_{\alpha \to\infty} \mathbb{P}(\{ T^{\ell}_{N,B} : T^{\ell}_{N,B}\text{ is perfect but not exactly perfect}\}) = 1-\frac{2B(N-B)}{N(N-1)} . 
\end{align*}
\end{theorem}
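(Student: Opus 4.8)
The plan is to argue directly from the generative description of the CRP-Tree, conditioning on the random color ordering $C=(C_1,\dots,C_N)$ chosen in Step 1; since being perfect or exactly perfect is insensitive to planarity and to the internal-node rankings, $\mathbb P(\{T^\ell_{N,B}\text{ perfect}\})$ equals the probability that the generative process outputs a perfect tree, and likewise for exact perfection, so I may reason with the process itself. The single structural fact I will use repeatedly is that the attachment at step $k$ is same-colored with probability $p_k=\frac{\alpha w_k}{(k-1-w_k)+\alpha w_k}$, which tends to $1$ as $\alpha\to\infty$ whenever $w_k\ge 1$, while an opposite-colored attachment is forced (so $p_k=0$) exactly when $w_k=0$. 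I will also invoke the part of Lemma~\ref{thm:lemma6} that a tree is exactly perfect if and only if $S=N-2$, i.e.\ if and only if every attachment is same-colored. The third limit then needs no separate argument: the perfect trees are the disjoint union of the exactly perfect and the perfect-but-not-exactly-perfect ones, so $\mathbb P(\text{perfect but not EP})=\mathbb P(\text{perfect})-\mathbb P(\text{EP})$ for every $\alpha$, and the third limit follows from the first two.

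For the first limit I would show $\mathbb P(\text{perfect}\mid C)\to 1$ for \emph{every} ordering $C$. Fix $C$ and let $G_C$ be the event that $X_k=1$ for every $k$ with $w_k\ge 1$, i.e.\ every attachment that \emph{can} be same-colored is. I claim every tree in $G_C$ is perfect. If $C_1,C_2$ have different colors, then every attachment has $w_k\ge 1$ and so is same-colored, yielding an exactly perfect tree. If $C_1,C_2$ share a color, say blue, the first red is added at the unique index $r$ with $w_r=0$ (forced to attach to a blue), and thereafter every red attaches to a red under $G_C$; the red leaves then always form a subtree, since each red attachment merely subdivides one red leaf into two, and no blue ever attaches inside it, because every blue after the initial pair has $w\ge 1$. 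Hence the reds are monophyletic and the tree is perfect. Since $\mathbb P(G_C\mid C)=\prod_{k:\,w_k\ge 1}p_k$ is a finite product of factors each tending to $1$, we get $\mathbb P(\text{perfect}\mid C)\ge\mathbb P(G_C\mid C)\to 1$. As there are finitely many orderings, interchanging the limit with the finite sum $\mathbb P(\text{perfect})=\sum_C\mathbb P(C)\,\mathbb P(\text{perfect}\mid C)$ gives $\lim_{\alpha\to\infty}\mathbb P(\text{perfect})=\sum_C\mathbb P(C)=1$.

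For the second limit I would split on the first two colors. If $C_1,C_2$ share a color, the forced opposite attachment at index $r$ gives $S\le N-3$, so by Lemma~\ref{thm:lemma6} the tree cannot be exactly perfect and $\mathbb P(\text{EP}\mid C)=0$ for every $\alpha$. If $C_1,C_2$ differ, then $w_k\ge1$ for all $k\ge 3$, exact perfection is precisely the event that all $N-2$ attachments are same-colored, and $\mathbb P(\text{EP}\mid C)=\prod_{k=3}^N p_k\to 1$. Interchanging the limit with the finite sum over orderings yields $\lim_{\alpha\to\infty}\mathbb P(\text{EP})=\mathbb P(C_1\ne C_2)$, and since $C$ is a uniformly random arrangement of $B$ blue and $N-B$ red labels, $\mathbb P(C_1\ne C_2)=\frac{2B(N-B)}{N(N-1)}$, which is the claimed constant.

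The step that needs the most care is the monophyly induction establishing that every tree in $G_C$ is perfect; the remaining ingredients are the interchange of $\lim_{\alpha\to\infty}$ with a finite sum and the elementary evaluation of $\mathbb P(C_1\ne C_2)$. As an independent check on the constant, I would recompute $\lim_{\alpha\to\infty}\mathbb P(\text{EP})$ by summing the per-tree limit $\frac{1}{\binom{N}{B}(B-1)!(N-B-1)!}$ from Lemma~\ref{thm:lemma6} over all exactly perfect planar trees. Counting the latter as $2\binom{N-2}{B-1}(B-1)!(N-B-1)!=2(N-2)!$ — a factor $2$ for which side of the root is blue, a ranked planar shape on each side, and $\binom{N-2}{B-1}$ interleavings of the two sets of internal-node ranks — reproduces $\frac{2B(N-B)}{N(N-1)}$ and confirms the answer.
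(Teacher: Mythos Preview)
Your proof is correct and takes a genuinely different route from the paper's. The paper works tree-by-tree: it shows that each individual non-perfect planar tree has probability tending to $0$ by comparing the degree of the numerator $\alpha^S$ (with $S\le N-4$ after isolating a possible $w_r=0$ factor) to the degree $N-3$ of the denominator, and then obtains the exactly-perfect limit by summing the per-tree limit from Lemma~\ref{thm:lemma6} over a count of all exactly perfect planar trees. You instead condition on the color ordering $C$ and take limits of the conditional probabilities, using the structural event $G_C=\{X_k=1\text{ for all }k\text{ with }w_k\ge 1\}$ together with a short monophyly induction to force perfection, and the observation that $\mathbb P(\text{EP}\mid C)=\mathbf 1\{C_1\neq C_2\}\cdot\prod_{k\ge 3}p_k$ to identify the exactly-perfect limit with $\mathbb P(C_1\neq C_2)$. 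Your approach bypasses both the polynomial-degree bookkeeping and the enumeration of exactly perfect planar trees, and makes the appearance of $\frac{2B(N-B)}{N(N-1)}$ transparent as the probability that the first two colors differ; the paper's approach, on the other hand, leans more directly on Lemma~\ref{thm:lemma6} and so reuses work already done. Your closing sanity check, summing the Lemma~\ref{thm:lemma6} per-tree limit over the $2(N-2)!$ exactly perfect planar trees, is in fact exactly the paper's computation for the second limit.
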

\begin{proof}
We will first consider a ranked planar partially labeled tree $\tilde{T}^{\ell}_{N,B}$ with $S<N-3$. By the previous lemma, we know that $\tilde{T}^{\ell}_{N,B}$ is not a perfect tree. We will show that the probability of observing such a tree goes to 0 as $\alpha\to\infty$, and therefore, the probability of observing a planar perfect tree goes to 1 as $\alpha \to \infty$. Now, if the initial color ordering starts with $B,B$ or $R,R$, let $r\in \{3,...,N\}$ be such that $w_r=0$. If the initial color ordering starts with $B,R$ or $R,B$, then let $r=2$. We then have 
\begin{align*}
    \lim_{\alpha \to\infty} \mathbb{P}(\tilde{T}^{\ell}_{N,B}) &= \lim_{\alpha \to\infty} \frac{1}{\binom{N}{B}} \cdot \frac{\alpha^S}{\prod_{k=3}^N (k-1-w_k + \alpha w_k)}\\
    &= \lim_{\alpha \to\infty} \frac{1}{(r-1)\binom{N}{B}} \cdot \frac{\alpha^{S}}{\prod_{k=3, k\neq r}^N (k-1-w_k + \alpha w_k)} = 0.
\end{align*}
because the denominator of the second term has leading term $\alpha^{N-3}$ while the numerator has leading term $\alpha^S$. \\

\noindent Any exactly perfect planar tree must be generated with initial color ordering $B,R$ or $R,B$. Given the initial ordering, the number of exactly perfect planar trees that can be formed is $(B-1)!(N-B-1)!$ because at the $k$th step, node $k$ has a choice of $w_k$ nodes to attach to. Moreover, there are $2\binom{N-2}{B-1}$ initial orderings that start with $B,R$ or $R,B$. Therefore by Lemma~\ref{thm:lemma6}, 
\begin{align*}
    &\lim_{\alpha\to\infty}  \mathbb{P}(\{ \tilde{T}^{\ell}_{N,B} : \tilde{T}^{\ell}_{N,B} \text{ is exactly perfect}\}) = \frac{2\binom{N-2}{B-1}}{\binom{N}{B}}= \frac{2B(N-B)}{N(N-1)}, \text{ and } \\
    &\lim_{\alpha\to\infty}  \mathbb{P}(\{ \tilde{T}^{\ell}_{N,B}: \tilde{T}^{\ell}_{N,B} \text{ is perfect but not exactly perfect}\} ) = 1- \frac{2B(N-B)}{N(N-1)} . 
\end{align*}
Finally, note $\mathbb{P}(\{ \tilde{T}^{\ell}_{N,B} : \tilde{T}^{\ell}_{N,B} \text{ is perfect}\}) = \mathbb{P}(\{ T^{\ell}_{N,B} : T^{\ell}_{N,B} \text{ is perfect}\})$ and the same holds for exactly perfect trees. Therefore the three results hold. 
\end{proof}

\begin{theorem}[\textbf{Expected Value of $S$}]~\\
If $\alpha=1$, then \[ \mathbb{E}[S] =\frac{(N-2)\big (B(B-1)+ (N-B)(N-B-1)\big )}{N(N-1)} = (N-2) - \frac{2B(N-B)(N-2)}{N(N-1)}. \]
If $\alpha>1$, then \[ \mathbb{E} [S] = \frac{B}{N} \times \sum_{i=0}^{k-1} \frac{\alpha i}{(k-1-i) + \alpha i} \frac{\binom{k-1}{i} \binom{N-k}{B-(i+1)}}{\binom{N-1}{B-1}} + \frac{N-B}{N}\times \sum_{i=0}^{k-1} \frac{\alpha i}{(k-1-i) + \alpha i} \frac{\binom{k-1}{i} \binom{N-k}{N-B-(i+1)}}{\binom{N-1}{N-B-1}}, \] with the convention $\binom{a}{b}=0$ if $a<b$. 
\end{theorem}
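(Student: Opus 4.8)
The plan is to use linearity of expectation to write $\mathbb{E}[S]=\sum_{k=3}^{N}\mathbb{E}[X_k]$ and compute each $\mathbb{E}[X_k]=\mathbb{P}(X_k=1)$ separately. The key structural fact, already emphasized in the two-urn description, is that $w_k$ depends only on the random initial ordering $C=(C_1,\dots,C_N)$ and not on $\alpha$, while the attachment step gives $\mathbb{E}[X_k\mid w_k]=\frac{\alpha w_k}{(k-1-w_k)+\alpha w_k}$. So I would first condition on $w_k$, reducing the problem to finding the distribution of $w_k$ under the uniform random ordering of $B$ blue and $N-B$ red labels, and then take the outer expectation.

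For the case $\alpha=1$ the conditional expectation collapses to the linear expression $\mathbb{E}[X_k\mid w_k]=w_k/(k-1)$, so only $\mathbb{E}[w_k]$ is needed. Writing $w_k=\sum_{i=1}^{k-1}\mathds{1}(C_i=C_k)$ and using that for $i\neq k$ the pairwise match probability $\mathbb{P}(C_i=C_k)=\frac{B(B-1)+(N-B)(N-B-1)}{N(N-1)}$ is independent of $i,k$, I obtain $\mathbb{E}[w_k]=(k-1)\cdot\frac{B(B-1)+(N-B)(N-B-1)}{N(N-1)}$. Hence $\mathbb{E}[X_k]$ is constant in $k$, and summing over $k=3,\dots,N$ contributes the factor $(N-2)$ and the stated closed form; the second equality is just the algebraic identity $B(B-1)+(N-B)(N-B-1)=N(N-1)-2B(N-B)$.

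For $\alpha>1$ the conditional expectation is nonlinear in $w_k$, so $\mathbb{E}[w_k]$ no longer suffices and I would need the full distribution of $w_k$. I would condition further on the color of node $k$, using $\mathbb{P}(C_k=\text{Blue})=B/N$ and $\mathbb{P}(C_k=\text{Red})=(N-B)/N$. Given $C_k=\text{Blue}$, exchangeability makes the remaining $N-1$ labels a uniform arrangement of $B-1$ blue and $N-B$ red, and $w_k$ is the number of those blues landing in the first $k-1$ positions, i.e.\ hypergeometric. I would then use the symmetric (Vandermonde-type) form of the hypergeometric pmf to write $\mathbb{P}(w_k=i\mid C_k=\text{Blue})=\binom{k-1}{i}\binom{N-k}{B-(i+1)}/\binom{N-1}{B-1}$, and analogously with $B$ replaced by $N-B$ in the red case. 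Substituting these into $\mathbb{E}[X_k]=\frac{B}{N}\mathbb{E}[X_k\mid C_k=\text{Blue}]+\frac{N-B}{N}\mathbb{E}[X_k\mid C_k=\text{Red}]$ and summing $\frac{\alpha i}{(k-1-i)+\alpha i}$ against these pmfs yields exactly the two-term expression in the statement (understood per index $k$, so that $\mathbb{E}[S]$ is the sum of this expression over $k=3,\dots,N$).

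The main obstacle is the $\alpha>1$ case: because the attachment probability is a nonlinear function of $w_k$, the computation genuinely requires the conditional distribution of $w_k$ rather than a single moment, and the cleanest route is to recognize it as hypergeometric and rewrite it in the particular binomial form appearing in the statement via the two equivalent ways of counting a hypergeometric draw. The remaining work is bookkeeping: conditioning correctly on the minority versus majority color to produce the two terms with weights $B/N$ and $(N-B)/N$, and handling the edge indices through the convention $\binom{a}{b}=0$ for $a<b$ (which also annihilates the $i=0$ term, consistent with $X_k=0$ when node $k$ has no same-color predecessor).
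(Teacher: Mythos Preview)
Your proposal is correct and follows essentially the same approach as the paper: condition on the ordering to obtain $\mathbb{E}[X_k\mid w_k]=\frac{\alpha w_k}{(k-1-w_k)+\alpha w_k}$, then for $\alpha>1$ condition further on the color of node $k$ to identify the law of $w_k$ as hypergeometric and sum against its pmf in the stated binomial form. The only cosmetic difference is that for $\alpha=1$ you compute $\mathbb{E}[w_k]$ directly via the pairwise match probability $\mathbb{P}(C_i=C_k)$, whereas the paper obtains it from the mixture-of-hypergeometrics representation and the hypergeometric mean; the two routes are equivalent and yours is marginally more direct.
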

\begin{proof}
Intuitively, $S$ is the sum of linear combinations of Hypergeometric random variables. See Appendix Section~\ref{subsec:expected_value} for the full proof. 
\end{proof}

\subsection{Three equivalent representations}
We now list three ways in which we can represent the information of a ranked planar partially labeled tree shape $\tilde{T}^{\ell}_{N,B}$: the tree form, a sequence of attachments together with an initial color order $C$, and in terms a collection of tables via the CRP. Figure~\ref{fig:tree_tables_list} shows the three representations for $\tilde{T}^{\ell}_{15,6}$. Each representation has its benefits: the tree shape is what is usually given, the sequence of attachments and initial color order allow us to calculate the sufficient statistics $S, \{w_k:k=3,...,N\}$, while the collection of tables is a representation free of any color information. We will show that these representations are all bijective, and describe algorithms to reconstruct each representation from the other. Algorithm 1 allows us to recover the sequence of attachments and initial color order from $\tilde{T}^{\ell}_{N,B}$. Finally, we will define the set of conditions needed for a  collection of tables to encode a ranked planar partially labeled tree, as well as a constructive proof to find the initial color order and the sequence of attachments from the collection of tables. 

\begin{figure}[h]
    \centering
    \includegraphics[width=0.755\linewidth]{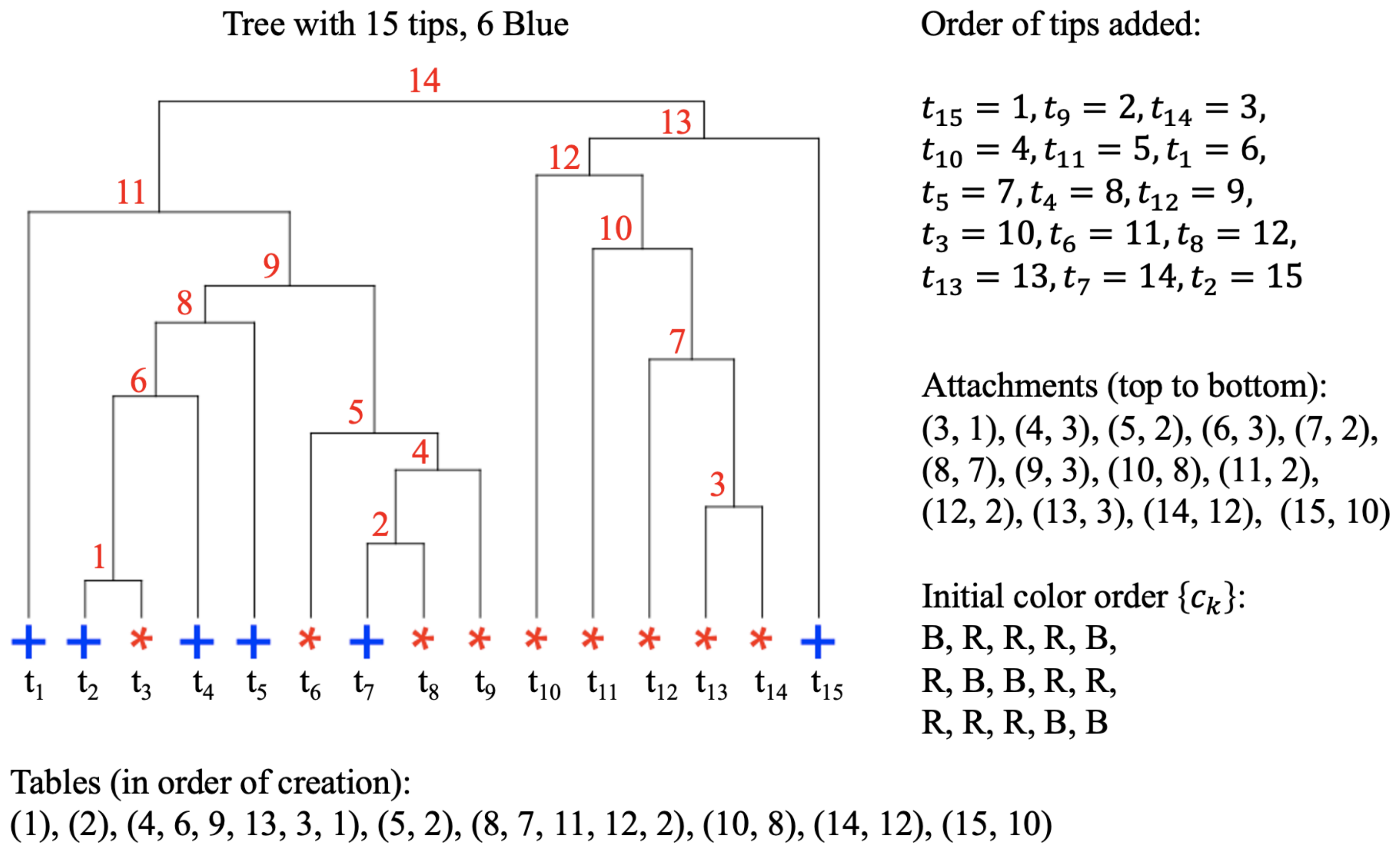}
    \caption{\textbf{The three representations of $\tilde{T}^\ell_{15, 6}$.} First, we label the tips $t_1,..,t_{15}$ from left to right and internal nodes from 1 to 14 from bottom to top (in red). We can determine the order in which the tips were added to the tree and the sequence of attachments from the tree. Similarly, we can derive this same information from the set of tables and vice versa.}
    \label{fig:tree_tables_list}
\end{figure}

\subsubsection{Planar ranked tree to sequence of attachments, initial color order, and order of tips added} \label{subsec:planar-att}
Given $\tilde{T}^{\ell}_{N,B}$, Algorithm 1 in Appendix Section~\ref{subsec:tree_alg} can uniquely determine $C=(C_1,...,C_N)$, the initial ordering of the colors of leaf labels that generated the tree, and the sequence of attachments $\{(k, z_k): k=3,...,N\} $. That is, a ranked planar partially labeled tree has a bijective correspondence with a color ordering $(C_1,...,C_N)$ and a sequence of attachments.\\

The key for being able to recover $C$ and the sequence of attachments from $\tilde{T}^{\ell}_{N,B}$, is our careful planar construction of the tree: new tips are always attached to the left of an existing tip. We first label the internal nodes ranked from bottom to top by 1 to $N-1$, the numbers in red in Figure~\ref{fig:tree_tables_list}. This allows us to uniquely backtrack (bottom to top) which node was being added and which node was begin attached to. We will use the general notation of $t_1,..., t_N$ to label the tips from left to right, where we will find $t_k \in \{1,...,N\}$, the order in which tips were added. \\

We start with the youngest internal node (internal node $1$) and look at its two immediate offspring, the left tip is the last tip added, with color label $(C_N)$ and tip number label $N$, and the right tip is the tip being added to. In Figure~\ref{fig:tree_tables_list}, this corresponds to tip $t_2$ being added and labeled $t_2=15$, $t_3$ being attached to, and $C_N=B$. Proceeding in this manner, the right-most tip in the left subtree of internal node $k$ will be the tip added, while the right-most tip in the right subtree of the internal node $k$ will be the tip added to. 
By nature of the planarity, the tip added would not have been added previously. Returning back to our example, at internal node 2, the tip being added is $t_7$, so it is labeled $t_7=14$ and $C_{14}=B$, and $t_8$ being attached to. For the final step at the root (internal node $N-1$), the right-most tip in the left subtree of the root will be the second tip added, while the right-most tip in the right subtree of the root will the the first tip added. These will be the two remaining tips that have not been added. 
In Figure~\ref{fig:tree_tables_list}, at the root, the right-most tip in the left subtree is $t_9$ and the right-most tip in the right subtree is $t_{15}$, so $t_9=2, t_{15}=1$. One can check that indeed tip $t_9$ had not been added yet. The attachments made are given by the pairs $\{(A_i, D_i)= (N+1-i, t_{r_i}): i=1,...,N-1\}$, with $A_{i}$ denoting the tip being added, and $D_{i}$, the tip being attached to, at step $i$. For example, the last attachment created was $(t_2, t_3)= (15, 10)$. 

\subsubsection{Ranked planar tree to collection of tables and vice versa}
Every ranked planar partially labeled tree $\tilde{T}^{\ell}_{N,B}$ has a one-to-one correspondence to a collection of tables $\{E^1, E^2,...,E^T\}$. These tables obey a set of conditions stated in Appendix Section~\ref{subsec:table_conditions}. An implication of these assumptions is that color information is not needed in the table representation in order to reconstruct the tree. \\

To obtain the sequence of attachments from the collection of tables, first note that the number of attachments represented in a table is the number of elements minus one. For example, the attachments in table $E^t= (E^t_1,...,E^t_n)$, can be denoted by $(E^t_1, E^t_{a_1}), ... (E^t_{n-1}, E^t_{a_{n-1}})$, where $a_j$ is the smallest index satisfying $a_j > j$ and $E^t_j > E^t_{a_j}$. By construction, there will be a total of $N-2$ attachments made across all the tables. We can rearrange all attachments to be in order: $(3, A_3), (4, A_4),..., (N, A_N)$,  including possibly $(2,1)$, with $A_j < j$ for all $j=3,...,N$. For example, the collection of tables $\{E_1, E_2, E_3\} =\{(7, 6,3,1), (5,1), (4,2)\}$ corresponds to attachments $(7,6), (6,3), (3,1), (5,1), (4,2)$.\\

Next, we determine the colors by explicitly stating which attachments must be of the same color. First, if $(2,1)$ appears, then tips $1,2$ must be of the same color, otherwise one is blue and the other is red. If $(2,1)$ does not exist, then the pairs of the form $(j,A_j)$ with $A_j=1$ or $A_j=2$, and $j$ the smallest element that is attached to $A_j$, will be of opposite colors if $(A_j)$ exists as a table in $\{E_1,..., E_T\}$. Otherwise, $j$ and $A_j$ will be of opposite colors. Going back to our example, without loss of generality, we take tip 1 to be Blue and tip 2 to be Red. The smallest tip that is attached to 1 is $3$ in pair $(3,1)$, and $(1)$ does not exist as a table, so tip 3 is also Blue. The smallest element that is attached to 2 is $(4,2)$ and $(2)$ does not exist as a table, so tip 4 must be Red. \\

To determine the remaining attachment types, let $E^{t_j}$ be the table that contains attachment $(j, A_j)$. If $|E^{t_j}|=2$, then $j$ and $A_j$ are of opposite colors. If $|E^{t_j}|\geq 3$, and there is an element smaller than $j$ in the same table, it implies a table was already created when $j$ was added, therefore $j$ and $A_j$ are of the same color. Otherwise, the table is newly created and $j$ and $A_j$ are of different colors. To finish off the example, the table $(5,1)$ has two elements, so tip 5 must be Red. Finally, the attachments $(7,6)$ and $(6,3)$ are attachments made on pre-existing tables, so tips $3, 6, 7$ are all the same color. Therefore, our final set of attachments and tip colors is $(7B, 6B), (6B,3B), (3B,1B), (5R,1B), (4R,2R)$. 

\subsection{Resulting Tree Topology}\label{subsec:tree_topology}
In Section~\ref{subsec:properties}, we showed that the CRP-Tree model with $\alpha=1$ generates ranked partially labeled trees with the same probability law as the null coalescent model of Section~\ref{sec:3}. We further showed that if we remove the color labels, we obtain unlabeled ranked tree shapes with the same law as in the Tajima coalescent. We empirically verify that $\alpha$ does not greatly affect the probability law of the ranked tree shapes by comparing the averages of various tree statistics to the expectation under the Tajima coalescent. Two popular statistics are the number of cherries (2-tip subtrees), and the number of pitchforks (3-tip subtrees), with expected values $N/3$ and $N/6$ respectively under the standard coalescent \citep{McKenzie2000, Choi2020}. Figure~\ref{fig:cherry_pitchfork} in the Appendix shows that for each $\alpha$, the number of cherries and pitchforks is concentrated around the expected value. \\

Another simulation to examine the distribution of ranked tree shape topologies is motivated by \cite{Kim2020}. The authors proposed a distance on ranked tree shapes and use it to visualize tree distribution in 2 dimensions via multidimensional scaling (MDS). Figure~\ref{fig:mds_plot} in the Appendix does not exhibit clustering of the trees per distribution, suggesting similarity among the three ranked tree shape distributions. More details on both these simulations can be found in Appendix Section~\ref{subsec:tree_topology_details}. 

\subsection{Discussion about planarity}
The output of our model is a ranked planar partially labeled tree, which is a largely unexplored tree resolution. In this case, planarity and the tip colors allows us to label the internal nodes blue or red and to know the sequence of node attachments. In many biological situations, there is a differentiation between the two children of a node, and therefore a way to distinguish them is by keeping track of left and right nodes. For example, these include speciation, transmission trees in epidemiology, and cell lineage diagrams,, where in each case, the left and right subtrees represent a biologically important distinction \citep{Stewart2005, Hagen2015, Sainudiin2016b}. \\

Other methodologies are also built under the assumption of planarity. \cite{Behr2020} assume a fixed planar representation, and they show via simulation that their method is mostly robust to changes in this representation (i.e. their  statistic does not change much with a change in the planarity). \cite{Ford2009} also work on the resolution of ranked planar trees. Their test statistic is calculated on the planar tree, and then planarity is marginalized out when calculating the final p-value. \cite{Sainudiin2016} derive a Beta-splitting model at a variety of tree resolutions: ranked and planar, unranked and planar, ranked and non-planar, unranked and non-planar. 

\section{CRP-Tree phylogenetic association test}\label{sec:5}
We will first consider the setting in which a ranked partially labeled phylogenetic tree is available, for example obtained via Maximum Likelihood estimation from molecular sequence data. We ignore any phylogenetic uncertainty in this case, which may not be ideal in many situations. In addition, we will assume that a binary trait is completely observed at the tips of the phylogeny, that is, we observe $T^{\ell,0}_{N,B}$. We use the superscript 0 to denote that the tree is observed and fixed. Given $T^{\ell,0}_{N,B}$, we wish to test whether there is a phylogenetic association in the binary trait. In terms of the CRP-Tree model, the hypothesis of no phylogenetic association is equivalent to $H_{0}:\alpha=1$.\\

A natural test statistic is $S$, the number of same type attachments. However, this statistic is not directly observed since it depends on the initial color ordering and the sequence of type of attachments. Instead, our proposed test statistic is $\mu=\mathbb{E}[S|T^{\ell}_{N,B} ]$. In practice for large $N$, $\mu$ is replaced by  \[ \hat{\mu} = \frac{1}{M} \sum_{i=1}^M S_i, \]  where $S_{i}$ is the number of same-type attachments in $\tilde{T}^{\ell, i}_{N,B}\sim \mathbb{P}_{\alpha=1}(\tilde{T}^{\ell, i}_{N,B}\mid T^{\ell}_{N,B})$.  

\subsection{Testing by permutation}\label{subsec:pval-def}
To assess significance, we estimate the null distribution of our test statistic $\hat{\mu}$, conditional on the observed ranked tree shape $T_{N}: T^{\ell,0}_{N,B} \prec T_{N}$, and $(N,B)$. We estimate the null distribution by random permutation of the leaf labels, that is, we generate $\{T^{\ell,i}_{N,B}\}^{K}_{i=1}$ by randomly relabeling the tips of $T_N$ with $B$ blues and $N-B$ reds, $K$ times. To calculate $\hat{\mu}_{0},\hat{\mu}_{1},\ldots,\hat{\mu}_{K}$, we sample $M$ ranked planar partially labeled tree shapes uniformly conditional on each $T^{\ell,i}_{N,B}$, for $i=0,\ldots,K$ by picking an internal node uniformly among those that do not subtend a same type cherry, and then permuting its left and right subtrees. The $p$-value is then \[ p_S = \frac{1+\sum_{i=1}^K \mathds{1}(\hat{\mu}_i \geq \hat{\mu}_0)}{1+K} . \]
When $K$ and $M$ are larger than the possible number of permutations, we simply generate all permutations to compute the exact $p$-value. Figure \ref{fig:permutation_schematic}(a, c) show a schematic of the simulations needed to calculate $p_{S}$.\\

We propose a second test in which our test statistic is a sample of $\{S^{obs}_{i}\}_{i=1}^{M}$ from $\mathbb{P}_{\alpha=1}(S\mid  T^{\ell,0}_{N,B})$ generated by sampling planar representations uniformly for the observed $T^{\ell,0}_{N,B}$, and computing the number of same type attachments. The cardinality of the space of all ranked planar partially labeled trees $\tilde{T}^{\ell}_{N,B}$ that are compatible with $T_N$ is $2^{N-1-C(T^{\ell}_{N,B})}\times \binom{N}{B}$. Therefore, our null distribution is generated by randomly sampling $K$ partial labelings on $T_{N}$ together with a random planar representation. Let $\{S_j\}_{j=1}^K$ denote the empirical null distribution. Then, our p-value is defined as \[ p_T = \frac{1}{M} \sum_{i=1}^M \left ( \frac{1+ \sum_{j=1}^K \mathds{1}(S_j \geq S^{obs}_i)}{1+K} \right ) \] Figure~\ref{fig:permutation_schematic}(a, b) show a schematic of the simulations needed to calculate $p_{T}$.

\begin{figure}[h]
    \centering
    \includegraphics[width=0.9\textwidth]{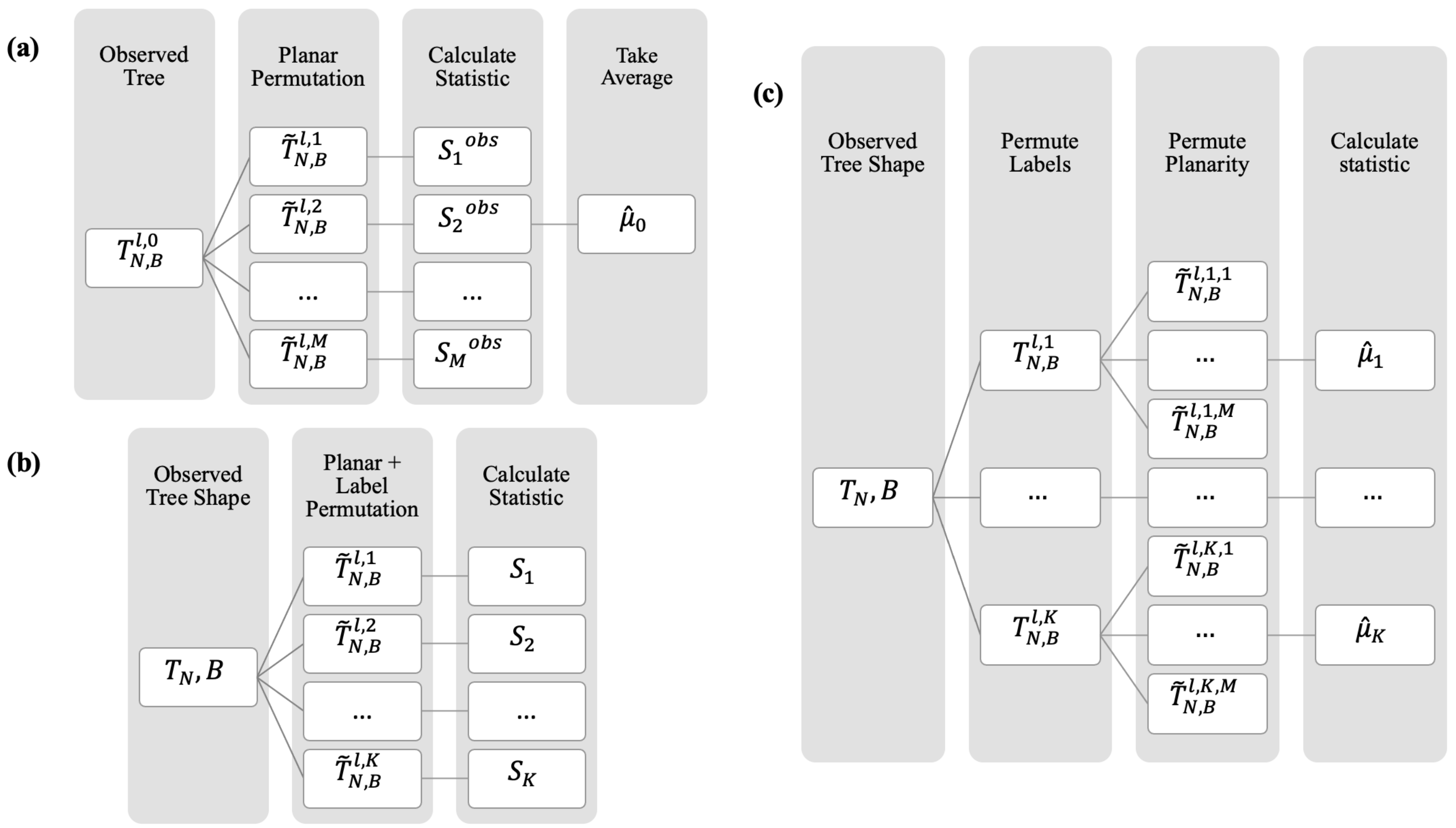}
    \caption{\textbf{A schematic for calculation of p-values.} In (a), we obtain $\hat{\mu}_0$ as the average of $\{S^{obs}_{i}\}^{M}_{i=1}$ statistics obtained by permutation of the planarity of $T_{N,B}^{\ell,0}$. In (b), we obtain an empirical null distribution of $S$ given $T_N, B$ by permuting planarity and tip labels. In (c), we generate a sample of $\{\hat{\mu}_{i}\}^{K}_{i=1}$ from the null distribution given $T_N, B$. The sample in (a) is used in both computation of $p_S, p_T$, the samples in (b) are needed to compute $p_T$, and the samples in (c) are needed to compute $p_{S}$.}
    \label{fig:permutation_schematic}
\end{figure}

\begin{lemma}\label{lemma:P_S}
 Let $T^{\ell,0}_{N,B},\ldots,T^{\ell,K}_{N,B}\overset{\text{iid}}{\sim} \mathbb{P}(T^{\ell}_{N,B}| T_N, B)$, where $\mathbb{P}(T^{\ell}_{N,B}| T_N)$ is the distribution of ranked and partially labeled tree shapes $T^{\ell}_{N,B} \prec T_{N}$ derived in Proposition~\ref{eq:null_model_cond_prob}. Let $\mu_{i}=\mu(T^{\ell,i}_{N,B})=\mathbb{E}[S\mid T^{\ell,i}_{N,B}]$ for $i=0,\ldots,K$, and let \[ P_S = \frac{\sum_{i=0}^K \mathds{1}(\mu_i \geq \mu_0)}{1+K},\] then $\mathbb{P}_{\alpha=1}(P_{S}\leq \alpha)\leq \alpha$ for all $\alpha \in [0,1].$
\end{lemma}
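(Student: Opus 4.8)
The plan is to recognize $P_S$ as a standard Monte Carlo (permutation) $p$-value and to establish its super-uniformity purely from exchangeability. Write $t$ for the level (the symbol $\alpha$ appearing free in the statement, not to be confused with the CRP-Tree parameter, which is fixed to $1$ under $H_0$; the subscript in $\mathbb{P}_{\alpha=1}$ records that we work under the null). Since $T^{\ell,0}_{N,B},\ldots,T^{\ell,K}_{N,B}$ are i.i.d.\ from $\mathbb{P}(T^{\ell}_{N,B}\mid T_N,B)$ and $\mu(\cdot)=\mathbb{E}[S\mid \cdot\,]$ is a fixed deterministic map, the real-valued statistics $\mu_0,\ldots,\mu_K$ are i.i.d., hence exchangeable. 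The whole claim therefore reduces to the following general fact, which I would state and prove: if $Y_0,\ldots,Y_K$ are exchangeable real random variables and $P=\frac{1}{K+1}\sum_{i=0}^{K}\mathds{1}(Y_i\ge Y_0)$, then $\mathbb{P}(P\le t)\le t$ for every $t\in[0,1]$. I then apply it with $Y_i=\mu_i$ and $P=P_S$.

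For the general fact I would condition on the unordered multiset $\mathcal{M}=\{Y_0,\ldots,Y_K\}$ (equivalently, on the vector of order statistics, viewed as $K+1$ labelled slots with some fixed tie-breaking). By exchangeability, conditional on $\mathcal{M}$ the arrangement of the values into the indices $0,\ldots,K$ is uniform; writing the slots in decreasing order $v_{(1)}\ge v_{(2)}\ge\cdots\ge v_{(K+1)}$ and letting $J$ be the slot occupied by $Y_0$, the variable $J$ is uniform on $\{1,\ldots,K+1\}$. The crucial observation is that $R:=\sum_{i=0}^{K}\mathds{1}(Y_i\ge Y_0)$ counts every slot whose value is $\ge v_{(J)}$, and since the $v_{(m)}$ are nonincreasing all slots $m\le J$ qualify; hence $R\ge J$ and $P=R/(K+1)\ge J/(K+1)$ always, so that $\{P\le t\}\subseteq\{J\le (K+1)t\}$.

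Finishing is then a one-line computation: since $J$ is uniform on $\{1,\ldots,K+1\}$ and integer-valued,
\[ \mathbb{P}(P\le t\mid \mathcal{M})\le \mathbb{P}\big(J\le (K+1)t\mid \mathcal{M}\big)=\frac{\lfloor (K+1)t\rfloor}{K+1}\le t, \]
and taking expectation over $\mathcal{M}$ removes the conditioning. I expect the only genuine obstacle to be handling ties among the $\mu_i$ rigorously: because the count uses ``$\ge$'' rather than ``$>$'', the naive claim that $R$ is \emph{exactly} uniform on $\{1,\ldots,K+1\}$ fails. The device above sidesteps this, since it uses only the one-sided bound $R\ge J$ together with the uniformity of $J$, both of which survive ties, so no separate tie-breaking randomization is needed. (If one assumes $\mu$ has a continuous law, ties occur with probability zero, $R$ is exactly uniform, and the bound is immediate; but the argument above avoids that assumption.)
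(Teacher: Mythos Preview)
Your argument is correct. Recognizing $P_S$ as a Monte Carlo $p$-value, passing to exchangeability of the $\mu_i$, conditioning on the multiset, and bounding the count $R$ below by the uniformly distributed slot index $J$ is a clean and complete proof; the tie discussion is handled properly by the one-sided bound $R\ge J$.

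The paper takes a different route. It conditions on $(\mu_0,T_N)$, replaces the empirical proportion $\frac{1}{K+1}\sum_{i}\mathds{1}(\mu_i\ge\mu_0)$ by the population proportion $\sum_{T}\mathbb{P}(T\mid T_N)\,\mathds{1}(\mu(T)\ge\mu_0)$ via a Law of Large Numbers / Portmanteau argument, and then invokes a deterministic weighted-indicator inequality (Lemma~A1 of \cite{harrison2012conservative}) to conclude. Your approach is more elementary and self-contained: it works for every fixed $K$ without any asymptotic step and does not rely on an external lemma. The paper's route, by contrast, packages the combinatorics into Harrison's lemma but the LLN step is used to justify a finite-$K$ inequality, which is at best informal as written. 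Both ultimately exploit the same exchangeability, but your order-statistic argument makes it explicit and avoids the detour.
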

\begin{proof}
First we condition on $\mu_0, T_N$. 
\begin{align*}
    \mathbb{P}_{\alpha=1} ( P_S \leq \alpha \mid \mu_0, T_N ) &= \mathbb{E}_{\alpha=1} \left [ \mathds{1}\left \{ \frac{1}{K+1} \sum_{i=0}^K \mathds{1} (\mu_i \geq \mu_0) \leq \alpha \right \} \mid \mu_0, T_N \right ] \\
    &\leq \mathbb{E}_{\alpha=1} \left [  \mathds{1} \left \{ \sum_{T_{N,B}^{\ell,i}} \mathbb{P}(T^{\ell, i}_{N,B}| T_N) \;\; \mathds{1} \big (\mu(T_{N,B}^{\ell,i}\big ) \geq \mu_0) \leq \alpha \right \} \mid \mu_0, T_N  \right ], 
\end{align*}
since the sample proportion converges to the population proportion by the Law of Large Numbers as $K \rightarrow \infty$, the last inequality results from Portmanteau's Lemma \citep{van2000asymptotic}. Then 
\begin{align*}
    \mathbb{P}_{\alpha=1} (P_S \leq \alpha \mid T_N ) &\leq \mathbb{E}_{\alpha=1}\left [ \mathbb{E}_{\alpha=1} \left \{ \mathds{1} \left ( \sum_{T_{N,B}^{\ell,i}} \mathbb{P}(T^{\ell, i}_{N,B}| T_N) \;\; \mathds{1} \big (\mu(T_{N,B}^{\ell,i}\big ) \geq \mu_0) \leq \alpha \right ) \mid \mu_0, T_N  \right \} \right ]  \\
    &= \sum_{T_{N,B}^{\ell,j}} \mathbb{P}(T^{\ell, j}_{N,B}| T_N) \;\; \mathds{1} \left ( \sum_{T_{N,B}^{\ell,i}} \mathbb{P}(T^{\ell, i}_{N,B}| T_N) \;\; \mathds{1} \big (\mu(T_{N,B}^{\ell,i}) \geq \mu(T_{N,B}^{\ell,j}) \big ) \leq \alpha \right)
\end{align*}
Lemma A1 of \cite{harrison2012conservative} states for all $t_0,...,t_n \in [-\infty, \infty], \alpha, w_0,...,w_n \in [0,\infty]$, then \[ \sum_{k=0}^n w_k \mathds{1}\left (\sum_{i=0}^n w_i \mathds{1}(t_i\geq t_k) \geq \alpha \right ) \leq \alpha. \]
From this result, we deduce $\mathbb{P}_{\alpha=1} ( P_S \leq \alpha \mid T_N ) \leq \alpha$ and therefore $\mathbb{P}_{\alpha=1}(P_{S}\leq \alpha) \leq \alpha$. 
\end{proof}

\begin{theorem}
$p_S$ and $p_T$ are asymptotically valid p-values. 
\end{theorem}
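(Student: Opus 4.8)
The plan is to view both $p_S$ and $p_T$ as Monte Carlo approximations of idealized quantities and to show that (i) the idealized quantity is a valid (super-uniform) $p$-value and (ii) the Monte Carlo error vanishes as the replicate counts $M$ and $K$ tend to infinity, so that the validity bound transfers in the limit. Throughout I work under $H_0:\alpha=1$, conditioning on the observed ranked tree shape $T_N$ and on $(N,B)$, so that the observed labeling $T^{\ell,0}_{N,B}$ is itself a draw from $\mathbb{P}(T^{\ell}_{N,B}\mid T_N,B)$ of Proposition~\ref{eq:null_model_cond_prob}.

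For $p_S$ the idealized quantity is already available. Each $\hat{\mu}_i=\frac{1}{M}\sum_{m=1}^M S_{i,m}$ averages i.i.d.\ draws of $S$ over uniform planar representations of $T^{\ell,i}_{N,B}$, so the strong law of large numbers gives $\hat{\mu}_i\to\mu_i=\mathbb{E}[S\mid T^{\ell,i}_{N,B}]$ almost surely as $M\to\infty$. Off the event of exact ties $\mu_i=\mu_0$ the indicators $\mathds{1}(\hat{\mu}_i\ge\hat{\mu}_0)$ converge to $\mathds{1}(\mu_i\ge\mu_0)$, so $p_S\to P_S$, where $P_S$ is exactly the statistic of Lemma~\ref{lemma:P_S} (its leading $1$ being the $i=0$ term). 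Lemma~\ref{lemma:P_S} supplies $\mathbb{P}_{\alpha=1}(P_S\le\alpha)\le\alpha$, and since $\{x\le\alpha\}$ is closed, the Portmanteau lemma gives $\limsup_{M\to\infty}\mathbb{P}_{\alpha=1}(p_S\le\alpha)\le\mathbb{P}_{\alpha=1}(P_S\le\alpha)\le\alpha$. I would handle the ties separately, leaning on the conservative $\ge$ convention to ensure the limit is not anti-conservative.

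For $p_T$ I would pass to the limit in two stages. Write $q(s)=\big(1+\sum_{j=1}^K\mathds{1}(S_j\ge s)\big)/(1+K)$, so that $p_T=\frac{1}{M}\sum_{i=1}^M q(S^{obs}_i)$. The key structural observation is that under $H_0$, after marginalizing over $T^{\ell,0}_{N,B}$, a single observed statistic $S^{obs}_i$ and the null sample $S_1,\dots,S_K$ are exchangeable draws from one common null law $\mathbb{P}_{\alpha=1}(S\mid T_N,B)$; hence each term $q(S^{obs}_i)$ is a classical Monte Carlo permutation $p$-value and satisfies $\mathbb{P}_{\alpha=1}(q(S^{obs}_i)\le\alpha)\le\alpha$ on its own. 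Letting $K\to\infty$ gives $q(s)\to\bar{F}(s):=\mathbb{P}_{\alpha=1}(S\ge s\mid T_N,B)$, and letting $M\to\infty$ collapses the outer average to the deterministic functional $\psi(T^{\ell,0}_{N,B})=\mathbb{E}[\bar{F}(S^{obs})\mid T^{\ell,0}_{N,B}]$, so the asymptotic $p_T$ is $\psi$ evaluated at the observed tree.

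The hard part will be the averaging step, because $S^{obs}_1,\dots,S^{obs}_M$ are only conditionally i.i.d.\ given the observed tree and are marginally dependent through it, so the validity of the individual $q(S^{obs}_i)$ does not pass to their average by any elementary combination rule (the generic bound for averaging dependent $p$-values loses a factor of two, and in degenerate null laws the bound $\mathbb{P}_{\alpha=1}(\psi\le\alpha)\le\alpha$ is delivered only in the small-$\alpha$ rejection regime that actually matters for testing). My plan is therefore to prove validity directly for the limiting functional $\psi$: working with the null weights $\mathbb{P}(T^{\ell}_{N,B}\mid T_N,B)$ and the super-uniformity of the survival transform $\bar{F}$ on the discrete support of $S$, I would replicate the weight-and-indicator inequality of \cite{harrison2012conservative} that already drives Lemma~\ref{lemma:P_S}, taking care over the discreteness of $S$ and over the order in which $M$ and $K$ are sent to infinity. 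Finally, a Portmanteau argument identical to the one used for $p_S$ transfers the resulting bound from the limit back to the finite-$(M,K)$ permutation $p$-value $p_T$, whose Monte Carlo error vanishes almost surely.
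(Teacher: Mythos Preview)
Your treatment of $p_S$ is essentially the paper's: $\hat\mu_i\to\mu_i$ by the law of large numbers, indicator convergence, $p_S\to P_S$, and Lemma~\ref{lemma:P_S} plus a Portmanteau step finish. Nothing to add there.

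For $p_T$ you take a genuinely different and harder route than the paper. You send both $K$ and $M$ to infinity, land on the functional $\psi(T^{\ell,0}_{N,B})=\mathbb{E}[\bar F(S^{obs})\mid T^{\ell,0}_{N,B}]$, and then propose to prove super-uniformity of $\psi$ from scratch via a Harrison-type inequality; you explicitly flag this step as the hard part and do not carry it out. The paper avoids this difficulty entirely with two short moves you miss. First, it lower-bounds $p_T$ by $p_T'=\frac{1}{M}\sum_{i}\frac{1}{K}\sum_{j}\mathds{1}(S_j\ge S_i^{obs})$ (dropping the $+1$'s), so it suffices to prove validity of $p_T'$. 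Second, and crucially, it \emph{swaps the order of summation}: $p_T'=\frac{1}{K}\sum_{j}\bigl[\frac{1}{M}\sum_{i}\mathds{1}(S_j\ge S_i^{obs})\bigr]$. Now only $M\to\infty$ is needed; the inner average converges to $f(S_j)=\mathbb{P}_{\alpha=1}(S_j\ge S^0\mid T^{\ell,0}_{N,B})$, and $\frac{1}{K}\sum_j f(S_j)$ is recognized directly as a permutation $p$-value (under $H_0$ the observed labeling and the $K$ relabeled trees are exchangeable, which is exactly the structural observation you already made). The swap turns your ``average of dependent $p$-values'' problem into a single permutation $p$-value, so the factor-of-two loss and the direct Harrison argument you were bracing for are simply not needed. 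Your plan is plausible and could likely be completed, but it is substantially more work; the paper's reordering trick is the missing idea that makes the $p_T$ half a two-line argument.
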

\begin{proof}
To show the result for $p_S$, notice that $\hat{\mu}_i \xrightarrow{p} \mu_{i}=\mu(T^{\ell,i}_{N,B})$ by the Law of Large Numbers as $M\rightarrow \infty$. Therefore, by the Continuous Mapping Theorem, $\hat{\mu}_i - \hat{\mu}_0 \xrightarrow{p} \mu_i-\mu_0$. By Lemma~\ref{lemma:indicator_conv} in Appendix Section~\ref{subsec:lemma_conv}, for all $i=1,...,K$, \[\mathds{1}(\hat{\mu}_i - \hat{\mu}_0 \geq 0 ) \xrightarrow{p}\mathds{1}(\mu_i - \mu_0 \geq 0 ),\] Again applying Continuous Mapping Theorem, $p_S \xrightarrow{p} P_S$ as $M\rightarrow \infty$. Hence, $\lim_{M \rightarrow \infty} \mathbb{P}_{\alpha=1}(p_S \leq \alpha) = \mathbb{P}_{\alpha=1}(P_S \leq \alpha) \leq \alpha$  for all $\alpha\in [0,1]$, by Lemma~\ref{lemma:P_S}.  \\ 

\noindent To show the result for $p_T$, first note that with probability 1, \[ p_T \geq \frac{1}{M} \sum_{i=1}^M \frac{1}{K} \sum_{j=1}^K \mathds{1}(S_j \geq S^{obs}_i) :=p_T' \] and therefore it suffices to show $p_T'$ is a valid p-value. Switching the order of summation gives $p_T' = \frac{1}{K} \sum_{j=1}^K \frac{1}{M} \sum_{i=1}^M \mathds{1}(S_j \geq S_i^{obs})$. First consider $S_j$ as fixed. As $M$ increases \[ \frac{1}{M} \sum_{i=1}^M \mathds{1}(S_j \geq S_i^{obs}) \xrightarrow{a.s.} \mathbb{P}_{\alpha=1}(S_j \geq S^0 \mid T_{N,B}^{\ell, 0} ):= f(S_j) \] where $S^0 \sim \mathbb{P}_{\alpha=1}(S \mid T_{N,B}^{\ell, 0} )$. Now $\frac{1}{K}\sum_{j=1}^K f(S_j)$ is a permutation p-value, which implies $p_T$ is an asymptotically valid p-value.
\end{proof}

In our simulations, we take $K=M$ between 200 and 500. We choose to use a permutation test instead of a likelihood-ratio test because we do not have an analytical expression for the likelihood of a ranked (non-planar) partially labeled tree shape under the alternative.

\subsection{Testing in the Bayesian framework}
In the previous section we assumed that a ranked and partially labeled phylogeny was observed, however, phylogenies are typically not directly observed. Here we consider the case when one would use BEAST, or other Bayesian inference implementation, to generate a posterior distribution of the trees given molecular sequence data \citep{suchard2018bayesian, ronquist2012mrbayes}. The tip label information (i.e. colors) is not used to generate these posterior trees. To account for phylogenetic uncertainty, we propose to simply estimate the posterior distribution of p-values $p_T$ or $p_S$, and reject the null hypothesis according to whether there is posterior evidence of the p-values being smaller than the significance value. \\

We compare our method to BaTS (Bayesian Tip-association Significant testing) proposed by \cite{Parker2008}, where a test statistic is obtained for each tree in the posterior sample, and the posterior median $m_0$ is used as the test statistic. Next, $n$ random permutations of the color labels $\sigma_{1},\ldots,\sigma_{n}$ are generated such that all trees in the posterior distribution are relabeled according to the same permutation. From each permutation, a median statistic is obtained to generate a null posterior distribution of the median test statistic.  The p-value is obtained by calculating the proportion of $m_i$ values that are more extreme than the observed $m_0$. Notice that BaTS is effectively ignoring the Bayesian uncertainty by using the posterior median as the test statistic of interest, and therefore may yield small credible intervals. We will investigate this further in Section~\ref{sec:6}. 

\subsection{Power of the test by MCMC}
To estimate the power of the test under different alternatives, we approximate the distribution of $\hat{\mu}$ or $S$ under our model with $\alpha \neq 1$, conditional on the observed ranked tree shape $T_{N}$. We approximate this distribution via Metropolis-Hastings \citep{Hastings1970}. Given current state $\tilde{T}^{\ell, y}_{N,B}$, our proposal distribution generates $\tilde{T}^{\ell, x}_{N,B}$ by first assigning tip labels uniformly and then a node is chosen uniformly at random to swap its left and right subtrees, among those that are not cherries of the same type. This proposal is symmetric and therefore, our acceptance probability is simply $r= \frac{\mathbb{P}_\alpha(\tilde{T}^{\ell, x}_{N,B})}{\mathbb{P}_\alpha(\tilde{T}^{\ell, y}_{N,B})}$. Uniformly sampling the labels is more efficient than a local label move given that we only have two unique tip labels. Since all proposals are generated conditioning on a given tree shape, the stationary distribution of the Markov chain is $\mathbb{P}_\alpha(\tilde{T}^{\ell, i}_{N,B} \mid T_{N},B)$. In practice, we found that generating $M$ Metropolis-Hastings steps of planarity swaps per one step of relabeling of the tips, improves the mixing of the chain considerably.

\section{Simulation Results}\label{sec:6}

We first use our method to test for phylogenetic trait association for the tree in Figure \ref{fig:trait_ex}(b) with a test statistic value of $\hat{\mu}_0=10.27$. Here we assumed a number of $M=K=500$ of planar and label permutations to obtain the two p-values $p_S=0.719, p_T= 0.746$. The two plots in Figure~\ref{fig:hist_tree_25} show the approximate null distributions of $\hat{\mu}_0$ and $S$ conditioned on $T_N$ and $B$. In this case, we have strong support for no phylogenetic association, that is, we would not reject the null hypothesis of no association at the $5\%$ significance level. \\

\begin{figure}[h]
    \centering
    \includegraphics[width=0.95\textwidth]{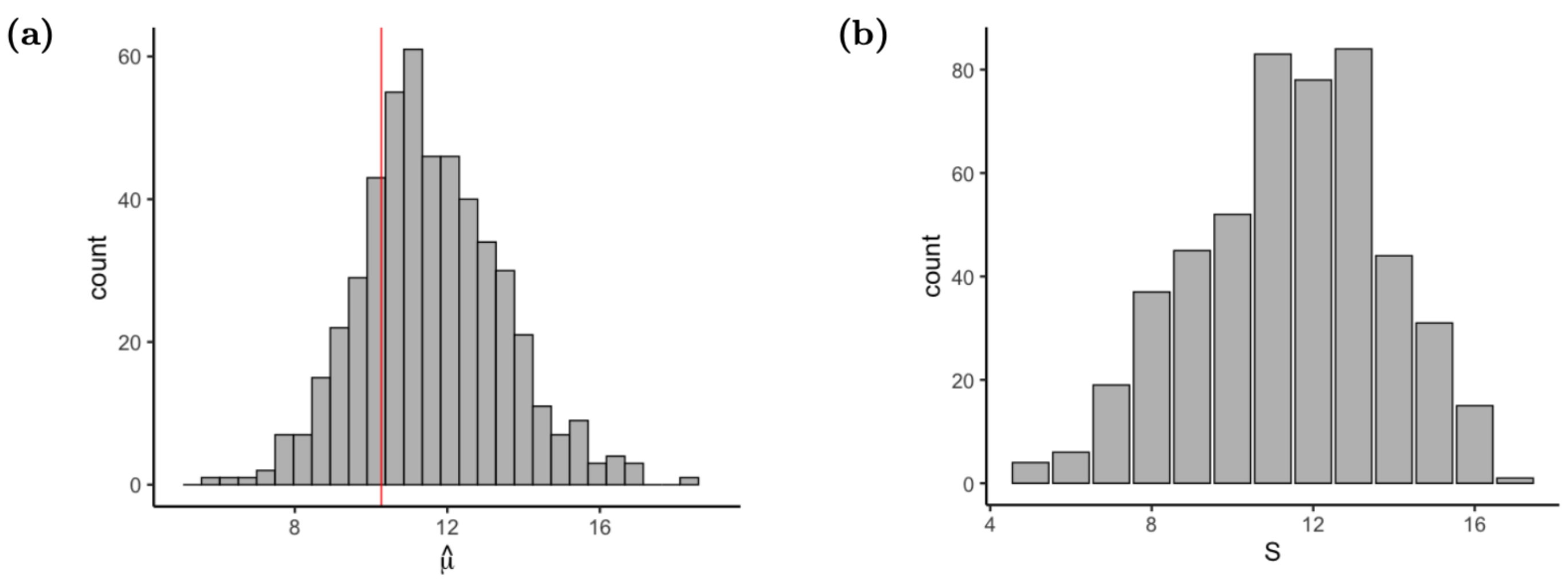}
    \caption{\textbf{Analysis of the fixed tree $T^{\ell}_{25,15}$ in Figure~\ref{fig:trait_ex}(b).} (a): Approximate null distribution of $\hat{\mu}$, with observed value in red. (b): Approximate null distribution of $S$.}
    \label{fig:hist_tree_25}
\end{figure}

In order to compare the performance of our two proposed tests, we simulated 200 random ranked partially labeled tree shapes according to the CRP-Tree with the following parameter values: $N\in \{20, 50, 100, 200, 500\}$, $\alpha \in \{1, 2, 5, 10, 25\}$, and fraction of one the types $B/N \in \{0.1, 0.25, 0.5\}$. For the results depicted in Table~\ref{table:alpha_testing_results} we assume a significance level of 5\%, $M=K=200$ planar and label permutations. In general, both methods concur (97.7\% of the time when $\alpha=1$, and 87.2\% of the time when $\alpha>1$). When the trees are simulated under $\alpha=1$ (left table), $p_{S}$ is correctly above the significance threshold only few more times than $p_{T}$ ($99.8\%$ vs $97.5\%$). When the trees are simulated under $\alpha>1$, $p_{T}$ is correctly below the significance $75.4\%$ of the time, and $p_{S}$ $62.5\%$ of the time. We conclude that $p_S$ is more conservative than $p_T$ and that both $p_T$ and $p_S$ control the Type 1 error rate. 

\begin{table}[h]
\centering
\begin{tabular}{|rrrr|}
     \hline
    $\alpha=1$ &  & $p_S$ & \\ \hline
    &  & Do not reject & Reject \\
    $p_T$ & Do not reject & 2925 &   0 \\ 
     & Reject &  69 &   6 \\ 
   \hline
\end{tabular}
\hspace{0.05\linewidth}
\begin{tabular}{|rrrr|}
     \hline
    $\alpha>1$& & $p_S$ & \\ \hline
    &   & Do not reject & Reject \\
    $p_T$ & Do not reject & 2951 &   0 \\ 
     & Reject &  1540 &   7509 \\ 
   \hline
\end{tabular}
\caption{\textbf{Contingency tables comparing $p_{T}$ and $p_{S}$}. Each entry represents the number of times the two methods rejected or not when the true value is $\alpha=1$ or $\alpha>1$. Simulations carried out across a range of $N$ and $B$ values.}
\label{table:alpha_testing_results}
\end{table}

\subsection{Power analyses}\label{subsec:power_sim}
We investigate the power of our tests and compare them to that of Parsimony score (PS), Association Index (AI) and treeSeg via simulation. The Parsimony Score (PS) \citep{Fitch1971} counts the minimum number of state changes in the phylogeny in order to reconstruct the states at the parent nodes. Here, the states at the tips are binary and so we use the Fitch algorithm for the score computation. Small values imply strong phylogenetic trait association. The Association Index (AI) \citep{Wang2001} is defined by $AI = \sum_{i=1}^{N-1} \frac{1-f_i}{2^{m_i-1}}$, where $m_i$ is the number of tips subtended by internal node $i$ and $f_i$ is the frequency of the most common trait value among the tips subtended. Note that smaller value of AI implies stronger phylogenetic trait association, because the numerator $1-f_i$ is smaller for larger $f_i$. We also apply the changepoint detection method implemented in treeSeg \citep{Behr2020} to the sample of trees. In this case, detection of at least one changepoint corresponds to rejecting the null hypothesis. \\

We first generated the five different ranked tree shapes depicted in Figures~\ref{fig:tree_25_random_power}-\ref{fig:tree_100_unb_power}. Three of these tree shapes were generated uniformly at random $(N=25, 50, 100)$ while the other two tree shapes are the most balanced and most unbalanced trees $(N=100)$ according to the criteria defined in \citet{rajanala2021statistical}. For each ranked tree shape, we approximated the power of the test under 4 different frequencies of the two types: $B/N \in \{0.1,0.25,0.4, 0.5\}$, and under 4 alternatives: $\alpha \in \{2,5,10,20\}$. We set $M=300$ for calculating $\hat{\mu}_{0}$ and $\{S^{obs}_{i}\}^{M}_{i=1}$ and generated $500$ planar and label MCMC steps. \\

The power approximations are displayed in the tables of Figures~\ref{fig:tree_25_random_power}-\ref{fig:tree_100_unb_power} in the Appendix. We see that our two methods have much better power than any of the pre-existing statistics AI and PS, as well as treeSeg in all cases. We generally observe increasing power as $B/N$ and $\alpha$ increase for each fixed tree. We note that we do not require a very large tree in order to be able to detect phylogenetic trait association. However, we do notice that less balanced tree shapes and an imbalance of label types may result in lower power. \\

We extend our power study to 100 randomly simulated ranked tree shapes with $N=50$. The boxplots for the power of each test under $B/N \in \{0.1, 0.5\} $ and $\alpha\in \{2,5,10,20\}$ are depicted in Figure~\ref{fig:power_boxplot}. We confirm that our methods consistently perform better than AI and PS, with the $\hat{\mu}$ statistic achieving the highest power overall. We do not compare to treeSeg due to its high computational time. 

\begin{figure}[h]
  \centering
   \includegraphics[width=0.75\linewidth]{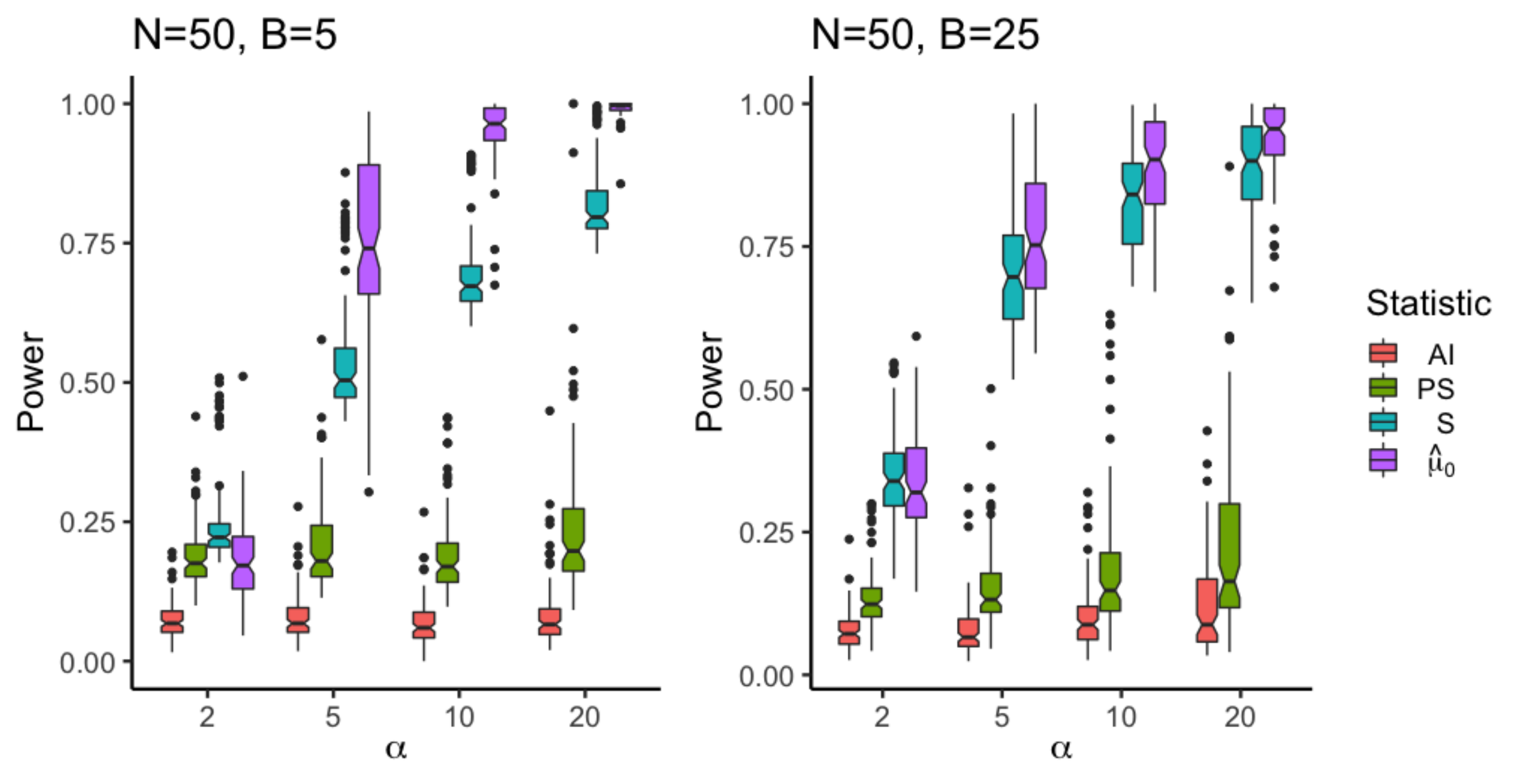}
   \caption{\textbf{Power simulations for 100 tree shapes with $N=50$.} The left plot shows the power for $B=5$ and the right plot shows the power for $B=25$.}
   \label{fig:power_boxplot}
\end{figure}

\subsection{Posterior Validation of p-values}\label{subsec:bayesian_sim}
As a validation check in the Bayesian setting, we first simulated two phylogenies ($N=50$, $B=20$) from the CRP-Tree model with $\alpha=1$ and $\alpha=10$ respectively, and simulated DNA sequences at the tips of each phylogeny. We then used BEAST \citep{suchard2018bayesian} to estimate the two posterior distributions and tested the null hypothesis of $\alpha=1$ in both cases. Details on simulation experiment can be found in Appendix Section~\ref{subsec:dna}. We compared the posterior distributions of the p-values obtained with our test to the p-values obtained with BaTS. When $\alpha=1$, the posterior mean p-value obtained with our method is $0.86$, and the posterior median is $0.866$. When $\alpha=10$, the posterior mean p-value is $0.0021$ and the posterior median is $0.0021$. The posterior distributions of the $p$-values are depicted in Figure~\ref{fig:bayesian_simulation}. In both cases, the user would have correctly concluded the true association. However, the BaTS p-values are both 0, which implies it would incorrectly reject the first case of $\alpha=1$. We will show in the next section that we usually obtain concordant conclusions from the posterior distribution of p-values and BaTS p-values.

\begin{figure}[h]
    \centering
    \includegraphics[width=0.95\textwidth]{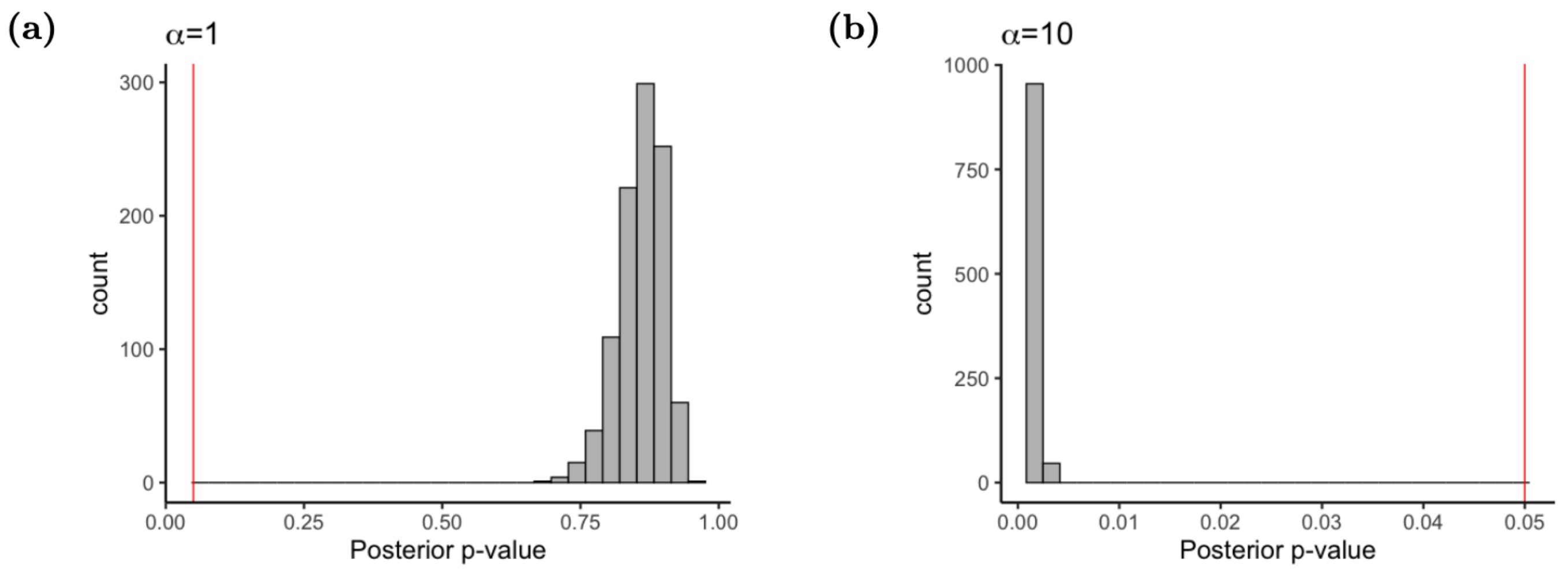}
    \caption{\textbf{Posterior distributions of p-values.} Posterior distribution of $p_T$ for the simulation with (a) $\alpha=1$, and (b) $\alpha=10$. Red line is marks p-value of 0.05.}
    \label{fig:bayesian_simulation}
\end{figure}

\section{Case studies}\label{sec:7}
We first apply our tests to two real data studies in which the ranked partially labeled tree shapes are available (without known uncertainty). We then apply our test to two studies in which the posterior distribution of trees is estimated via MCMC from molecular sequence data.

\subsection{A breast cancer study}
We re-analyze a publicly available breast cancer gene expression study from 98 patients \citep{Vantveer2002} in which more than five thousand genes were found to be significantly associated to breast cancer, out of a pool of approximately 25 thousand genes. An additional six clinical responses were collected: BRCA mutation, estrogen receptor expression, histological grade, lymphocytic infiltration, angioinvasion, and development of distant metastasis within 5 years, although this last variable had missing data and is excluded in this study. \cite{Behr2020} apply a hierarchical clustering algorithm using a similarity metric on these regulatory genes to create a tree. The results of our tests for tree association to each of these clinical responses are shown in Figure~\ref{fig:behr_ex}. The only trait that is not rejected with our methods is the association of the angioinvasion trait with the tree structure. These results are consistent with \cite{Behr2020}. 

\begin{figure}[h]
    \centering
    \includegraphics[width=0.6\textwidth]{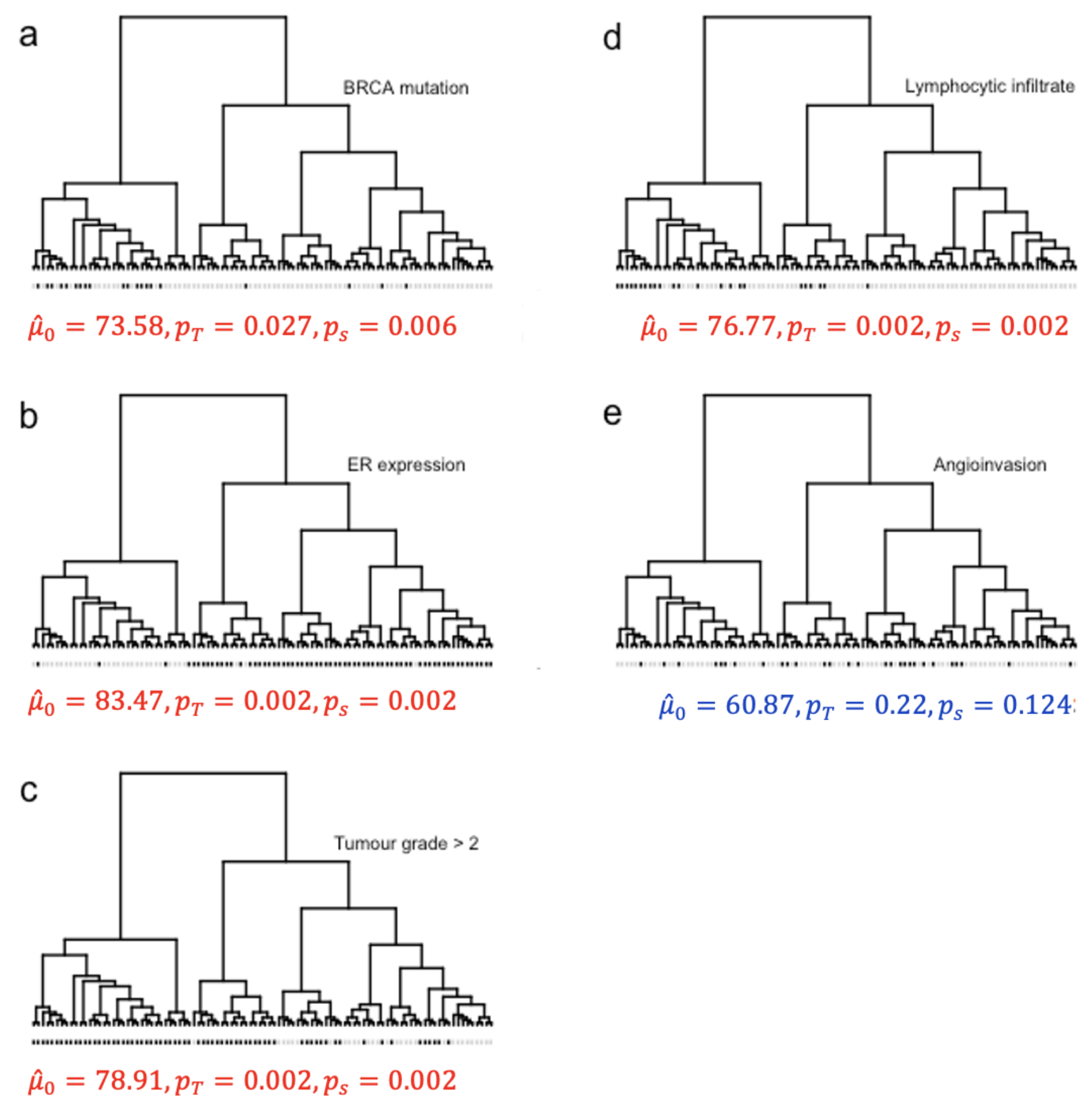}
    \caption{\textbf{Phylogenetic association tests of five clinical responses associated to breast cancer $(N=98)$.} The tick marks indicate whether the subject had the trait or not. The angioinvasion trait in (e) is the only case where we would not reject the null hypothesis with our tests.}
    \label{fig:behr_ex}
\end{figure}

\subsection{Sexually attractive traits in Swordtail fish}
In evolutionary biology, swordtail fish (\textit{Xiphophorus}) are a classic model for studying sexual selection \citep{darwin1871}. Decades of research have shown that females have preferences for large male body size \citep{ryan1987asymmetries, Rosenthal1998, cummings2006} and several sexually selected ornaments, including the ``sword'' ornament for which the genus is named \citep{rosenthal2001shared, basolo2002}. \cite{Preising2022} collected wild-caught individuals and used whole genome sequencing to infer phylogenetic relationships and examine the co-evolution of certain traits within the \textit{Xiphophorus} clade. Their phylogenetic tree, constructed via maximum likelihood, is shown in Figure~\ref{fig:fish_tree_original}. Here, our interest is to test for phylogenetic association of the two main traits: presence/absence of the sword, and whether the size of the body is larger than 26.5 inches. \\ 

\begin{figure}[h]
    \centering
    \includegraphics[width=0.65\linewidth]{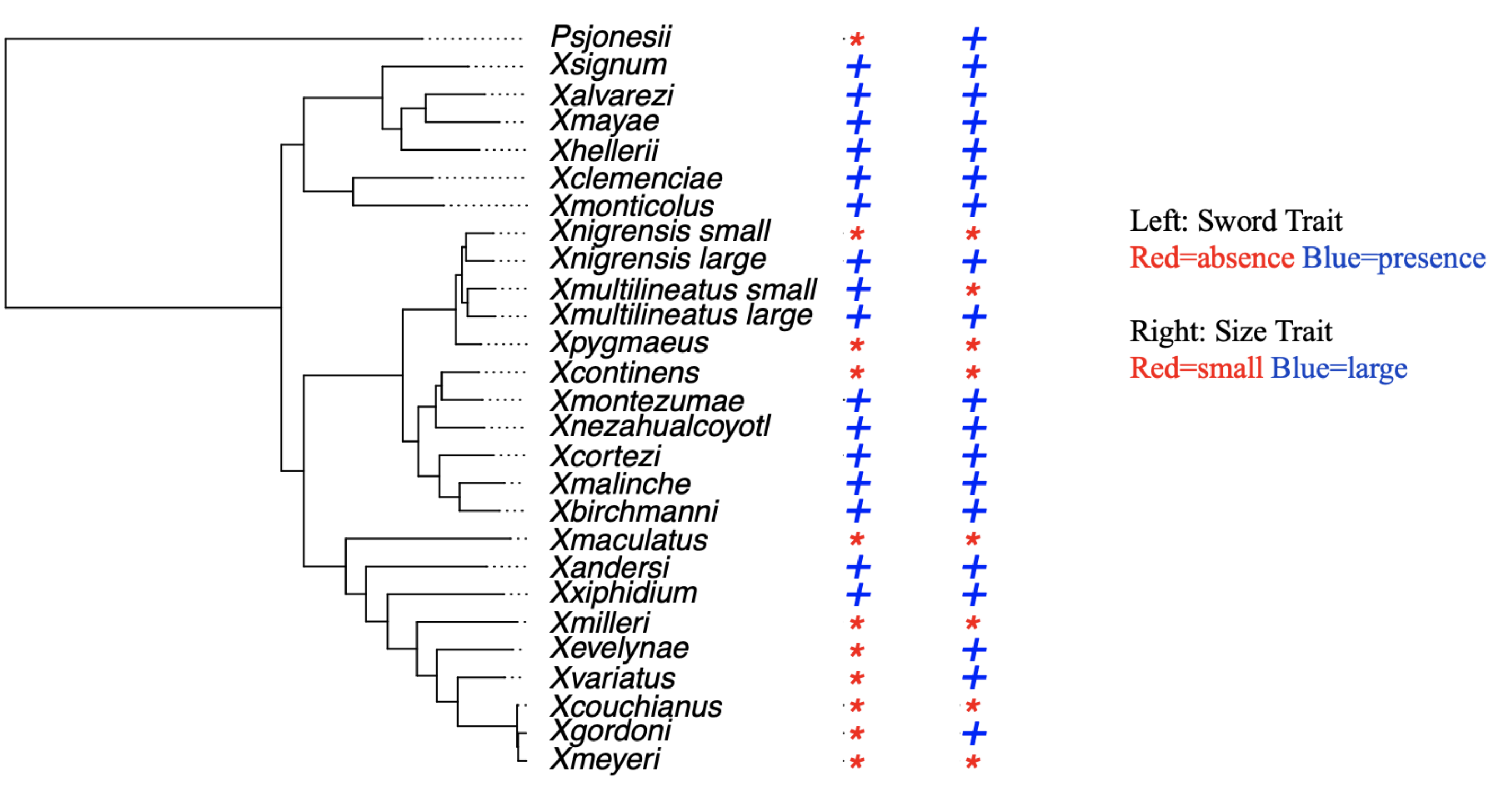}
    \caption{\textbf{A phylogeny of 27 swordtail fish species and the two traits of interest.} For the sword trait, $B=16$ and for the size trait, $B=19$.}
    \label{fig:fish_tree_original}
\end{figure}

We generated 500 planar permutations and 500 label permutations to approximate the null distribution of $\hat{\mu}$ for the two traits (Figure~\ref{fig:fish_hists}). For the size trait, we obtained $\hat{\mu}_0=13.3$, $p_T=0.714$, and $p_S= 0.686$, and so we conclude that the size of the fish is not associated with the phylogeny. This is somewhat expected since the phylogeny is based on the whole genome and body size is associated with a small number of polymorphic sites \citep{lampert2010determination}. For the sword trait, we obtained $\hat{\mu}_0=17.05, p_T= 0.0706, p_S= 0.018$ and so the presence/absence of the sword appears to be associated with the tree topology. 

\begin{figure}[h]
    \centering
  \includegraphics[width=0.95\textwidth]{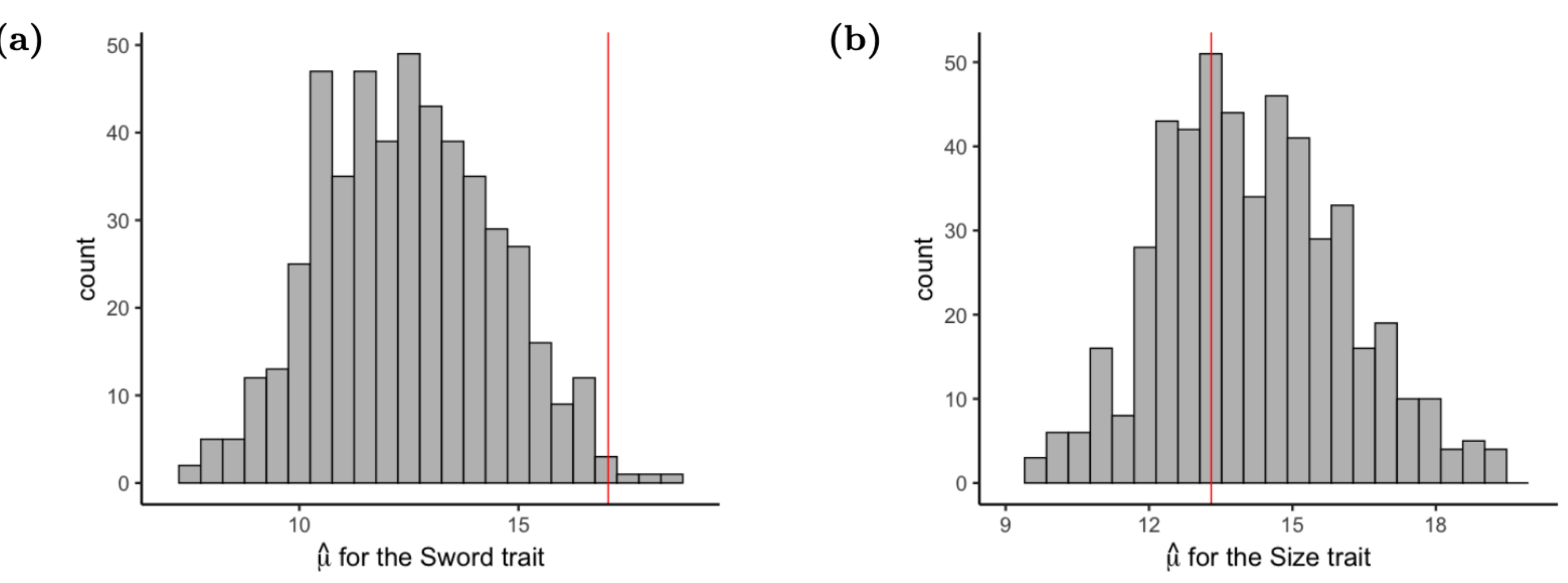}
    \caption{\textbf{Analysis of the swordtail fish sequences.} Null Distribution of $\hat{\mu}_0$ for the presence/absence of Sword trait (a) and large/small Size trait (b). Red line is the observed value.}
    \label{fig:fish_hists}
\end{figure}

\subsection{Transmission cycles of Brazilian yellow fever virus}
Brazil recently experienced yellow fever virus (YFV) outbreaks in multiple states causing more than 700 deaths between 2016 and 2018. YFV is the most severe mosquito-borne infection in South America. Non-human primates are usually infected by \textit{Haemagogus} spp. and \textit{Sabethes} spp., while humans are infected by \textit{Aedes} spp. In \citep{Faria2018}, the authors generated 65 complete viral genomes collected from 33 infected humans and 32 non-human primates across several states in Brazil during 2016-2017. In order to investigate whether the virus is spreading between human and nonhuman primates, the authors estimated the movement of YFV lineages between humans and nonhuman primates using a structured coalescent model. The authors estimated a variable rate of transmission from nonhuman primates to humans rising from zero around November 2016 and reaching a peak in February 2017. Here, we reanalyze the same sequences in order to investigate whether the virus is spreading between humans and nonhuman primates. If the virus was spreading between the two populations, there should not be a phylogenetic association with the human-nonhuman trait.\\

To incorporate phylogenetic uncertainty, we obtained a sample of 5,000 phylogenetic trees (after thinning every 100,000 iterations) from the posterior distribution generated with the software BEAST \citep{suchard2018bayesian}. The posterior distribution of phylogenies is completely agnostic to the human-nonhuman label. Details of model assumptions can be found Appendix Section~\ref{subsec:dna}. We calculate $\hat{\mu}_0$ and p-values $p_T$ for each tree in the posterior sample. We generated 500 permutations of the planarity to generate each $\hat{\mu}_0$ and 500 permutations of the labels to simulate each null distribution conditional on each ranked tree shape in the posterior. The posterior distribution of the p-values shown in Figure~\ref{fig:yfw_post_figure}(a) has a mean of 0.04 and median of 0.031, with 70\% of the values below the 0.05 significance level. This result suggests that the virus spreads mainly within each population. This new result contradicts the original finding of variable migration from nonhuman to human populations. However, the authors in the original study alert caution that hypotheses of human-to-human transmission should not be tested directly using phylogenetic data alone, due to large undersampling of NHP infections.

\begin{figure}[h]
    \centering
   \includegraphics[width=0.95\textwidth]{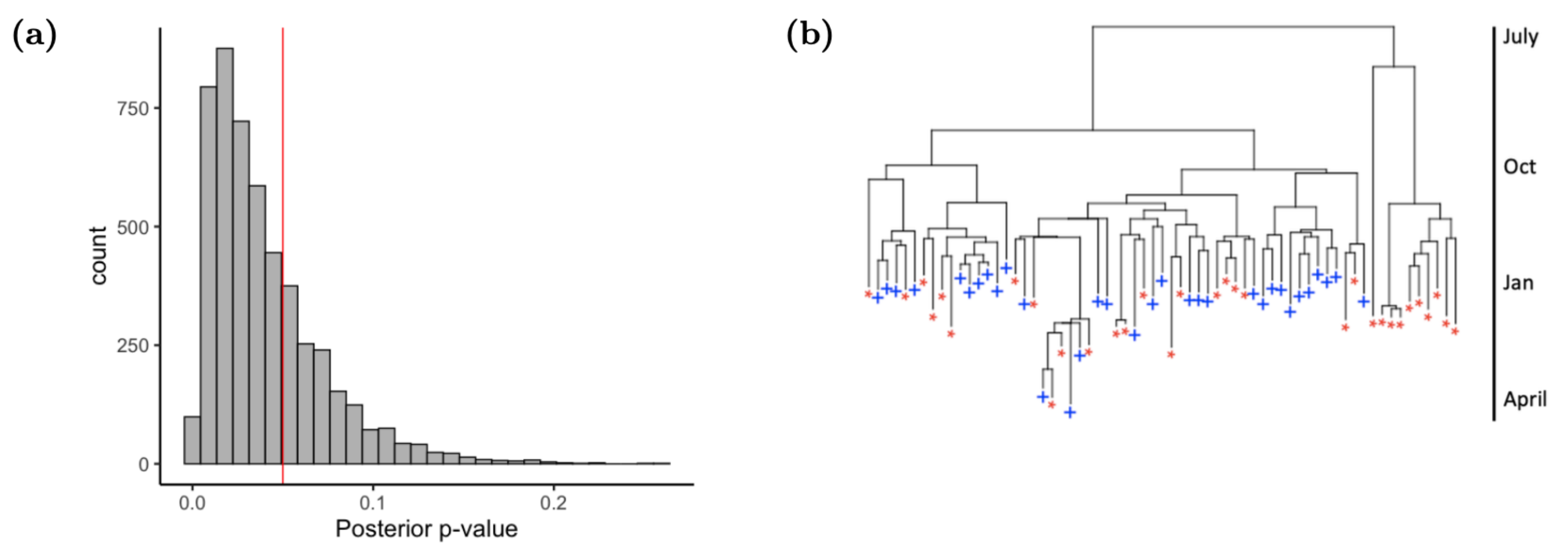}
    \caption{\textbf{Analysis of Yellow Fever Virus}. (a): Posterior distribution of p-values $p_T$ for the Yellow fever virus study in Brazil. Red line is placed at p-value of 0.05. (b): Maximum-clade credibility tree, with branch lengths indicative of the date of collected samples in 2017. Blue tips correspond to human samples and red to nonhuman primate samples.}
    \label{fig:yfw_post_figure}
\end{figure}

The observed posterior median value of $\hat{\mu}_0$ is 39.15. When we apply BaTS to our test statistic with 500 permutations of the label set, the 95\% credible interval obtained for the medians of $\hat{\mu}_0$ under the null is $[30.49, 31.48]$ which allows us to obtain the same conclusion with our method: observed data suggests some level of preferential attachment. \\

Finally, we obtained a single tree from the posterior distribution that corresponds to the maximum clade credibility tree in Figure~\ref{fig:yfw_post_figure}(b) and performed our test on this tree ignoring phylogenetic uncertainty. In this case, $\hat{\mu}_0= 42.62, p_T = 0.0066$, in agreement with our previous result. 

\subsection{Population structure in H1N1 Transmission}
In early 2009, the swine-origin influenza A (H1N1) virus originated as a novel combination of influenza genes. In just one year, the estimated number of H1N1 cases arose to more than 60 million in 2010 \citep{CDC2009}. \cite{Smith2009} study the origins and evolutionary genomics of the H1N1 pandemic using phylogenetic analyses on related virus genomes, such as H3N2, classical swine H1N1, and North American avian. \cite{suchard2018bayesian} analyze 50 H1N1 viral genome sequences, a subset of the original study to estimate the origin date of the pandemic, the growth, and basic reproductive number. A sample of 1,000 posterior trees was generated in BEAST (thinning every 10,000 states) without using the geographical location of the sequences. We use our test to determine if there is population structure in the transmission of H1N1, a question that was not investigated in the original study. \\

Since geographic location is not binary, we choose to split our data into USA (22 sequences) and non-USA (28 sequences). Using the same procedures as the previous example, we generated $M=500$ permutations of the planarity to generate each $\hat{\mu}_{0}$ and $K=500$ permutations of the labels to simulate each null distribution conditional on each ranked tree shape in the posterior. The posterior distribution of the p-values (Figure~\ref{fig:h1n1_usa_post_figure}(a)) has a mean of 0.0525 and median of 0.042, with close to 60\% of the values below the 0.05 significance level. Therefore, there seems to exist population structure in the spread of influenza between strands in the USA and otherwise. The observed posterior median value of $\hat{\mu}_0$ is 31.32. Using BaTS with 500 label permutations, the 95\% credible interval obtained for the medians of $\hat{\mu}_0$ under the null is $[23.43, 24.33]$ which allows us to obtain the same conclusion with our method: observed data suggests some level of preferential attachment.

\begin{figure}[h]
    \centering
   \includegraphics[width=0.95\textwidth]{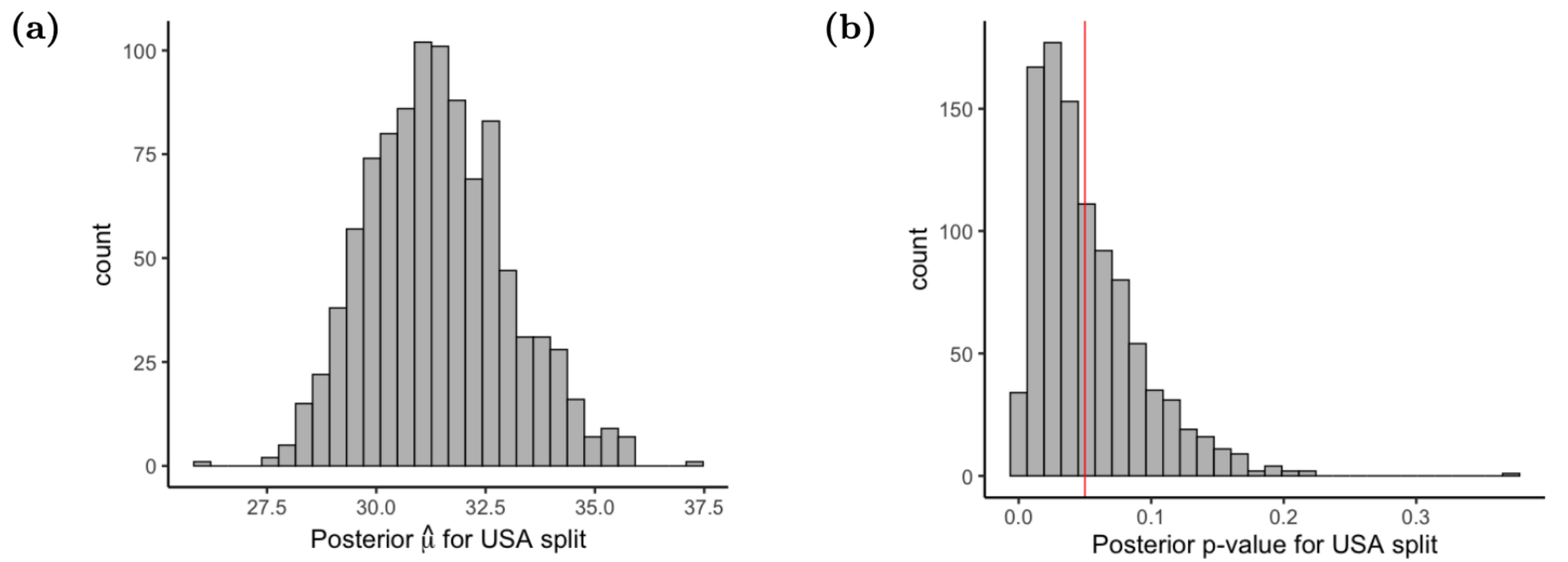}
    \caption{\textbf{Analysis of H1N1 Virus with USA/non-USA trait split}. Posterior distribution of $\hat{\mu}$ (a) and posterior p-values $p_T$ (b) calculated on a posterior sample of 1000 trees. Red line is 0.05.}
    \label{fig:h1n1_usa_post_figure}
\end{figure}

\section{Discussion and Extensions}\label{sec:8}

We propose a nonparametric permutation test of phylogenetic association with a binary trait. We empirically demonstrate that our test is more powerful than the parsimony score, the association index and treeSeg. We also showed that our test is computationally efficient. In average, testing takes about 7 seconds for a tree with 100 tips and 40 seconds for a tree with 500 tips. In particular, our test is much faster than treeSeg, which could not produce results for larger trees. In addition, we extend our test to the setting when the phylogeny is not directly observed and instead, a posterior distribution of phylogenetic trees is available. \\ 

The proposed test statistic, the mean number of same type attachments in a tree, is a sufficient statistic for the single parameter of the CRP-Tree model proposed in this manuscript. In the CRP-Tree model, this parameter controls the likelihood of lineages to attach to lineages of the same type and hence, the test statistic arises in a natural and interpretable way. However, our proposed CRP-Tree model cannot be used directly in a likelihood ratio test formulation. The difficulty lies in the fact that the CRP-Tree model is a model on planar and ranked trees, a finer tree resolution than the one usually observed (non-planar and ranked trees). Nevertheless, our proposed statistic is amenable to a permutation test, and in fact, the permutation p-value corresponds to the Monte Carlo p-value obtained by sampling planar ranked and partially labeled tree shapes with the same ranked tree shape as the observed tree. 
Although our permutation p-value is a special case to the ones described in \citep{hemerik2018exact,ramdas2022permutation}, the validity of our p-values relies on the fact that randomly permuting leaf labels and planarity, effectively generates i.i.d. samples from the conditional CRP-Tree model with $\alpha=1$ (null model). We note that our test statistic, the average number of same type attachments, is closely related to the parsimony score, the minimum possible value of different-type attachments. However, our statistic showed better power and overall performance in all scenarios considered.

\subsection{Possible extensions}
Our method is applicable to categorical traits with more than two possible values. Under the CPR-Tree model, a node attaches to a node of the same type or to a node of a different type, according to the same probabilities derived for the binary case. Therefore, the test statistic remains the same. However, in our experience, there usually is a large loss of power when the number of categories increases and therefore, we do not pursue it here.\\

Another possible extension is to allow for missing tip information. An effective way would consist in replacing our statistic, to the expected number of same type attachments conditional only on the partial labels, effectively, integrating out uncertainty in tip labels. Although we do not actively pursue this extension here, we do not anticipate any difficulty in its implementation. \\

We can also consider an extension of the CRP-Tree model to continuous trait values. The notion of preferentially attaching to a node of the same color is replaced by attaching to tips with smaller absolute difference. When $\alpha \to\infty$, new tips would always attach to the tip with the smallest absolute difference. Letting $Y_i$ denote the trait value for the $i$th tip being added, we define the test statistic to be $S= \sum_{i=1}^N |Y_i - Y_{d_i}|$, where $(i, d_i)$ is an attachment. If there is preferential attachment, then $S$ will be small. Notice that taking $Y_i \in \{0,1\}$ for binary traits gives $S$ equal to the number of same attachments. Again, permutation tests can be utilized for assessing whether there is any association or not. This bypasses the need to assume any parametric model of trait evolution, such as Brownian Motion \citep{Pagel1999, Blomberg2003}, or the Ornstein–Uhlenbeck process \citep{felsenstein1988phylogenies, butler2004phylogenetic}. \\

We are acutely aware that our test ignores branch length information. We could extend our model by allowing $\alpha$ to vary with time, where larger $\alpha$ would imply shorter branch lengths for an attachment of the same type. If $\alpha$ were allowed to vary with time, say according to a Poisson process, then we could test hypotheses on the subtrees sequentially to discover the location of these changepoints. That is, a subtree can be considered as a CRP-Tree realization with smaller values of $N,B$. In addition, we could place priors on $\alpha(t)$ and use Bayesian nonparametric methods for inferring $\alpha(t)$. \\

Finally, we note that method is not suitable for complex traits in which multiple genes are involved. An extension of our method for networks could be possible and subject of future research.

\newpage

\bibliography{main}

\newpage 

\section{Appendix}\label{sec:9}

\subsection{Expected Value of $S$}\label{subsec:expected_value}
Let $X_k$ be the indicator of the event that node $k$ attaches a node of the same color label, and let $\mathcal{C}=\{C_{1},\ldots,C_{N}\}$ be an ordering of attachments. Then, the sequence $\{W_k=w_{k}\}$ is known. We know $X_k \mid  \mathcal{C} \sim Bern( \frac{\alpha w_k}{(k-1-w_k) + \alpha w_k})$ and $X_k, X_j$ are independent for $k\neq j$. Therefore,
\begin{align*}
    \mathbb{E}[S| \mathcal{C}] &= \sum_{k=3}^N \frac{\alpha w_k}{(k-1-w_k) + \alpha w_k}, \\
    \mathbb{E}[S] &= \mathbb{E}\big [ \mathbb{E}[S | \mathcal{C}] \big ] = \sum_{k=3}^N\mathbb{E}\left [ \frac{\alpha W_k}{(k-1-W_k) + \alpha W_k} \right ] .
\end{align*}

In order to calculate $\mathbb{E}[S]$, we only need the distribution of $W_k$, which does not depend on $\alpha$. If we choose $k$ balls from Urn 1 without replacement, the number of balls that match the color of the $k$th ball out of the first $k-1$ chosen is $W_k$. Let $Y_k^{(B,N)}$ be the number of blue balls drawn after drawing $k$ balls without replacement from an urn with $B$ blue balls and $N-B$ red balls. Then $Y_k^{(B,N)} \sim$ Hypergeometric$(B,N, k)$, that is, for $i\in \{\max(0, k-(N-B)),..., \min(B, k)\}$: \[ \mathbb{P}(Y_k^{(B,N)}=i) = \frac{\binom{B}{i} \binom{N-B}{k-i}}{\binom{N}{k}}. \] 
Now, given that $k$ balls have already been drawn from our pool of $B$ blue and $N-B$ red balls, let $Z_k^{(B,N)}$ be the indicator that the $k$th drawn ball is blue. Then,  
\begin{align*}
    \mathbb{P}(W_k=i) &= \mathbb{P}(Y_{k-1}=i, Z_k=1) + \mathbb{P}(Y_{k-1}=k-1-i, Z_k=0) \\
    &= \mathbb{P}(Z_k=1| Y_{k-1}=i) \mathbb{P}(Y_{k-1}=i)+ \mathbb{P}(Z_k=0| Y_{k-1}=k-1-i)\mathbb{P}(Y_{k-1}=k-1-i) \\
    &= \frac{B-i}{N-(k-1)} \frac{\binom{B}{i} \binom{N-B}{k-1-i}}{\binom{N}{k-1}} + \frac{N-B-i}{N-(k-1)} \frac{\binom{B}{k-1-i} \binom{N-B}{i}}{\binom{N}{k-1}} \\
    &= \frac{B}{N} \times\frac{\binom{B-1}{i} \binom{N-B}{k-1-i}}{\binom{N-1}{k-1}} + \frac{N-B}{N} \times \frac{\binom{B}{k-1-i} \binom{N-B-1}{i}}{\binom{N-1}{k-1}} .
\end{align*}
We recognize that this probability is 
\begin{equation} \label{eq:pmf_W_k}
    \mathbb{P}( W_k=i) = \frac{B}{N} \times \mathbb{P}\left ( Y_{k-1}^{(B-1, N-1)}=i\right ) + \frac{N-B}{N}\times  \mathbb{P}\left (Y_{k-1}^{(N-B-1, N-1)}=i\right ) .
\end{equation}
The interpretation of the two Hypergeometric RVs is that $Y_{k-1}^{(B-1, N-1)}$ denotes the number of blue balls in $k-1$ drawn balls from $N-1$ total balls and $B-1$ blues and $Y_{k-1}^{(N-B-1,N-1)}$ denotes the number of red balls in choosing $k-1$ balls from $N-1$ total balls and $N-B-1$ red. Therefore, $W_k$ is a convex combination of two independent Hypergeometric random variables.\\

\noindent Expanding out the previous expression for $\mathbb{P}(W_k=i)$, we get 
\begin{align}
    \mathbb{P}(W_k=i) &= \frac{B! (N-B)! (k-1)!(N-k)!}{i! (B-i-1)! (k-1-i)!(N-B-(k-1-i))! N!} + \nonumber \\
    &\qquad \frac{B! (N-B)! (k-1)!(N-k)!}{i! (N-B-i-1)! (k-1-i)!(B-(k-1-i))! N!} \nonumber \\
    &= \left [ \frac{1}{(B-i-1)! (N-B-(k-1-i))!} + \frac{1}{(N-B-i-1)! (B-(k-1-i))!}  \right ] \times \nonumber \\
    &\qquad \frac{B! (N-B)! (k-1)!(N-k)!}{i! (k-1-i)! N!}\nonumber \\
    &= \frac{\binom{k-1}{i}}{\binom{N}{B}} \left [ \binom{N-k}{B-(i+1)} + \binom{N-k}{N-B-(i+1)} \right ] \label{eq:pmf_W_k_ver2} .
\end{align}
Here is the interpretation for this expression: Suppose we are just ordering the $N$ balls from 1 to $N$, there are a total of $\binom{N}{B}$ possible ways (the denominator). The event $\{W_k=i\}$ means there are $i$ balls in the first $k-1$ balls the same color as the $i$th ball. The term $\binom{k-1}{i}$ is the number of ways to choose the locations of these $i$ balls from $k-1$ spots. Next, the remaining $N-k$ spots have a total of $B-(i+1)$ blue balls if the $k$th ball is blue, and $N-B-(i+1)$ red balls of the $k$th ball is red, hence the result. \\

\noindent Using Equation (\ref{eq:pmf_W_k}) and the expected value of Hypergeometric random variables, we have \[ \mathbb{E}[W_k] = \frac{B}{N} \times \frac{(k-1)(B-1)}{(N-1)} + \frac{N-B}{N} \times \frac{(k-1)(N-B-1)}{(N-1)} .\] Then for $\alpha=1$, 
\begin{align*}
     \mathbb{E}[S] &= \sum_{k=3}^N \mathbb{E}\left [\frac{W_k}{k-1} \right ] \\
     &= \frac{(N-2)\big (B(B-1)+ (N-B)(N-B-1)\big )}{N(N-1)} \\
     &= (N-2) - \frac{2B(N-B)(N-2)}{N(N-1)}.
\end{align*}
For $\alpha > 1$, we can use our alternative formulation Equation (\ref{eq:pmf_W_k_ver2}). 
\begin{align*}
    \mathbb{E}\left [ \frac{\alpha W_k}{(k-1-W_k) + \alpha W_k} \right ] &= \sum_{i=I_{\min}}^{I_{\max}} \frac{\alpha i}{(k-1-i) + \alpha i} \frac{\binom{k-1}{i}}{\binom{N}{B}} \left [ \binom{N-k}{B-(i+1)} + \binom{N-k}{N-B-(i+1)} \right ] \\
    &= \frac{B}{N} \times \sum_{i=I_{\min}}^{I_{\max}} \frac{\alpha i}{(k-1-i) + \alpha i} \frac{\binom{k-1}{i} \binom{N-k}{B-(i+1)}}{\binom{N-1}{B-1}} + \\
    &\qquad  \frac{N-B}{N}\times \sum_{i=I_{\min}}^{I_{\max}} \frac{\alpha i}{(k-1-i) + \alpha i} \frac{\binom{k-1}{i} \binom{N-k}{N-B-(i+1)}}{\binom{N-1}{N-B-1}} .
\end{align*}
Here, $I_{\min}, I_{\max}$ just denote the support. With the convention $\binom{a}{b}=0$ if $a<b$, we can take $I_{\min}=0, I_{\max}= k-1$ for each $k$. This can be calculated numerically. 

\newpage 

\subsection{Algorithm to reconstruct the sequence of attachments, initial color order, and the order of tips added.} \label{subsec:tree_alg}

\begin{algorithm}[H]
\caption{Algorithm to reconstruct $C$ and the sequence of attachments from $\tilde{T}^\ell_{N,B}$}
\label{alg:get_attachments}
\KwData{}
Color labels $L=\{L_{t_1},..., L_{t_N}\}$: colors of the tips $t_1,..., t_N$ on the planar tree from left to right. \\
$i=1,...,N-1$: ranks of internal nodes from bottom to top. \\
    
\vspace{0.1in}

\KwResult{}
$A= [ \;\; ]$: an ordered list of the tips added. \\
$C = [\;\; ]$: an ordered list of the colors added. \\ 
$D= [\;\; ]$: an ordered list of the tips that have been attached to. \\
Integer $S$: the number of same attachments. \\
Array $\{w_k:k=3,...,N\}$. \\ 

\vspace{0.1in}

Initialize $S =0$ and $t_k=0$ for $k=1,...,N$. \\
\For{$i=1,...,N-1$}{
  Compute $l_i \in \{1,...,N\}$: the right-most leaf of the left subtree of internal node $i$. \\
  Compute $r_i \in \{1,...,N\}$: the right-most leaf of the right subtree of internal node $i$ \\
  Set $t_{l_i}=N+1-i$ and $C_{N+1-i} = L_{l_i}$. \\
  Set $A_i = t_{l_i}$: this is the tip being added. \\
  Set $D_i= t_{r_i}$: this is the tip being attached to. \\
  \If{$L_{l_i} = L_{r_i}$}{$S \gets S+1$}
}
Compute $w_k = \sum_{i=1}^{k-1} \mathds{1}(C_i=C_k)$ for each $k=3,...,N$. \\
\end{algorithm}

\subsection{Conditions for a list of tables to be valid}\label{subsec:table_conditions}
Let $\{E^1, E^2,...,E^T\}$ be the list of tables under consideration. To check it corresponds to a ranked planar partially labeled tree with $N$ tips, it must satisfy the following conditions. As per standard terminology, let $|E^t|$ denote the cardinality of $E^t$, which is the length of the table $t$. 
\begin{enumerate}
    \item The total number of lists $T$ satisfies $1\leq T \leq N$. 
    \item The total length of the lists $|E|=\sum_{t=1}^T |E^t|$ is either $|E|=N +T-1$ or $|E|=N+T-2$.
    \item Each element $\{1,...,N\}$ must appear at least once in $\{E^1, E^2,...,E^T\}$. 
    \item There are no repeat elements in $E^t$ for all $t=1,...,T$.
    \item If $E^t$ has length 1, then $E^t$ can be only $(1)$ or $(2)$. 
    \item For $E^t = (E^t_1,...,E^t_n)$ where $n\geq 2$: for all $i=1,...,n-1$, there exists $j$ such that $i<j$ and $E^t_i > E^t_j$. That is, there exists a number smaller than $E^t_i$ to the right of $E^t_i$. For example, if $|E^t| =2$, then $E^t_1 > E^t_2$. 
    \item The element 1 cannot appear to the left of any element in any list. 
    \item The element 2 can appear to the left of 1 only if $E^t=(1)$ does not exist. 
    \item Elements $3,...,N$ appear to the left of some elements in exactly one list. 
\end{enumerate}

\subsection{Details of tree topology simulation}\label{subsec:tree_topology_details}
We give the details of the two simulations conducted in Section~\ref{subsec:tree_topology}. To simulate the distribution of cherries and pitchforks, we
simulated 200 trees per $(N,B,\alpha)$ combination with $N \in \{4,5,6,7,8,9,10, 20, 50, 100, 200\}$, all values of $B \in [0,N/2]$, $\alpha \in \{ 1, 6, 11, 16,21, 50, 200\}$. In Figure~\ref{fig:cherry_pitchfork}(a), we see the number of cherries$/N$ is concentrated around $1/3$. In Figure~\ref{fig:cherry_pitchfork}(b), the number of pitchforks$/N$ is concentrated around $1/6$. For the second simulation, we generated 500 trees per $(N,B,\alpha)$ combination with $\alpha \in \{1,5, 25\}$ and $N=10, B=4$ or $N=100, B=10$. Then for each $N$, we apply MDS using distances on the F-matrices of these trees \cite{Kim2020}. In Figure~\ref{fig:mds_plot}, we see that there is no clustering by alpha for either $N=10$ or $N=100$. 

\begin{figure}[H]
    \centering
    \includegraphics[width=0.95\textwidth]{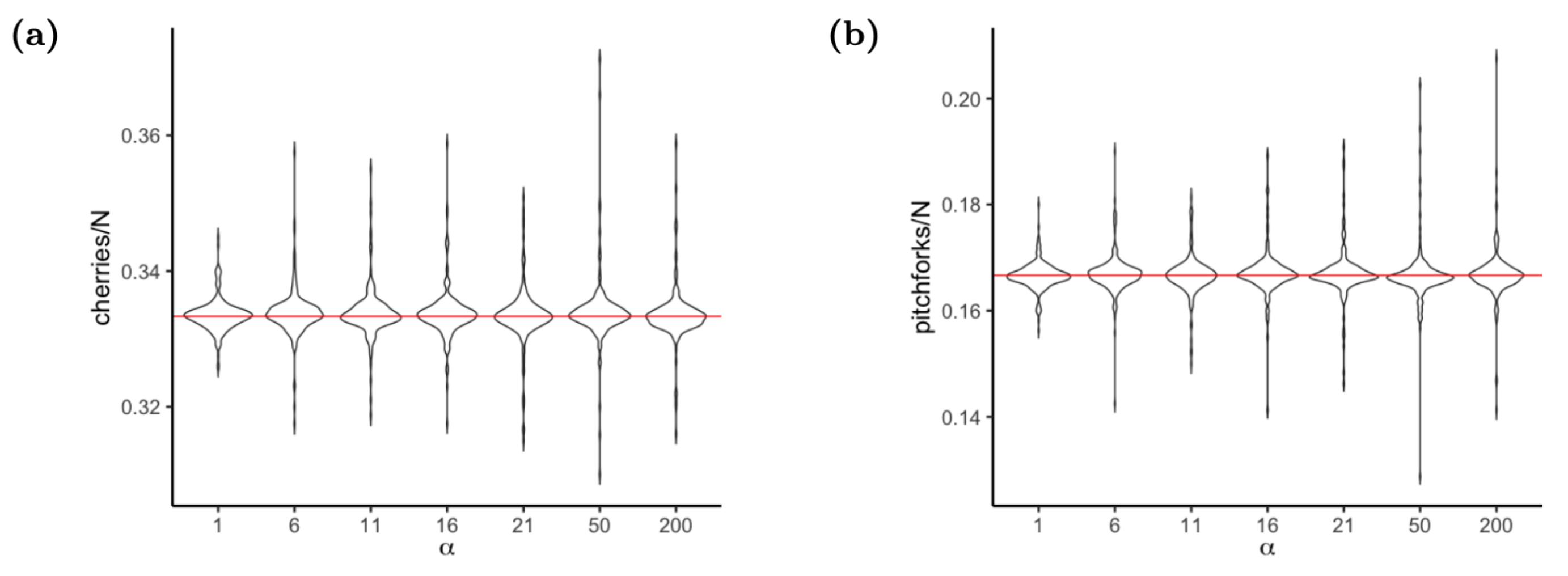}
    \caption{\textbf{Violin plots of the simulated number of cherries and pitchforks.} For each $\alpha$, the values are concentrated around the theoretical value. In (a), the red line is 1/3, which is the expected value of \# cherries$/N$. In (b), the red line is 1/6, which is the expected value of \# pitchforks$/N$.}
    \label{fig:cherry_pitchfork}
\end{figure}

\begin{figure}[H]
    \centering
    \includegraphics[width=0.95\textwidth]{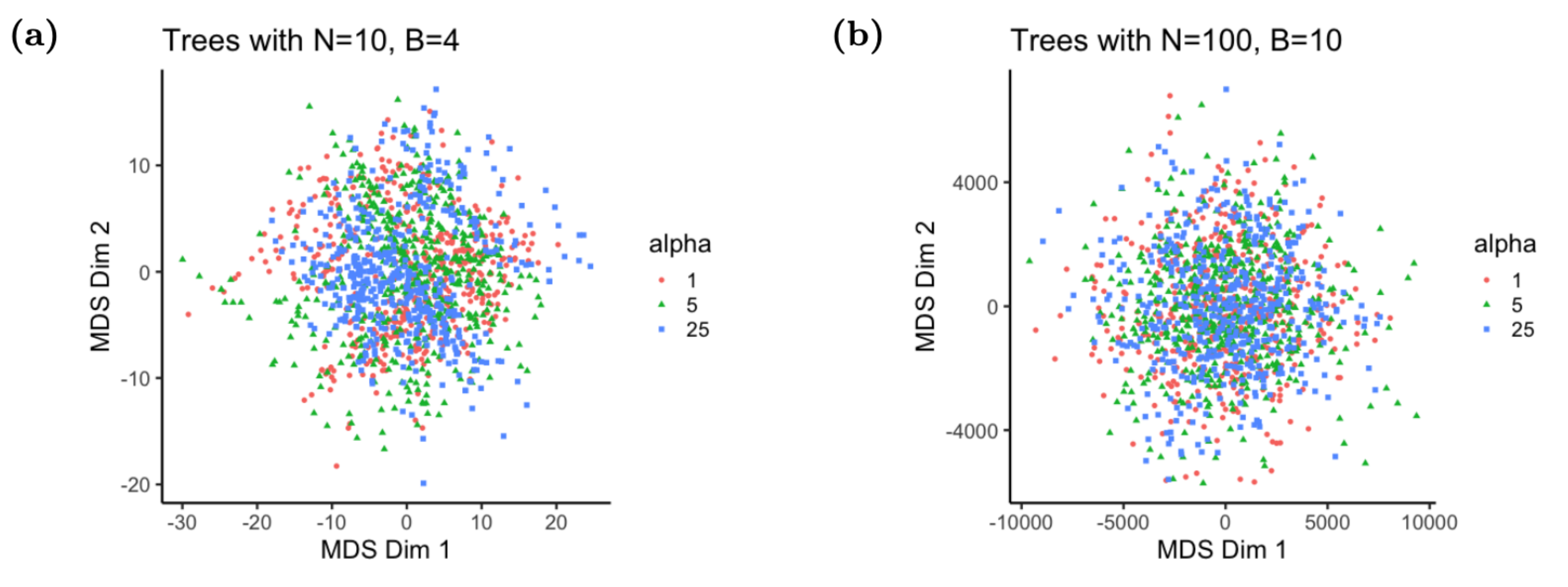}
    \caption{Plots of the first 2 MDS axes for (a) $N=10, B=4$ and (b) $N=100, B=10$}
    \label{fig:mds_plot}
\end{figure}

\newpage 

\subsection{Validity of p-values}\label{subsec:lemma_conv}
\begin{lemma}\label{lemma:indicator_conv}
If $X_n \xrightarrow{p} X$ and $c$ is a fixed constant, then $\mathds{1}(X_n \geq c) \xrightarrow{p} \mathds{1}(X\geq c)$. 
\end{lemma}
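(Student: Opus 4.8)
The plan is to argue directly rather than to invoke the Continuous Mapping Theorem, since the map $x \mapsto \mathds{1}(x \geq c)$ fails to be continuous precisely at the point $x = c$, which is exactly where the difficulty lies. Because both $\mathds{1}(X_n \geq c)$ and $\mathds{1}(X \geq c)$ take values in $\{0,1\}$, convergence in probability reduces to showing that the probability that the two indicators disagree tends to zero, i.e. $\mathbb{P}\big(\mathds{1}(X_n \geq c) \neq \mathds{1}(X \geq c)\big) \to 0$.

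The key step is a geometric inclusion that separates the two sources of disagreement: the fluctuation of $X_n - X$, and the proximity of $X$ to the threshold $c$. For any $\eta > 0$ I would establish
\[
\{\mathds{1}(X_n \geq c) \neq \mathds{1}(X \geq c)\} \subseteq \{|X_n - X| > \eta\} \cup \{|X - c| \leq \eta\}.
\]
This follows from its contrapositive by a short case analysis: if simultaneously $|X_n - X| \leq \eta$ and $|X - c| > \eta$, then either $X > c + \eta$, which forces $X_n \geq X - \eta > c$ and hence both indicators equal $1$, or $X < c - \eta$, which forces $X_n \leq X + \eta < c$ and hence both indicators equal $0$; in either case they agree.

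Taking probabilities in the inclusion gives
\[
\mathbb{P}\big(\mathds{1}(X_n \geq c) \neq \mathds{1}(X \geq c)\big) \leq \mathbb{P}(|X_n - X| > \eta) + \mathbb{P}(|X - c| \leq \eta).
\]
For each fixed $\eta$, the first term vanishes as $n \to \infty$ by the hypothesis $X_n \xrightarrow{p} X$. The second term is independent of $n$, and by continuity of measure along the nested events $\{|X - c| \leq \eta\}$ it decreases to $\mathbb{P}(X = c)$ as $\eta \downarrow 0$. The main obstacle is therefore this atom: the argument closes only when $c$ is a continuity point of the law of $X$, i.e. $\mathbb{P}(X = c) = 0$, in which case one first chooses $\eta$ small enough to make the second term below $\delta/2$ and then $n$ large enough to make the first term below $\delta/2$. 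In the intended application this is the benign case, since Lemma~\ref{lemma:indicator_conv} is invoked with $X = \mu_i - \mu_0$ and threshold $c = 0$, so it suffices that the limiting statistics produce no exact tie $\mu_i = \mu_0$ with positive probability.
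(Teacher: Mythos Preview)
Your approach mirrors the paper's: both argue directly by writing the disagreement event as $\{X_n\geq c>X\}\cup\{X\geq c>X_n\}$ and bounding its probability using $X_n\xrightarrow{p}X$. The paper then passes in one line to $\mathbb{P}(X_n-X\geq\epsilon_c)+\mathbb{P}(X-X_n\geq\epsilon_c)$ ``for some $\epsilon_c>0$'' and lets $n\to\infty$, without isolating the behaviour near $X=c$.

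You are in fact more careful than the paper. The lemma as written is false without the extra hypothesis $\mathbb{P}(X=c)=0$: take $X\equiv c$ and $X_n=c-1/n$, so that $\mathds{1}(X_n\geq c)\equiv 0$ while $\mathds{1}(X\geq c)\equiv 1$. Your decomposition exposes the residual term $\mathbb{P}(|X-c|\leq\eta)\downarrow\mathbb{P}(X=c)$ explicitly, whereas the paper's proof hides exactly this issue in the unjustified existence of a single $\epsilon_c$. One caution on your closing remark: in the intended application $\mu_i$ and $\mu_0$ are functions of i.i.d.\ draws from a \emph{finite} set of partially labeled trees, so ties $\mu_i=\mu_0$ occur with positive probability and the no-atom assumption is not actually satisfied there either.
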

\begin{proof}
We only need to consider $\epsilon\in [0,1]$ because we are working with indicators. 
\begin{align*}
    \mathbb{P}( | \mathds{1}(X_n\geq c) - \mathds{1}(X\geq c) | \geq \epsilon ) &= \mathbb{P}\Big ( \mathds{1} (\{ X_n \geq c > X \} \cup \{ X \geq c > X_n\}) \geq \epsilon \Big ) \\
    &=  \mathbb{P}\Big ( \{ X_n \geq c > X \} \cup \{ X \geq c > X_n\}\Big )\\
    &\leq \mathbb{P}(  X_n -X \geq \epsilon_c) + \mathbb{P}( X-X_{n} \geq \epsilon_c)  \; \text{ for some $\epsilon_c>0$} \\
    &\to 0
\end{align*}
Therefore, we have convergence in probability.
\end{proof}

\subsection{Power Analyses on Specific Trees}
The following figures show the power approximations calculated on five fixed ranked tree shapes detailed in Section~\ref{subsec:power_sim}. 

\begin{figure}[H]
  \begin{minipage}{0.45\linewidth}
    \centering
    \includegraphics[width=\linewidth]{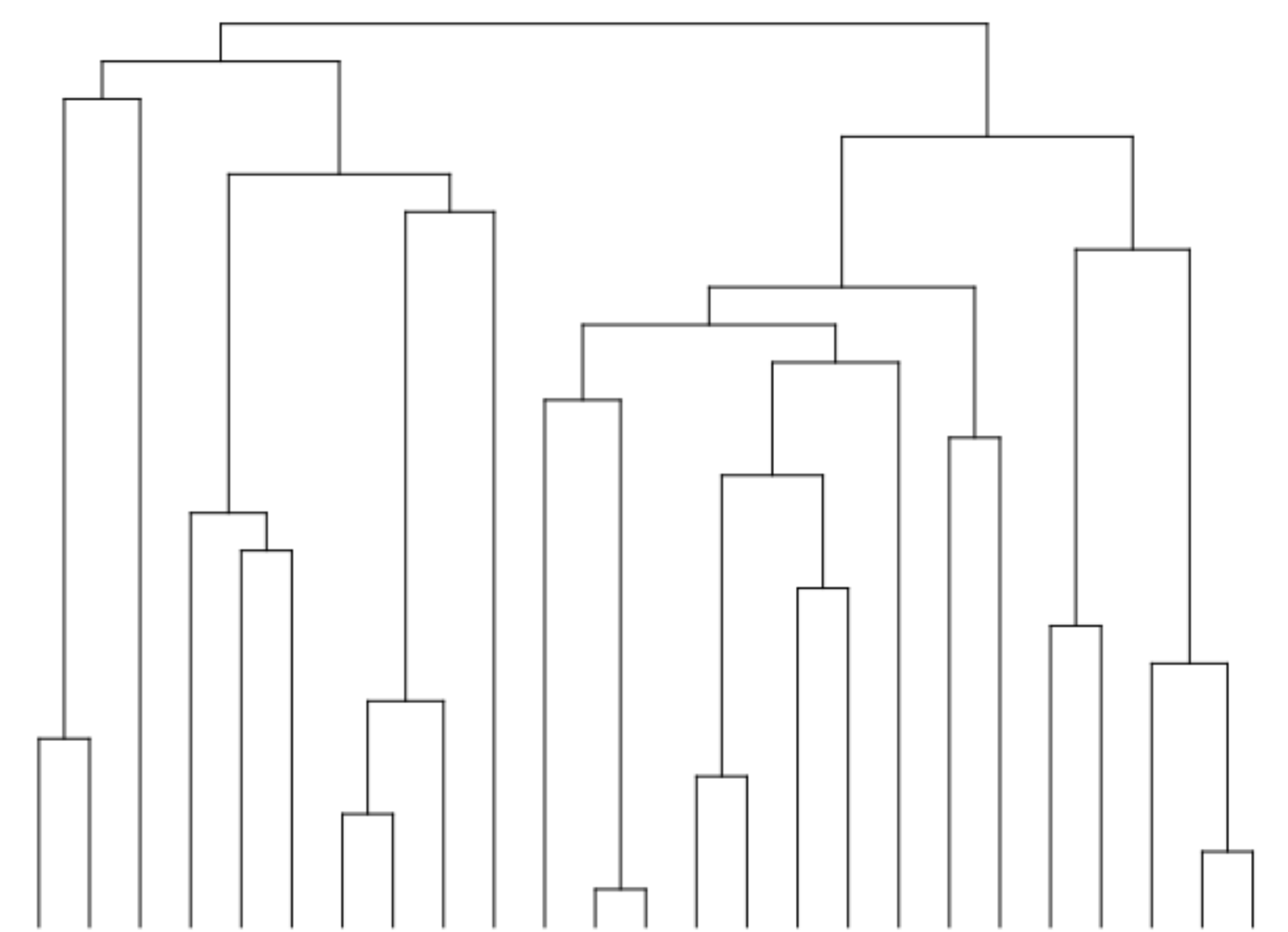}
  \end{minipage}%
  \begin{minipage}{0.6\linewidth}
    \centering
    \begin{tabular}{|r| r|rrrrr|}
    \multicolumn{7}{c}{Probability of rejecting $\alpha=1$} \\
  \hline
 $B/N$ & $\alpha_0$ & $S$ & $\hat{\mu}$ & AI & PS & treeSeg \\ 
  \hline
   \multirow{5}{*}{0.1} & 2 & 0.406 & 0.19 & 0.046 & 0.152 & 0  \\ 
   & 5 & 0.666 & 0.543 & 0.056 & 0.212 & 0  \\ 
    &10 & 0.765 & 0.842 & 0.054 & 0.104 & 0 \\ 
    &20 &  0.873 & 0.988 & 0.052 & 0.184 & 0\\ 
   \hline%
\multirow{5}{*}{0.25} & 2 & 0.165 & 0.156 & 0.05 & 0.182 & 0.008 \\ 
   & 5 & 0.338 & 0.419 & 0.068 & 0.202 & 0.006 \\ 
    &10 & 0.778 & 0.922 & 0.315 & 0.561 & 0.032  \\ 
    &20 & 0.732 & 0.83 & 0.126 & 0.421 & 0 \\ 
   \hline%
   \multirow{5}{*}{0.4} & 2 & 0.247 & 0.196 & 0.068 & 0.12 & 0.002 \\ 
   & 5 & 0.408 & 0.351 & 0.08 & 0.11 & 0 \\ 
    &10 &  0.686 & 0.677 & 0.14 & 0.16 & 0.008 \\ 
    &20 & 0.736 & 0.78 & 0.076 & 0.16 & 0.004  \\ 
   \hline%
   \multirow{5}{*}{0.5} & 2 &  0.221 & 0.23 & 0.098 & 0.086 & 0 \\ 
   & 5 &  0.337 & 0.377 & 0.062 & 0.064 & 0.004 \\ 
    &10 & 0.657 & 0.796 & 0.132 & 0.09 & 0.022\\ 
    &20 & 0.522 & 0.607 & 0.08 & 0.062 & 0.002 \\ 
    \hline
\end{tabular}

\end{minipage}
\caption{Power Calculation for testing $H_0:\alpha=1$ vs $H_1:\alpha=\alpha_0$ for the random tree shape with $N=25$ and various values of $B$. }
\label{fig:tree_25_random_power}
\end{figure}

\begin{figure}[H]
  \begin{minipage}{0.45\linewidth}
    \centering
    \includegraphics[width=\linewidth]{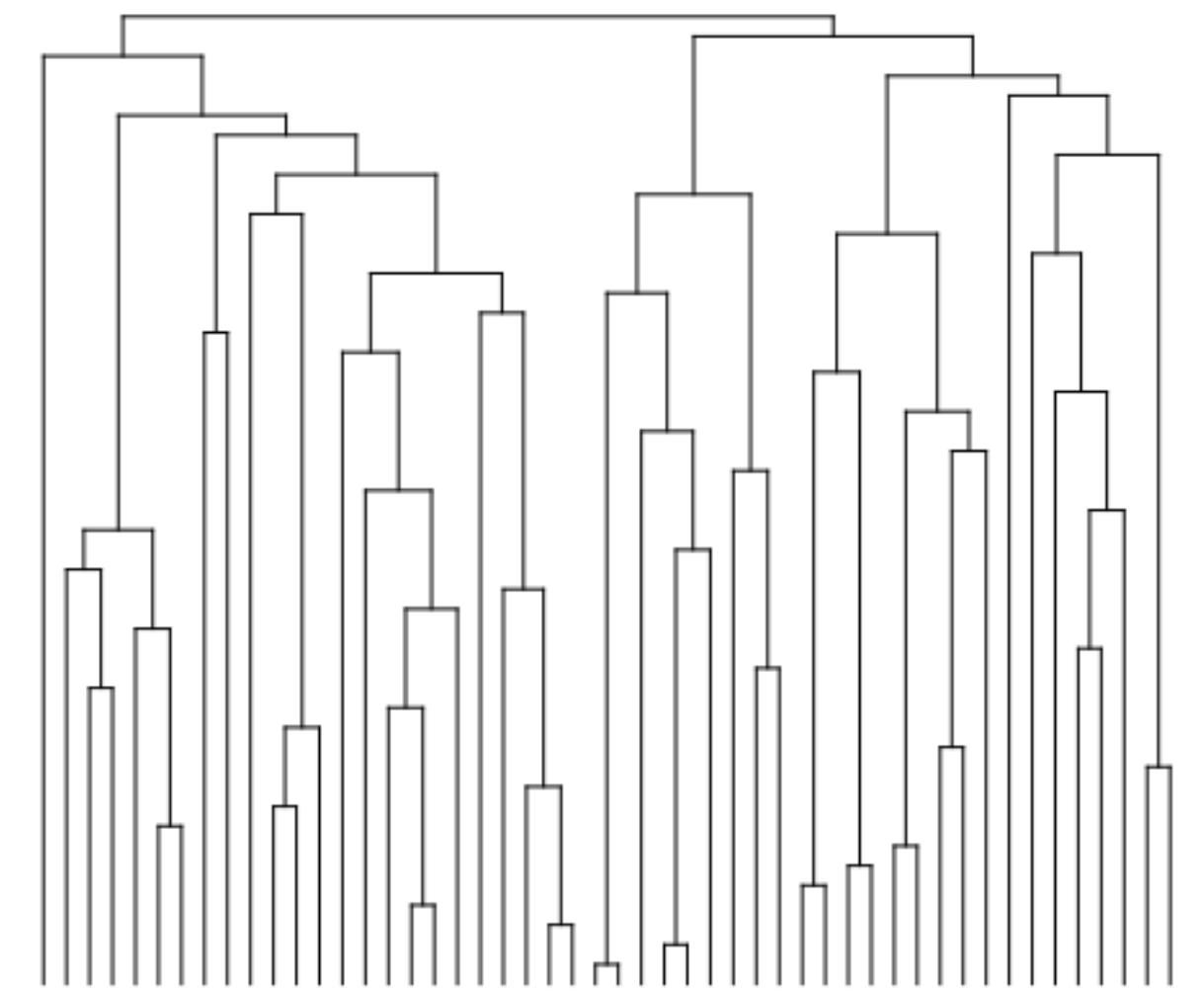}
  \end{minipage}%
  \begin{minipage}{0.6\linewidth}
    \centering
\begin{tabular}{|r| r|rrrrr|}
\multicolumn{7}{c}{Probability of rejecting $\alpha=1$} \\
  \hline
 $B/N$ & $\alpha_0$ & $S$ & $\hat{\mu}$ & AI & PS & treeSeg \\ 
  \hline
   \multirow{5}{*}{0.1} & 2 & 0.201 & 0.046 & 0.03 & 0.124 & 0 \\ 
   & 5 & 0.516 & 0.379 & 0.084 & 0.17 & 0 \\ 
    &10 & 0.736 & 0.858 & 0.11 & 0.291 & 0 \\ 
    &20 & 0.818 & 0.952 & 0.082 & 0.186 & 0\\ 
   \hline%
   \multirow{5}{*}{0.25} & 2 & 0.2 & 0.226 & 0.04 & 0.164 & 0.002  \\ 
    &5 & 0.7 & 0.91 & 0.132 & 0.335 & 0 \\ 
    &10 & 0.829 & 0.994 & 0.156 & 0.279 & 0.004 \\ 
    &20 & 0.929 & 1 & 0.152 & 0.307 & 0.018 \\ 
    \hline 
    \multirow{5}{*}{0.4} & 2 & 0.281 & 0.285 & 0.076 & 0.098 & 0 \\ 
    &5 &  0.635 & 0.735 & 0.076 & 0.098 & 0 \\ 
    &10 & 0.957 & 0.998 & 0.267 & 0.266 & 0.006 \\ 
    &20 & 0.917 & 0.982 & 0.094 & 0.142 & 0.006 \\ 
    \hline 
    \multirow{5}{*}{0.5} & 2 & 0.403 & 0.379 & 0.124 &0.162 & 0 \\ 
    &5 & 0.803 & 0.874 & 0.098 & 0.226 & 0.004 \\ 
    &10 & 0.749 & 0.782 & 0.044 & 0.108 & 0 \\ 
    & 20 & 0.945 & 0.972 & 0.096 & 0.225 & 0.014  \\ 
    \hline
\end{tabular}

\end{minipage}
\caption{Power Calculation for testing $H_0:\alpha=1$ vs $H_1:\alpha=\alpha_0$ for the random tree shape with $N=50$ and various values of $B$. }
\label{fig:tree_50_random_power}
\end{figure}

\begin{figure}[H]
  \begin{minipage}{0.45\linewidth}
    \centering
    \includegraphics[width=\linewidth]{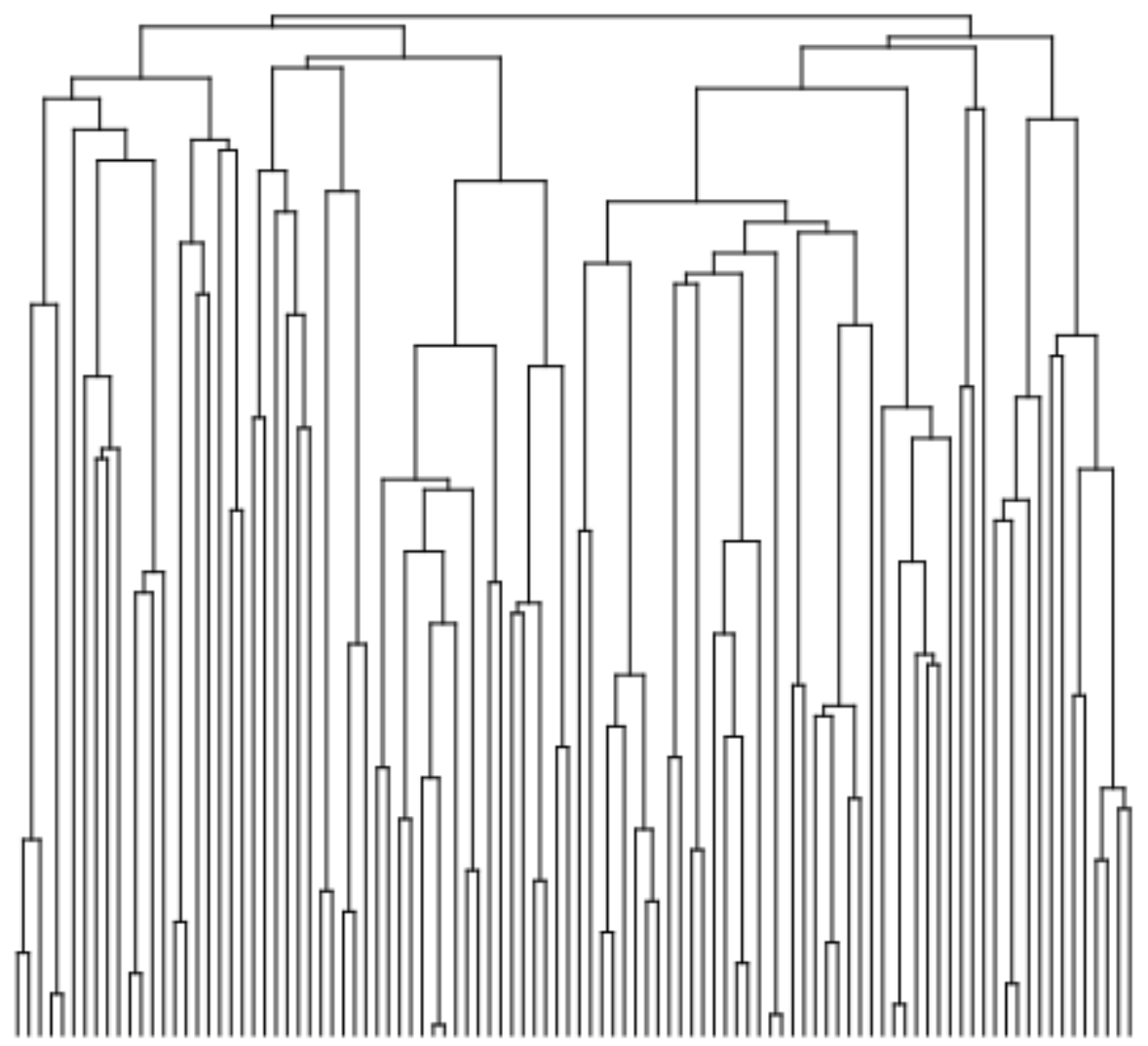}
  \end{minipage}%
  \begin{minipage}{0.6\linewidth}
    \centering
\begin{tabular}{|r| r|rrrrr|}
\multicolumn{7}{c}{Probability of rejecting $\alpha=1$} \\
  \hline
 $B/N$ & $\alpha_0$ & $S$ & $\hat{\mu}$ & AI & PS & treeSeg \\ 
  \hline
   \multirow{5}{*}{0.1} & 2 & 0.315 & 0.391 & 0.074 & 0.451 & 0 \\ 
   & 5 & 0.755 & 0.978 & 0.066 & 0.477 & 0.002 \\ 
    &10 & 0.862 & 1 & 0.066 & 0.419 & 0.002 \\ 
    &20 & 0.955 & 1 & 0.144 & 0.405 & 0 \\ 
   \hline%
   \multirow{5}{*}{0.25} & 2 & 0.518 & 0.661 & 0.058 & 0.214 & 0.002 \\ 
   & 5 & 0.91 & 1 & 0.07 & 0.188 & 0  \\ 
    &10 & 0.953 & 1 & 0.092 & 0.204 & 0.004 \\ 
    &20 & 0.964 & 1 & 0.08 & 0.242 & 0 \\ 
   \hline%
   \multirow{5}{*}{0.4} & 2 & 0.531 & 0.577 & 0.098 & 0.088 & 0.008 \\ 
   & 5 & 0.983 & 1 & 0.202 & 0.228 & 0.004 \\ 
    &10 & 0.931 & 0.998 & 0.06 & 0.09 & 0.006 \\ 
    &20 & 0.937 & 1 & 0.098 & 0.11 & 0.001 \\ 
   \hline%
   \multirow{5}{*}{0.5} & 2 & 0.53 & 0.635 & 0.082 & 0.134 & 0 \\ 
   & 5 & 0.907 & 0.99 & 0.096 & 0.202 & 0  \\ 
    &10 & 0.929 & 0.992 & 0.094 & 0.174 & 0  \\ 
    &20 & 0.93 & 0.998 & 0.072 & 0.14 & 0.002 \\ 
   \hline%
\end{tabular}
\end{minipage}
\caption{Power Calculation for testing $H_0:\alpha=1$ vs $H_1:\alpha=\alpha_0$ for the random tree shape with $N=100$ and various values of $B$. }
\label{fig:tree_100_random_power}
\end{figure}

\begin{figure}[H]
  \begin{minipage}{0.45\linewidth}
    \centering
    \includegraphics[width=\linewidth]{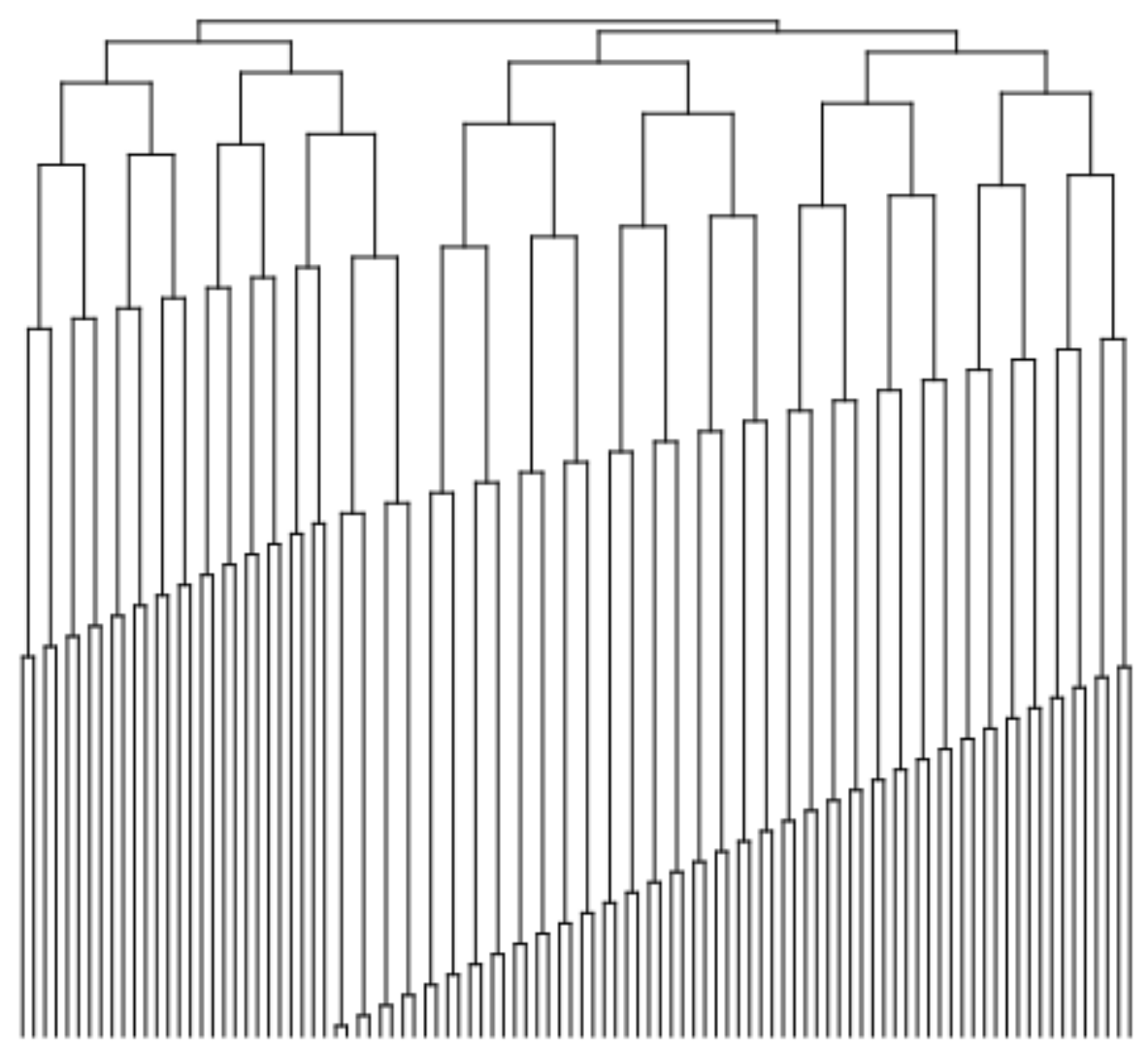}
  \end{minipage}%
  \begin{minipage}{0.6\linewidth}
    \centering
    \begin{tabular}{|r| r|rrrrr|}
    \multicolumn{7}{c}{Probability of rejecting $\alpha=1$} \\
  \hline
  $B/N$ & $\alpha_0$ & $S$ & $\hat{\mu}$ & AI & PS & treeSeg\\ 
  \hline
   \multirow{5}{*}{0.1} & 2 & 0.324 & 0.615 & 0.176 & 0.18 & 0  \\ 
   & 5 &  0.731 & 1 & 0.106 & 0.106 & 0 \\ 
    &10 &  0.884 & 1 & 0.108 & 0.108 & 0 \\ 
    &20 & 0.897 & 1 & 0.126 & 0.126 & 0.002 \\ 
   \hline%
\multirow{5}{*}{0.25} & 2 & 0.322 & 0.491 & 0.082 & 0.056 & 0.002  \\ 
   & 5 & 0.804 & 0.982 & 0.1 & 0.068 & 0.002  \\ 
    &10 & 0.97 & 1 & 0.222 & 0.17 & 0  \\ 
    &20 & 0.918 & 1 & 0.1 & 0.084 & 0.008 \\ 
   \hline%
   \multirow{5}{*}{0.4} & 2 & 0.321 & 0.445 & 0.058 & 0.144 & 0.006  \\ 
   & 5 &  0.941 & 1 & 0.23 & 0.321 & 0  \\ 
    &10 & 0.936 & 1 & 0.112 & 0.255 & 0.018 \\ 
    &20 & 0.951 & 1 & 0.13 & 0.291 & 0.008 \\ 
   \hline%
   \multirow{5}{*}{0.5} & 2 & 0.398 & 0.531 & 0.036 & 0.056 & 0.006 \\ 
   & 5 & 0.878 & 0.99 & 0.064 & 0.092 & 0.014  \\ 
    &10 & 0.959 & 1 & 0.116 & 0.14 & 0 \\ 
    &20 &  0.893 & 0.994 & 0.042 & 0.064 & 0.004 \\ 
   \hline%
\end{tabular}
\end{minipage}
\caption{Power Calculation for testing $H_0:\alpha=1$ vs $H_1:\alpha=\alpha_0$ for the most-balanced tree shape with $N=100$ and various values of $B$. }
\label{fig:tree_100_bal_power}
\end{figure}

\begin{figure}[H]
  \begin{minipage}{0.45\linewidth}
    \centering
    \includegraphics[width=\linewidth]{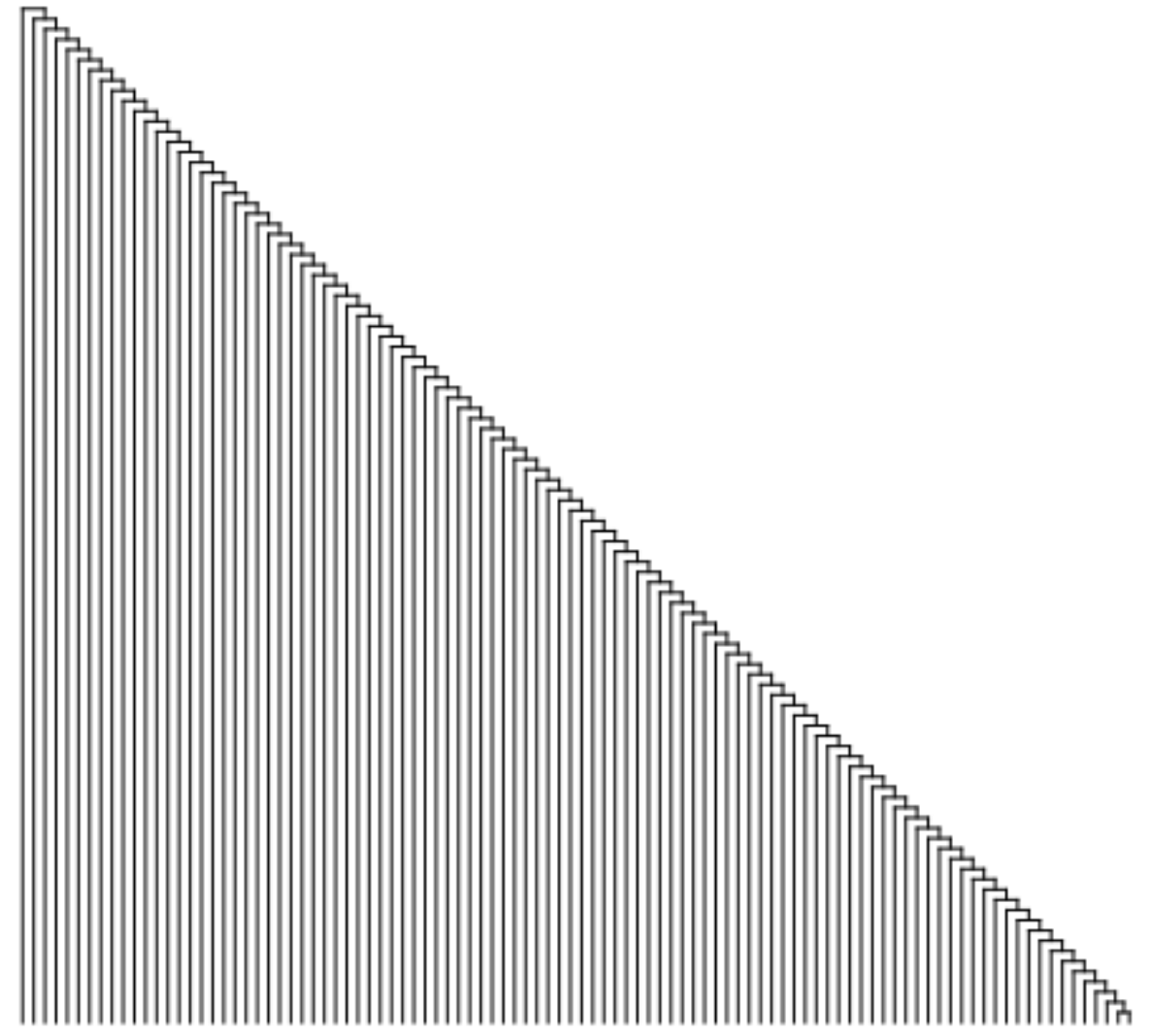}
  \end{minipage}%
  \begin{minipage}{0.6\linewidth}
    \centering    
    \begin{tabular}{|r| r|rrrrr|}
    \multicolumn{7}{c}{Probability of rejecting $\alpha=1$} \\
  \hline
  
 $B/N$ & $\alpha_0$ & $S$ & $\hat{\mu}$ & AI & PS & treeSeg \\ 
  \hline
   \multirow{5}{*}{0.1} & 2 & 0.032 & 0.004 & 0.048 & 0.08 & 0 \\ 
   & 5 & 0.062 & 0.09 & 0.036 & 0.08 & 0 \\ 
    &10 & 0.141 & 0.202 & 0.056 & 0.098 & 0 \\ 
    &20 & 0.277 & 0.331 & 0.088 & 0.066 & 0.002 \\ 
   \hline%
    \multirow{5}{*}{0.25} & 2 & 0.079 & 0.016 & 0.072 & 0.052 & 0.002 \\ 
   & 5 &  0.308 & 0.156 & 0.084 & 0.038 & 0  \\ 
    &10 & 0.495 & 0.327 & 0.08 & 0.038 & 0\\ 
    &20 & 0.584 & 0.421 & 0.062 & 0.058 & 0.006  \\ 
   \hline%
   \multirow{5}{*}{0.4} & 2 & 0.502 & 0.439 & 0.1 & 0.084 & 0.004  \\ 
   & 5 &  0.891 & 0.926 & 0.108 & 0.086 & 0 \\ 
    &10 & 0.944 & 0.978 & 0.094 & 0.068 & 0.008 \\ 
    &20 & 0.951 & 0.992 & 0.078 & 0.078 & 0.004 \\ 
   \hline%
   \multirow{5}{*}{0.5} & 2 & 0.958 & 0.984 & 0.098 & 0.064 & 0.002 \\ 
   & 5 &  0.995 & 1 & 0.076 & 0.062 & 0.012 \\ 
    &10 & 0.995 & 1 & 0.068 & 0.044 & 0 \\ 
    &20 & 0.996 & 1 & 0.08 & 0.044 & 0.002 \\ 
   \hline%
\end{tabular}
\end{minipage}
\caption{Power Calculation for testing $H_0:\alpha=1$ vs $H_1:\alpha=\alpha_0$ for the most-unbalanced tree shape with $N=100$ and various values of $B$.}
\label{fig:tree_100_unb_power}
\end{figure}

\newpage 

\subsection{DNA data simulation to compare BaTS and Posterior p-values}\label{subsec:dna} 
We present the details of the simulation described in Section~\ref{subsec:bayesian_sim}. We simulated two partially labeled ranked tree shapes with $N=50$ and $B=20$ from the CRPTree model with $\alpha=1$ and $\alpha=10$ respectively. We used the R-package phylodyn to simulate the branch lengths of the phylogenies according to the coalescent with exponentially growing effective population size \citep{karcher2017phylodyn}. We then used seqgen to simulate the $50$ molecular sequences of 100 nucleotides at the tips of each phylogeny according to the Jukes Cantor mutation model \citep{JukesCantor1969,rambaut1997seq}.\\

To estimate the two posterior phylogenetic distributions we used BEAST assuming the Jukes-Cantor mutation model with fixed mutation rate, a coalescent prior on the phylogenies, and a Gaussian Markov random field prior on $N_{e}(t)$ \citep{minin_smooth_2008}. We generated 100 billion iterations and thinned every 100 thousand iterations to obtain a posterior sample of 1000 ranked and partially labeled trees. \\

To compute the 1000 $p$-values with our method for each analysis, we used 500 label and planar permutations per tree. For BaTS analyses, we generated 500 samples by permuting the labels and compared the distribution of the posterior median statistics with the observed median statistic. The posterior distribution of p-values $p_T$ is shown in Figure~\ref{fig:bayesian_simulation}. For $\alpha=1$, the median value of $\hat{\mu}=20.31$ with a 95\% credible interval of $[19, 21.838 ]$. Using BaTS the 95\% credible interval for the posterior median is $[24.042, 24.582 ]$. For $\alpha=10$, the median value of $\hat{\mu}=39.36$ with a 95\% credible interval of $[39, 39.92 ]$. For BaTS, the 95\% credible interval for the posterior median is $[23.17, 25.54 ]$. It is clear that our method correctly rejects the case where $\alpha=10$, and fails to reject the case of $\alpha=1$, while BaTS would reject in both cases. 

\end{document}